\documentclass[11pt]{article}
\usepackage{waingarten}
\usepackage{thmtools}
\usepackage{thm-restate}
\usepackage{environ}

\providecommand{\email}[1]{\href{mailto:#1}{\nolinkurl{#1}}}

\usepackage{caption}
\captionsetup{width=0.9\textwidth, labelfont=bf, parskip=5pt}

\title{Finding monotone patterns in sublinear time}
\author{
	Omri Ben-Eliezer\thanks{Tel-Aviv University, email: \email{omrib@mail.tau.ac.il}} 
	\and Cl\'{e}ment L. Canonne\thanks{Stanford University, email: \email{ccanonne@cs.stanford.edu}} 
	\and Shoham Letzter\thanks{ETH Institute for Theoretical Studies, ETH Zurich, email: \email{shoham.letzter@eth-its.ethz.ch}} 
	\and Erik Waingarten\thanks{Columbia University, email: \email{eaw@cs.columbia.edu}}
}
\begin{document}
\maketitle
\begin{abstract}

We study the problem of finding monotone subsequences in an array from the viewpoint of
sublinear algorithms. For fixed $k \in \mathbb{N}$ and $\eps > 0$, we show that the non-adaptive query
complexity of finding a length-$k$ monotone subsequence of $f \colon [n] \to \mathbb{R}$, assuming that $f$ is $\eps$-far from free of such subsequences, is $\Theta((\log n)^{\lfloor \log_2 k \rfloor})$. Prior to our work, the best algorithm for this problem, due to Newman, Rabinovich, Rajendraprasad, and Sohler (2017), made $(\log n)^{O(k^2)}$ non-adaptive queries; and the only 
lower bound known, of $\Omega(\log n)$ queries for the case $k = 2$,
followed from that on testing monotonicity due to Erg\"un, Kannan, Kumar, Rubinfeld, and Viswanathan (2000) and Fischer (2004). 

 \end{abstract}
\thispagestyle{empty}

\newpage
\thispagestyle{empty}
\tableofcontents
\thispagestyle{empty}
\newpage

\newcommand{\dprof}{\textsf{dist-prof}}
\newcommand{\bprof}{\textsf{bin-prof}}
\newcommand{\gap}{\textsf{gap}}
\newcommand{\med}{\mathrm{med}}
\newcommand{\median}{\textsf{median}}

\newcommand{\GreedyDisjointTuples}{\texttt{GreedyDisjointTuples}}
\newcommand{\sfI}{\mathsf{I}}
\newcommand{\Event}{\ensuremath{\mathcal{E}}}
\newcommand{\boldF}{\mathbf{F}}
\newcommand{\val}{\texttt{val}}
\newcommand{\first}{\texttt{first}}
\newcommand{\seg}{\texttt{seg}}
\newcommand{\len}{\texttt{len}}

\setcounter{page}{1}

\section{Introduction}

For a fixed integer $k \in \N$ and a function (or sequence) $f \colon [n] \to \R$, a \emph{length-$k$ monotone subsequence of $f$} is a tuple of $k$ indices, $(i_1, \dots, i_k) \in [n]^k$, such that $i_1 < \dots < i_k$ and $f(i_1) < \dots < f(i_k)$. More generally, for a permutation $\pi \colon [k] \to [k]$, a \emph{$\pi$-pattern of $f$} is given by a tuple of $k$ indices $i_1 < \dots < i_k$ such that $f(i_{j_1}) < f(i_{j_2})$ whenever $j_1, j_2 \in [k]$ satisfy $\pi(j_1) < \pi(j_2)$. A sequence $f$ is $\pi$-free if there are no subsequences of $f$ with order pattern $\pi$. Recently, Newman, Rabinovich, Rajendraprasad, and Sohler~\cite{NRRS17} initiated the study of property testing for forbidden order patterns in a sequence. Their paper was the first to analyze algorithms for finding $\pi$-patterns in sublinear time (for various classes of the permutation $\pi$); additional algorithms and lower bounds for several classes of permutations have later been obtained by Ben-Eliezer and Canonne~\cite{BC18}. 

Of particular interest of $\pi$-freeness testing is the case where $\pi = (12\dots k)$, i.e., $\pi$ is a monotone permutation. In this case, avoiding length-$k$ monotone subsequence may be equivalently rephrased as being decomposable into $k-1$ monotone non-increasing subsequences. Specifically, a function $f \colon [n] \to \R$ is $(12\dots k)$-free if and only if $[n]$ can be partitioned into $k-1$ disjoint sets $A_1, \ldots, A_{k-1}$ such that, for each $i\in[k-1]$, the restriction $f|_{A_i}$ is non-increasing. When interested in algorithms for testing $(12\dots k)$-freeness that have a \emph{one-sided error},\footnote{An algorithm for testing property $\calP$ is said to have \emph{one-sided error} if the algorithm always outputs ``yes'' if $f \in \calP$, i.e., has perfect completeness.} the algorithmic task becomes the following:
\begin{quote}\itshape
	For $k \in \N$ and $\eps > 0$, design a randomized algorithm that, given query access to a function $f \colon [n] \to \R$ guaranteed to be $\eps$-far from being $(12\dots k)$-free,\footnote{A function $f \colon [n] \to \R$ is \emph{$\eps$-far} from $\pi$-free if any $\pi$-free function $g \colon [n] \to \R$ satisfies $\Pr_{\bi \sim [n]}[f(\bi) \neq g(\bi)]\geq \eps$.} outputs a length-$k$ monotone subsequence of $f$ with probability at least $9/10$. 
\end{quote}
The task above is a natural generalization of monotonicity testing of a function $f \colon [n] \to \R$ with algorithms that make a one-sided error, 
 a question which dates back to the early works in property testing, and has received significant attention since in various settings (see, e.g.,~\cite{DGLRRS99,GGLRS00,FLNRRS02,AMW13,BRY14b,Bel18,PRV18, BE19}, and the recent textbook \cite{G17}). For the problem of testing monotonicity, Erg\"{u}n, Kannan, Kumar, Rubinfeld, and Viswanathan \cite{EKKRV00} were the first to give a non-adaptive algorithm which tests monotonicity of functions $f\colon [n] \to \R$ with one-sided error making $O(\log(n) / \eps)$ queries. (Recall that an algorithm is \emph{non-adaptive} if its queries do not depend on the answers to previous queries, or, equivalently, if all queries to the function can be made in parallel.) Furthermore, they showed that $\Omega(\log n)$ queries are necessary for non-adaptive algorithms. Subsequently, Fischer \cite{F04} showed that $\Omega(\log n)$ queries are necessary even for adaptive algorithms. Generalizing from monotonicity testing (when $k=2$), 
 Newman et al.\ gave in~\cite{NRRS17} the first sublinear-time algorithm for $(12\dots k)$-freeness testing, whose query complexity is $(\log(n)/\eps)^{O(k^2)}$. Their algorithm is non-adaptive and has one-sided error; as such, it outputs a length-$k$ monotone subsequence with probability at least $9/10$ assuming the function $f$ is $\eps$-far from $(12\dots k)$-free. However, other than the aforementioned lower bound of $\Omega(\log n)$ which follows from the case $k=2$, no lower bounds were known for larger~$k$. 

The main contribution of this work is to settle the dependence on $n$ in the query complexity of testing for $(12\dots k)$-freeness with non-adaptive algorithms making one-sided error. Equivalently, we settle the complexity of non-adaptively finding a length-$k$ monotone subsequence under the promise that the function $f \colon [n] \to \R$ is $\eps$-far from $(12\dots k)$-free.
\begin{theorem}\label{thm:intro-ub}
	Let $k \in \N$ be a fixed parameter. For any $\eps > 0$, there exists an algorithm that, given query access to a function $f \colon [n] \to \R$ which is $\eps$-far from $(12\dots k)$-free, outputs a length-$k$ monotone subsequence of $f$ with probability at least $9/10$. The algorithm is non-adaptive and makes $(\log n)^{\lfloor \log_2 k \rfloor} \cdot \poly(1/\eps)$ queries to $f$.
\end{theorem}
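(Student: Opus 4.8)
The plan is to build a recursive, non-adaptive sampling scheme and to establish its correctness via a structural theorem about sequences that are $\eps$-far from $(12\dots k)$-free. The recursion splits the pattern $(12\dots k)$ into a \emph{lower-left block} and an \emph{upper-right block}, \emph{each of length} $\lfloor k/2\rfloor$, plus a single middle \emph{pivot} coordinate when $k$ is odd; together these account for all $k$ coordinates of the pattern. The search routine takes a window $W\subseteq[n]$, a target length $k$, and a value interval $(a,b)$, and looks for a length-$k$ monotone subsequence of $f$ inside $W$ with all values in $(a,b)$: it iterates over the $O(\log|W|)$ dyadic scales $2^t$; for each it samples $\poly(1/\eps)$ pivot indices $p\in W$, around each of which it carves a left window $W_L=[\,p-\Theta(2^t),p\,)$ and a right window $W_R=(\,p,p+\Theta(2^t)\,]$ (intersected with $W$); it recursively invokes itself on $(W_L,\lfloor k/2\rfloor,(a,f(p)))$ and on $(W_R,\lfloor k/2\rfloor,(f(p),b))$; and it concatenates any left pattern, $p$ itself if $k$ is odd, and any right pattern into a candidate. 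Because the indices queried by the routine depend only on $W$, $k$, and its random bits --- not on $f$'s answers and not on $(a,b)$, which is used only to filter candidates after all queries have been made --- the overall scheme is non-adaptive.

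The query cost obeys $T(k)=O(\log n)\cdot\poly(1/\eps)\cdot T(\lfloor k/2\rfloor)$ with $T(1)=\poly(1/\eps)$, so that $T(k)=(\log n)^{\lfloor\log_2 k\rfloor}\cdot\poly(1/\eps)$ once one uses that $k$ is constant. The reason the exponent is exactly $\lfloor\log_2 k\rfloor$ --- and not $\lceil\log_2 k\rceil$ (which an uneven $\lfloor k/2\rfloor,\lceil k/2\rceil$ split without a pivot would give) or $k-1$ --- is that a \emph{single} scale-search splits off \emph{both} blocks at once and the two blocks have \emph{equal} length $\lfloor k/2\rfloor$ (the pivot, for odd $k$, coming out of that same scale-search at no extra cost), so the recursion on $k$ is an honest ``$\lfloor\cdot/2\rfloor$'' that terminates after exactly $\lfloor\log_2 k\rfloor$ levels, each contributing one $O(\log n)$ factor.

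For the search to succeed with constant probability, $\eps$-farness of $f$ must guarantee not merely that length-$k$ monotone subsequences exist, but that $\Omega(\poly(\eps)\cdot n)$ of them exist, so arranged that at some scale $t^\ast$ a $\poly(\eps)$ fraction of pivots $p$ are \emph{good} --- i.e., $W_L(p)$ and $W_R(p)$, with their inherited value intervals, \emph{again} contain many monotone subsequences of length $\lfloor k/2\rfloor$ --- so that the induction on $k$ can proceed. A bare ``$\eps$-far from free'' hypothesis is not strong enough to survive this recursion, so I would carry a quantitative invariant --- a \dprof\ (``distance profile'') recording, per index and per scale, how robustly that index takes part in monotone subsequences consistent with the dyadic partition and the value intervals, together with a \bprof\ that buckets these contributions --- and prove a structural theorem that this profile, restricted to $W_L$ and to $W_R$, certifies the inductive hypothesis for the two sub-instances with only a $\poly(\eps/k)$ multiplicative loss. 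The proof should proceed through: (a) the longest-increasing-subsequence coloring argument, giving at least $\eps n$ indices that are right-endpoints, and symmetrically at least $\eps n$ that are left-endpoints, of length-$k$ monotone subsequences; (b) a pigeonhole over the $O(\log n)$ scales to pin down a dominant scale $t^\ast$; and (c) a greedy extraction of a large family of pairwise-disjoint, scale-$t^\ast$-aligned tuples --- the role of \GreedyDisjointTuples\ --- whose pairwise disjointness converts ``many tuples'' into ``a dense set of good pivots'', which is precisely what is needed to feed the next level.

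The main obstacle is step (c) together with controlling the loss in the \dprof\ invariant: a loss of a constant factor \emph{per scale} would compound to $\eps^{\Theta(\log n)}=n^{-\Theta(1)}$ and be fatal, so the argument must lose only a $\poly(\eps)$ factor over the whole recursion (equivalently, $\poly(\eps/k)$ per level, which is affordable because $k$ is fixed). This seems to force the use of the two-sided (left- and right-endpoint) form of the coloring argument together with a careful averaging, so that the dominant scale $t^\ast$ and a single family of pairwise-disjoint tuples serve simultaneously ``from the left'' and ``from the right''; and it forces care in how each value interval $(a,b)$ is subdivided, so that the blocks returned by the two recursive calls genuinely concatenate into an increasing sequence. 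Reconciling all of this with strict non-adaptivity --- every scale and every pivot committed in advance, all value-constraints enforced only in post-processing --- is the remaining bookkeeping.
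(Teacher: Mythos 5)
There is a genuine gap in the core idea: the recursion assumes you can always \emph{choose} to split the length-$k$ pattern at its midpoint around a value-pivot $p$ satisfying $f(x)<f(p)<f(y)$, but the structure of the input dictates where the split can happen, and no such pivot need exist. Already for $k=2$ your scheme fails on a simple instance: let $f(i)=n/2-i+1$ for $i\le n/2$ and $f(i)=3n/2-i+1$ for $i>n/2$ (two decreasing halves, with the right half taking strictly larger values). This $f$ is $1/2$-far from $(12)$-free, yet every $(12)$-pair $(x,y)$ crosses the midpoint. If the pivot $p$ lands in the left half, every $x<p$ has $f(x)>f(p)$ (left half decreasing), so $W_L$ yields nothing; if $p$ lands in the right half, every $y>p$ has $f(y)<f(p)$, so $W_R$ yields nothing. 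Your routine therefore never outputs a $(12)$-copy. The paper sidesteps exactly this by \emph{not} using $f(\ell)$ as the value threshold: the random index $\ell$ is used only as a \emph{geometric} separator (it ``cuts'' a tuple between $i_c$ and $i_{c+1}$), while the value threshold is taken to be a \emph{median of the cut tuples' $f(i_c)$ values} ($r_{\med}$), which exists by construction. More importantly, the split index $c$ is not $\lfloor k/2\rfloor$; it is the position of the largest gap $i_{c+1}-i_c$ among consecutive indices of the tuples. For functions whose copies always have their big gap near $c=1$ (the hard case flagged in the paper's introduction), forcing a $\lfloor k/2\rfloor/\lfloor k/2\rfloor$ split wrecks the density invariant in one of the two windows by a ratio that can be polynomial in $n$, not $\poly(1/\eps)$.

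Consequently, the reason the exponent is $\lfloor\log_2 k\rfloor$ in the paper is a different mechanism than the ``honest $\lfloor \cdot/2\rfloor$ recursion'' you describe. The paper first proves a dichotomy: either there are many \emph{growing suffixes} (handled directly with $\tilde O(\log n/\eps)$ queries), or the function is represented by a \emph{$k$-tree descriptor} --- a binary tree on $k$ leaves built by repeated, possibly extremely unbalanced, $c$-gap splits, with the crucial property that the split does not lose a $\log n$ factor in density (this is where the greedy \GreedyDisjointTuples\ rematching and the interleaving property $(\star)$ are used). The algorithm's depth is then bounded by a combinatorial observation about the tree: take the \emph{primary index} --- the leaf reached by always descending into the child with more leaves --- and observe that every sibling subtree along that root-to-leaf path has strictly fewer than $k/2$ leaves. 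So a single $O(\log n)$-cost round peels off the entire primary path at once and reduces \emph{all} remaining subtrees to size $<k/2$, and this works no matter how lopsided the actual $c$-gap splits were. That is what yields the $\lfloor \log_2 k\rfloor$ exponent for an adversarially unbalanced tree. Your sketch does correctly anticipate the need for a density-preserving structural theorem and a greedy disjoint-tuple extraction, but the recursion you build on top of it --- a chosen balanced split with a value pivot --- is not something the structure supports.
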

Our algorithm thus significantly improves on the $(\log(n)/\eps)^{O(k^2)}$-query non-adaptive algorithm of \cite{NRRS17}. Furthermore, its dependence on $n$ is optimal; indeed, in the next theorem we prove a matching lower bound for all fixed $k \in \N$.
\begin{theorem}
	\label{thm:intro-lb}
	Let $k \in \N$ be a fixed parameter. There exists a constant $\eps_0 > 0$ such that any non-adaptive algorithm which, given query access to a function $f\colon [n] \to \R$ that is $\eps_0$-far from $(12\dots k)$-free, outputs a length-$k$ monotone subsequence with probability $9/10$, must make $\Omega((\log n)^{\lfloor \log_2 k \rfloor})$ queries. Moreover, one can take $\eps_0 = 1/(4k)$.
\end{theorem}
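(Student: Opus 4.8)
The plan is to prove the lower bound by Yao's minimax principle: it suffices to construct a distribution $\mathcal{D}$ over functions $f\colon[n]\to\R$, each $\eps_0$-far from $(12\dots k)$-free with $\eps_0=1/(4k)$, such that every deterministic non-adaptive algorithm that reads $f$ on a fixed set $Q$ of $q=o\big((\log n)^{\lfloor\log_2 k\rfloor}\big)$ indices outputs a genuine length-$k$ monotone subsequence of $f\sim\mathcal{D}$ with probability less than $9/10$. Such an algorithm is just a map from $f|_Q$ to a $k$-tuple of indices, and for its output to be correct that tuple must be an honest monotone subsequence; I would design $\mathcal{D}$ so that, conditioned on $f|_Q$, the value of $f$ at any index outside $Q$ is too dispersed to be safely placed in a claimed increasing tuple (e.g.\ by adding enough independent per-coordinate randomness). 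A short argument then reduces the task to showing that, for every fixed $Q$ with $|Q|=q$,
\[
  \Pr_{f\sim\mathcal{D}}\big[\,Q\text{ contains the indices of a length-}k\text{ monotone subsequence of }f\,\big]<1/10.
\]

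Write $d=\lfloor\log_2 k\rfloor$. The instances are built recursively, reflecting the ``one extra factor of $\log n$ per doubling of $k$, flat in between'' shape of $d$. The base case $d=1$ is the classical one-sided, non-adaptive monotonicity lower bound of Erg\"un--Kannan--Kumar--Rubinfeld--Viswanathan and Fischer: with $n=2^m$, the function is increasing except for a single descent inside each dyadic block, with all descents placed at a uniformly random scale $\sigma\in[m]$; then a queried pair of indices forms a monotone-violating witness precisely when $\sigma$ equals the depth of their least common ancestor in the binary trie on $[n]$, and since $q$ indices realize at most $q-1$ distinct such depths, the witnessing probability is at most $(q-1)/m$. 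For larger $k$ one composes $d$ such gadgets: a fresh random scale governs how the domain is cut into dyadic blocks, inside the blocks one implants (independent, value-shifted, order-preserving) copies of the next instance down, and the across-block value ordering is arranged so that every length-$k$ monotone subsequence must lie inside a single block and there thread through the nested block structure, decomposing into shorter sub-patterns hosted by the implanted instances, with each of the $d$ levels contributing one \emph{hidden} scale out of $\log n$. For $k$ not a power of two, the surplus $k-2^d$ pattern elements are absorbed by a lower-order sub-instance whose discovery costs no new power of $\log n$, matching the flat stretches of $\lfloor\log_2 k\rfloor$.

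The crux is a lemma, proved by induction on $d$, bounding the witnessing probability of an arbitrary $Q$ by a quantity that drops below $1/10$ once $|Q|=o\big((\log n)^d\big)$. The inductive step conditions on the top-level scale, distributes $Q$ among the dyadic blocks at that scale, applies the induction hypothesis to the restricted query sets inside the blocks, and uses a trie-depth count (as in the base case) to bound, over the random scale, how many block decompositions $Q$ can be aligned with; summing over the $O(|Q|)$ relevant scales closes the recursion. A direct accounting then verifies that every $f$ in the support of $\mathcal{D}$ is $\eps_0$-far from $(12\dots k)$-free with $\eps_0=1/(4k)$: the repair cost roughly halves per level of composition---at the bottom, each descent forces a constant fraction of its block to change---so it stays above $1/(4k)$.

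I expect the main obstacle to be making the recursion contribute \emph{exactly} one factor of $\log n$ per level instead of collapsing. The naive realization, in which a length-$2\ell$ pattern decomposes into two independent length-$\ell$ sub-patterns in independent sub-instances, forces the query set to win two separate sub-searches, each needing $\Theta(|Q|)$ useful queries; the union bound then produces a $|Q|^2$ that a single $\log n$ per level cannot absorb, and the exponent on $\log n$ ends up far short of $d$ (at best of order $d/2$, and bounded independently of $d$ in cruder analyses). The real work is to engineer the composition---and the matching counting lemma---so that this blowup does not occur: the $d$ hidden scale choices must \emph{multiply} rather than add, and the sub-searches must effectively share a single query budget, so that the $|Q|$-exponent stays linear while each level still suppresses the probability by a fresh $\log n$. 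Getting this bookkeeping right, rather than any single estimate, is the heart of the lower bound; the remaining pieces---the Yao reduction, the base-case trie count, the irrelevance of unqueried coordinates, and the $1/(4k)$ farness bound---should be comparatively routine.
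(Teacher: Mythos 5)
Your framework---Yao's principle, a recursive dyadic construction hiding $\lfloor\log_2 k\rfloor$ scale choices, and a lemma counting how many ``profiles'' a query set can capture---is aligned with the paper's, and your $k=2$ base case (a $q$-element query set realizes at most $q-1$ distinct trie depths) is exactly the paper's Lemma~\ref{lem:binary_dist_prof}. But you have correctly diagnosed the crux of the argument and then explicitly left it unresolved, and the construction you actually describe would fall afoul of it. You propose implanting ``independent, value-shifted, order-preserving copies of the next instance down'' inside the dyadic blocks; if the lower-level scales really were drawn independently per block, the union bound over blocks would produce precisely the $|Q|^2$-type blowup that you flag as fatal, and the exponent on $\log n$ would collapse. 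The paper's fix is that the lower-level scales are \emph{not} independent: the hard instance is governed by a single global tuple $(i_1<\cdots<i_h)\in[\log_2 n]^h$, namely $f_{i_1,\ldots,i_h}=f^{\downarrow}\circ F_{i_h}\circ\cdots\circ F_{i_1}$, the decreasing permutation pre-composed with bit-flips at positions $i_1,\ldots,i_h$. Every block at every level thus shares the same hidden bit positions, so a query set that pins them down in one block has pinned them down everywhere; the sub-searches share a single budget automatically.

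The matching counting lemma, $|\bprof_h(Q)|\le|Q|-1$ (Lemma~\ref{lem:tree_prof_lower_bound}), is then proved not by a union bound over blocks but by induction on $h$ via a nested filtration $\emptyset=B_{\log n+1}\subseteq\cdots\subseteq B_1=Q$, where $B_i$ is a maximal subset of $Q$ containing $B_{i+1}$ with no two elements $x,y$ satisfying $M(x,y)<i$. Each captured $h$-profile $(j,i_2,\ldots,i_h)\in\bprof_h(Q)$ is charged, through its $(h-1)$-profile $(i_2,\ldots,i_h)$, to the annulus $B_j\setminus B_{j+1}$, and since the annuli disjointly partition $Q$ the bound telescopes with no multiplicative loss---this is what makes the hidden scale choices multiply rather than add. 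That lemma is the non-routine content of the theorem and is exactly the piece your proposal is missing; without it (and without replacing ``independent'' by ``identical'' in the block composition), the plan does not yield the exponent $\lfloor\log_2 k\rfloor$.
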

\noindent We further note that the lower bound holds even for the more restricted case where $f \colon [n] \to [n]$ is a permutation.

\subsection{Related work}
Testing monotonicity of a function over a partially ordered set $\mathcal{X}$ is a well-studied and fruitful question, with works spanning the past two decades. 
Particular cases include when $\calX$ is the line $[n]$ \cite{EKKRV00,F04,Bel18,PRV18,BE19}, the Boolean hypercube $\{0,1\}^d$~\cite{DGLRRS99,BBM12,BCGM12,CS13,CST14,CDST15,KMS15,BB15,CS16,CWX17, CS18}, and the hypergrid $[n]^d$~\cite{BRY14a,CS14, BCS18}. 
We refer the reader to~\cite[Chapter~4]{G17} for more on monotonicity testing, or for an overview of the field of property testing (as introduced in~\cite{RS96,GGR98}) in general.

This paper is concerned with the related line of work on finding order patterns in sequences and permutations. For the exact case, Guillemot and Marx \cite{GM14} showed that an order pattern $\pi$ of length $k$ can be found in a sequence $f$ of length $n$ in time $2^{O(k^2 \log k)} n$; in particular, the problem of finding order patterns is fixed-parameter tractable (in the parameter $k$). Fox \cite{Fox13} later improved the running time to $2^{O(k^2)} n$. A very recent work of Kozma \cite{Koz19} provides the state-of-the-art for the case where $k = \Omega(\log n)$. 
In the sublinear regime, the most relevant works are the aforementioned papers of Newman et al.~\cite{NRRS17} and Ben-Eliezer and Canonne~\cite{BC18}. In particular, \cite{NRRS17} shows an interesting dichotomy for testing $\pi$-freeness: when $\pi$ is monotone, the non-adaptive query complexity 
is polylogarithmic in $n$ for fixed $k$ and $\eps$, whereas for non-monotone $\pi$, the query complexity is $\Omega(\sqrt{n})$. 

Two related questions are that of estimating the \emph{distance to monotonicity} and the length of the \emph{longest increasing subsequence} (LIS), 
which have also received significant attention from both the sublinear algorithms perspective~\cite{PRR06, ACCL07, SS17}, as well as the streaming perspective~\cite{GJKK07,GG10,SS13,EJ15,NSaks15}.
In particular, Saks and Seshadhri gave in~\cite{SS17} a randomized algorithm which, on input $f \colon [n] \to \R$, makes $\poly(\log n, 1/\delta)$ queries and outputs $\hat{m}$ approximating up to additive error $\delta n$ the length of the longest increasing subsequence of $f$. This paper also studies monotone subsequences of the input function, albeit from a different (and incomparable) end of the problem. Loosely speaking, in \cite{SS17} the main object of interest is a very \emph{long} monotone subsequence (of length linear in $n$), and the task at hand is to get an estimate for its total length, whereas in our setting, there are $\Omega(n)$ disjoint copies of \emph{short} monotone subsequences (of length $k$, which is a constant parameter), and these short subsequences may not necessarily combine to give one long monotone subsequence.
 
\subsection{Our techniques: Upper bound}

We now give a detailed overview of the techniques underlying our upper bound, Theorem~\ref{thm:intro-ub}, and provide some intuition behind the algorithms and notions we introduce. The starting point of our discussion will be the algorithm of Newman et al.~\cite{NRRS17}, which we re-interpret in terms of the language used throughout this paper; this will set up some of the main ideas behind our structural result (stated in Section~\ref{sec:structural}), which will be crucial in the analysis of the algorithm.

For simplicity, let $\eps > 0$ be a small constant and let $k \in \N$ be fixed. Consider a function $f \colon [n] \to \R$ which is $\eps$-far from $(12\dots k)$-free. This implies that there is a set $T \subseteq [n]^k$ of $\eps n/k$ \emph{disjoint} $(12\dots k)$-patterns. Specifically, the set $T$ is comprised of $k$-tuples $(i_1,\dots, i_k) \in [n]$ where $i_1 < \dots < i_k$ and $f(i_1) < \dots < f(i_k)$ and each $i \in [n]$ appears in at most one $k$-tuple in $T$.\footnote{To see why such $T$ exists, take $T$ to be a maximal set of disjoint $(12\dots k)$-patterns. Suppose $|T| < \eps n/k$ and consider the function $g$ given by greedily eliminating all $(12\dots k)$-patterns in $f$, and note that $g$ is $(12\dots k)$-free and differs on $f$ in less than $\eps n$ indices.}
A key observation made in \cite{NRRS17} is that if, for some $c \in [k-1]$, $(i_1, \dots, i_c, i_{c+1}, \dots, i_k)$ and  $(j_1, \dots, j_c, j_{c+1}, \dots, j_k)$ are two $k$-tuples in $T$ which satisfy $i_c < j_{c+1}$ and $f(i_c) < f(j_{c+1})$, then their combination 
\[
    (i_1,\dots, i_c, j_{c+1}, \dots, j_k)
\]
is itself a length-$k$ monotone subsequence of $f$. Therefore, in order to design efficient sampling algorithms, one should analyze to what extent parts of different $(12\dots k)$-tuples from $T$ may be combined to form length-$k$ monotone subsequences of $f$. 

Towards this goal, assign to each $k$-tuple $(i_1, \dots, i_k)$ in $T$ a \emph{distance profile} $\dprof(i_1, \dots, i_k) = (d_1, \dots, d_{k-1}) \in [\eta]^{k-1}$, where $\eta = O(\log n)$.\footnote{We remark that the notion of a distance profile is solely used for the introduction and for explaining \cite{NRRS17}, and thus does not explicitly appear in subsequent sections.} This distance profile is a $(k-1)$-tuple of non-negative integers satisfying
\[ 2^{d_j} \leq i_{j+1} - i_j < 2^{d_j + 1} \qquad\qquad j \in [k-1]\,; \]
and let $\gap(i_1,\dots, i_k) = c \in [k-1]$ be the smallest integer where $d_c \geq d_{j}$ for all $j \in [k-1]$ (i.e., $d_c$ denotes an (approximately) maximum length between two adjacent indices in the $k$-tuple). Suppose, furthermore, that for a particular $c \in [k-1]$, the subset $T_c \subseteq T$ of $k$-tuples whose gap is at $c$ satisfies $|T_c| \geq \eps n / k^2$ (such a $c \in [k-1]$ must exist since the $T_c$'s partition $T$). If $(i_1, \dots, i_k) \in T_c$ and $\dprof(i_1, \dots, i_k) = (d_1, \dots, d_k)$, then the probability that a uniformly random element $\bell$ of $[n]$ ``falls'' into that gap is
\begin{align} 
\Prx_{\bell \sim [n]}\left[ i_{c} \leq \bell \leq i_{c+1}\right] &\geq \frac{2^{d_c}}{n}. \label{eq:prob}
\end{align}
Whenever this occurs for a particular $k$-tuple $(i_1, \dots, i_k)$ and $\bell \in [n]$, we say that $\bell$ \emph{cuts} the tuple $(i_1,\dots, i_k)$. Note that the indices $i_{c+1}, \dots, i_{k}$ are contained within the interval $[\bell, \bell + k \cdot 2^{d_c + 1}]$ and the indices $i_1, \dots, i_c$ are contained within the interval $[\bell - k \cdot 2^{d_c + 1}, \bell]$. As a result, if we denote by $\delta_d(\bell) \in [0,1]$, for each $d \in [\eta]$, the \emph{density} of $k$-tuples from $T_c$ lying inside $[\bell - k \cdot 2^{d + 1}, \bell + k \cdot 2^{d + 1}]$ (i.e., the fraction of this interval comprised of elements of $T_c$), we have
\begin{align}
 \Ex_{\bell \sim [n]} \left[ \sum_{d \in [\eta]} \delta_{d}(\bell) \right] &= \sum_{d \in [\eta]} \sum_{\substack{(i_1,\dots, i_k) \in T_c \\ \dprof(i_1,\dots, i_k)_c =d}} \Prx_{\bell \sim [n]}\left[ i_c \leq \bell \leq i_{c+1}\right] \cdot \frac{1}{2 \cdot k \cdot 2^{d+1}} \gsim \frac{|T_c|}{n} \gsim \eps. \label{eq:expect}
 \end{align}
For any $\ell$ achieving the above inequality, since $\eta = O(\log n)$, there  exists some $d^* \in [\eta]$ such that $\delta_{d^*}(\ell) \gsim \eps / \log n$. Consider now the set of $k$-tuples $T_{c, d^*}(\ell)\subseteq T_c$ contributing to $\delta_{d^*}(\ell)$, i.e., those $k$-tuples in $T_c$ which are cut by $\ell$ and lie in $[\ell - k \cdot 2^{d^*+1}, \ell + k \cdot 2^{d^*+1}]$. Denote $r_{\med} = \median\{ f(i_c) : (i_1, \dots, i_k) \in T_{c, d^*}(\ell)\}$, and let
\begin{align*}
T_{L} &= \left\{ (i_1,\dots, i_c) : (i_1, \dots, i_k) \in T_{c, d^*}(\ell) \text{ and } f(i_c) \leq r_{\med} \right\}, \qquad \text{and}\\
T_{R} &= \left\{ (i_{c+1},\dots, i_k) : (i_1, \dots, i_k) \in T_{c, d^*}(\ell) \text{ and } f(i_c) \geq r_{\med} \right\},
\end{align*}
where we note that $T_L$ and $T_R$ both have size at least $|T_{c, d^*}(\ell)| / 2$. If the algorithm finds a $c$-tuple in $T_{L}$ and a $(k-c)$-tuple in $T_R$, by the observation made in \cite{NRRS17} that was mentioned above, the algorithm could combine the tuples to form a length-$k$ monotone subsequence of $f$. At a high level, one may then recursively apply these considerations on $[\ell - k \cdot 2^{d^*+1}, \ell]$ with $T_L$ and $[\ell, \ell + k \cdot 2^{d^*+1}]$ with $T_R$. A natural algorithm then mimics the above reasoning algorithmically, i.e., samples a parameter $\bell \sim [n]$, and tries to find the unknown parameter $d^* \in [\eta]$ in order to recurse on both the left and right sides; once the tuples have length $1$, the algorithm samples within the interval to find an element of $T_L$ or $T_R$. This is, in essence, what the algorithm from~\cite{NRRS17} does, and this approach leads to a query complexity of $(\log n)^{O(k^2)}$.
In particular, suppose that at each (recursive) iteration, the parameter $c$, corresponding to the gap of tuples in $T$, always equals $1$. Note that this occurs when all $(12\dots k)$-patterns $(i_1,\dots, i_k)$ in $T$ have $\dprof(i_1, \dots, i_k) = (d_1,\dots, d_{k-1})$ with
\begin{align} 
d_1 \geq d_2 \geq \dots \geq d_{k-1}. \label{eq:decreasing}
\end{align}
Then, if $k$ is at $k_0$, a recursive call leads to a set $T_L$ containing $1$-tuples, and $T_{R}$ containing $(k_0 - 1)$-tuples. This only decreases the length of the subsequences needed to be found by $1$ (so there will be $k-1$ recursive calls), while the algorithm pays for guessing the correct value of $d^*$ out of $\Omega(\log n)$ choices, which may decrease the density of monotone $k_0$-subsequences within the interval of the recursive call by a factor as big as $\Omega(\log n)$.\footnote{Initially, the density of $T$ within $[n]$ is $\eps$, and the density of $T_L$ or $T_R$ in $[\ell - k\cdot 2^{d^*+1}, \ell]$ and $[\ell, \ell + k \cdot 2^{d^*+1}]$ is $\eps / \log n$.} As a result, the density of the length-$k_0$ monotone subsequence in the relevant interval could be as low as $\eps / (\log n)^{k_0}$, which means that $(\log n)^{\Omega(k_0)}$ samples will be needed for the $k_0$-th round according to the above analysis, giving a total of $(\log n)^{\Omega(k^2)}$ samples
(as opposed to $O((\log n)^{\lfloor \log_2 k\rfloor})$, which is the correct number, as we prove).\smallskip

In order to overcome the above difficulty, we consider a particular choice of a family $T$ of length-$k$ monotone subsequences given by the ``greedy'' procedure (see Figure~\ref{fig:greedy}). Loosely speaking, this procedure begins with $T = \emptyset$ and iterates through each index $i_1 \in [n] \setminus T$. 
Each time, if $(i_1)$ can be extended to a length-$k$ monotone subsequence (otherwise it continues to the next available index), the procedure sets $i_2$ to be the first index, after $i_1$ and not already in $T$, such that $(i_1, i_2)$ can be extended to a length-$k$ monotone subsequence; then, it finds an index $i_3$ which is the next first index after $i_2$ and not in $T$ such that $(i_1, i_2, i_3)$ can be extended; and so on, until it has obtained a length-$k$ monotone subsequence starting at $i_1$. It then adds the subsequence as a tuple to $T$, and repeats. This procedure eventually outputs a set $T$ of disjoint, length-$k$ monotone subsequences of $f$ which has size at least $\eps n/ k^2$, and satisfies another crucial ``interleaving'' property (see Lemma~\ref{lem:rematching}): 
\begin{quote}
	($\star$) If $(i_1, \dots, i_k)$ and $(j_1, \dots, j_k)$ are $k$-patterns from $T$ and $c \in [k-1]$ satisfy $j_1 < i_1$, $j_c < i_c$, and $i_{c+1} < j_{c+1}$, then $f(i_{c+1}) < f(j_{c+1})$.
\end{quote}
Moreover, a slight variant of~\eqref{eq:prob} guarantees that for any $(i_1, \dots, i_k) \in T_c$ with $\dprof(i_1, \dots, i_k) = (d_1, \dots, d_{k-1})$,
\begin{align*}
\Prx_{\bell \sim [n]}\left[ i_c + 2^{d_c}/3 \leq \bell \leq i_{c+1} - 2^{d_c} / 3\right] \gsim \frac{2^{d_c}}{n}.
\end{align*}
Whenever the above event occurs, we say $\bell \sim [n]$ \emph{cuts} $(i_1,\dots, i_k)$ at $c$ \emph{with slack}, and note that $i_1, \dots, i_c$ lie in $[\bell - k \cdot 2^{d_c+1}, \bell]$ and $i_{c+1}, \dots, i_k$ in $[\bell, \bell + k \cdot 2^{d_c+1}]$. We denote, similarly to the above, $\delta_{d}(\bell) \in [0, 1]$ to be the density of $k$-tuples from $T_c$ which are cut with slack by $\bell$, and conclude~\eqref{eq:expect}. We then utilize ($\star$) to make the following claim: suppose two $k$-tuples $(i_1, \dots, i_k), (j_1, \dots, j_k) \in T_c$ satisfy $\dprof(i_1, \dots, i_k) = (d_1, \dots, d_{k-1})$, and $\dprof(j_1,\dots, j_k) = (d_1', \dots, d_{k-1}')$, where $d_{c} \leq d_{c}' - a\log k$, for some constant $a$ which is not too small. If $(i_1, \dots, i_k)$ and $(j_1, \dots, j_k)$ are cut at $c$ with slack, this means that $\ell$ lies roughly in the middle of $i_{c}$ and $i_{c+1}$ and of $j_c$ and $j_{c+1}$, and since the distance between $i_c$ and $i_{c+1}$ is much smaller than that between $j_c$ and $j_{c+1}$, the index $j_1$ will come before $i_1$, the index $j_c$ will come before $i_{c}$, but the index $i_{c+1}$ will come before $j_{c+1}$. By ($\star$), $f(i_{c+1}) < f(j_{c+1})$ (cf. Lemma~\ref{lem:At:properties}). In other words, the value, under the function $f$, of $(c+1)$-th indices from tuples in $T_{c, d}(\ell)$ increases as $d$ increases. 

As a result, if $\ell \in [n]$ satisfies $\sum_{d \in [\eta]} \delta_{d}(\ell) \gsim \eps$, and $\delta_{d}(\ell) \ll \eps$ for all $d \in [\eta]$, that is, if the summands in~\eqref{eq:expect} are \emph{spread out}, an algorithm could find a length-$k$ monotone subsequence by sampling, for many values of $d \in [\eta]$, indices which appear as the $(c+1)$-th index of tuples in $T_{c, d}(\ell)$. We call such values of $\ell$ the starts of \emph{growing suffixes} (as illustrated in Figure~\ref{fig:growing-suffix}). In Section~\ref{sec:case1}, we describe an algorithm that makes $\tilde{O}(\log n / \eps)$ queries and finds, with high probability, a length-$k$ monotone subsequence if there are many such growing suffixes (see Lemma~\ref{lem:case1}). The algorithm works by randomly sampling $\bell \sim [n]$ and hoping that $\bell$ is the start of a growing suffix; if it is, the algorithm  samples enough indices from the segments $[\ell + 2^{d}, \ell + 2^{d+1}]$ to find a $(c+1)$-th index of some tuple in $T_{c, d}(\ell)$, which gives a length-$k$ monotone subsequence.

The other case corresponds to the scenario where $\ell \in [s]$ satisfies $\sum_{d \in [\eta]} \delta_d(\ell) \gsim \eps$, but the summands are \emph{concentrated} on few values of $d \in [\eta]$. In this case, we may consider a value of $d^* \in [\eta]$ which has $\delta_{d^*}(\ell) \gsim \eps$, and then look at the intervals $[\ell - k \cdot 2^{d^*+1}, \ell]$ and $[\ell, \ell + k \cdot 2^{d^* + 1}]$. We can still define $T_{L}$ and $T_R$, both of which have size at least $|T_{c, d^*}| / 2$ and have the property that any $c$-tuple from $T_L$ can be combined with any $(k-c)$-tuple from $T_R$. Additionally, since $\delta_{d^*}(\ell) \gsim \eps$, we crucially do \emph{not} suffer a loss in the density of $T_L$ and $T_R$ in their corresponding intervals~--~a key improvement over the $\Omega(\log n)$ loss in density incurred by the original approach we first discussed. We refer to these intervals as \emph{splittable intervals} (cf. Figure~\ref{fig:splittable}), and observe that they lead to a natural recursive application of these insights to the intervals $[\ell - k \cdot 2^{d^*+1}, \ell]$ and $[\ell, \ell + k \cdot 2^{d^*+1}]$. The main structural result, given in Theorem~\ref{thm:tree}, does exactly this, and encodes the outcomes of the splittable intervals in an object we term a \emph{$k$-tree descriptor} (see Section~\ref{ssec:tree:descriptors}) whenever there are not too many growing suffixes. Intuitively, a $k$-tree descriptor consists of a rooted binary tree $G$ on $k$ leaves, as well as some additional information, which corresponds to a function $f \colon [n] \to \R$ without many growing suffixes. Each internal node $v$ in $G$ corresponds to a recursive application of the above insights, i.e., $v$ has $k_0$ leaves in its subtree, a parameter $c_v \in [k_0 - 1]$ encoding the gap of sufficiently many $k_0$-tuples, and a collection of disjoint intervals of the form $[\ell - k \cdot 2^{d^*}, \ell + k \cdot 2^{d^*}]$ where $\ell$ cuts $(12\dots k_0)$-patterns with slack at $c_v$ and satisfies~\eqref{eq:expect}; the left child of $v$ has $c$ leaves and contains the $(12\dots c)$-patterns in $T_L$ and intervals $[\ell - k \cdot 2^{d^*}, \ell]$; the right child of $v$ has $k_0 -c$ leaves and contains the $(12\dots (k_0 - c))$-patterns in $T_R$ and intervals $[\ell, \ell + k\cdot 2^{d^*}]$ (see Figure~\ref{fig:tree-descriptor}).

The algorithm for this case is more involved than the previous, and leads to the $O((\log n)^{\lfloor \log_2 k \rfloor})$-query complexity stated in Theorem~\ref{thm:intro-ub}. The algorithm proceeds in $r_0 = 1 + \lfloor \log_2 k \rfloor$ rounds, maintaining a set $\bA \subseteq [n]$, initially empty:
\begin{itemize}
\item \textit{Round 1}: For each $i \in [n]$, include $i$ in $\bA$ independently with probability $\Theta(1/(\eps n))$.
\item \textit{Round $r$, $2 \leq r \leq r_0$}: For each $i \in \bA$ from the previous round, and each $j = 1, \dots, O(\log n)$, consider the interval $B_{i,j} = [i - 2^j, i + 2^j]$. For each $i' \in B_{i,j}$, include $i'$ in $\bA$ independently with probability $\Theta(1/(\eps 2^j))$.
\end{itemize} 
At the end of all rounds, the algorithm queries $f$ at all indices in $\bA$, and outputs a $(12\dots k)$-pattern from $\bA$, if one exists. 

Recall the case considered in the sketch of the algorithm of~\cite{NRRS17}, when the function $f$ has all $(12\dots k)$-patterns $(i_1, \dots, i_{k})$ in $T$ satisfying $\dprof(i_1, \dots, i_k) = (d_1, \dots, d_{k-1})$ with $d_1 \ge d_2 \ge \ldots \ge d_{k-1}$. In this case, the $k$-tree descriptor $G$ consists of a rooted binary tree of depth $k$. The root has a left child which is a leaf (corresponding to $1$-tuples of first indices of some tuples in $T$, stored in $T_L$) and a right child (corresponding to suffixes of length $(k-1)$ of some tuples in $T$, stored in $T_R$) is an internal node. The root node corresponds to one application of the structural result, and the right child corresponds to a $(k-1)$-tree descriptor for the tuples in $T_R$. Loosely speaking, as $d_2 \ge \ldots \ge d_{k-1}$ the same reasoning repeats $k-1$ times, and leads to a path of length $k-1$ down the right children of the tree, the right child of the $(k-1)$-th internal node corresponding to a $1$-tuple (i.e., a leaf).\footnote{This is somewhat inaccurate, as in each step, after forming $T_L$ and $T_R$, we apply the greedy algorithm again and obtain new sets $T_L'$ and $T_R'$, which may violate the assumption $d_1 \ge d_2 \ge \ldots \ge d_k$. We ignore this detail at the moment to simplify the explanation.}

To gain some intuition, we analyze how the algorithm behaves on these instances. Suppose that in round 1, the algorithm samples an element $i \in [n]$ which is the $k$-th index of a $1$-tuple stored in the right-most leaf of $G$. In particular, this index belongs to the set $T_R$ of the $(k-1)$-th internal node, as a second index of a cut $(12)$-pattern in the $(k-1)$-th recursive call of the structural result. Similarly, $i$ also belongs to that set $T_R$ of the $(k-2)$-th internal node, as a part the third index of a cut $(123)$-pattern in the $(k-2)$-th recursive call. We may continue with all these inclusions to the root, i.e., $i$ is the $k$-th element of some $(12\dots k)$-pattern in $T$, which is cut in the first call to the structural result. Round 2 of the algorithm will consider the $k-1$ intervals $B_{i, d_{k-1}'}, B_{i, d_{k-2}'}, \dots, B_{i, d_{1}'}$, where $d_j'= d_j + \Theta(\log k)$, since it iterates through all $O(\log n)$ intervals of geometrically increasing lengths.\footnote{Note that the intervals $B_{i, d_{j}'}$ and $B_{i,d_{j+1}'}$ may be the same, for instance when $d_j = d_{j+1}$.} One can check that for each $j \in [k-1]$, the interval $B_{i, d_{j}'}$ contains $[\ell_j - k \cdot 2^{d_j}, \ell_j]$, where $\ell_j$ is some index which cut the $(k-j+1)$-tuple $(i_j,\dots, i_k)$ with slack in the $j$-th recursive call of the structural result. Recall that the set $T_L$ of $1$-tuples has density $\Omega(\eps)$ inside $[\ell_j - k \cdot 2^{d_j}, \ell_j]$ and may be combined with any $(k-j)$-tuple from $T_R$. Following this argument, in the \emph{second} round of the algorithm, $\bA$ will include some index of $T_L$ (for each $j \in [k-1]$), and these indices combine to form a $(12\dots k)$-pattern~--~that is, with high probability, after two rounds, the algorithm succeeds in finding a monotone subsequence of length $k$. 

Generalizing the above intuition for all possible distance profiles necessitates the use of $1 + \lfloor \log_2 k \rfloor$ rounds, and requires extra care. At a high level, consider an arbitrary $k$-tree descriptor $G$ for $\Omega(\eps n)$ many $(12\dots k)$-patterns in $f$. Denote the root $u$, and consider the unique leaf $w$ of $G$ where the root-to-$w$ path $(u_1, \dots, u_h)$ with $u_1 = u$ and $u_h = w$, satisfies that at each internal node $u_{l}$, the next node $u_{l+1}$ is the child with larger number of leaves in its subtree.\footnote{Ties are broken by picking the left child.} We call such a leaf a \emph{primary index} of $G$. The crucial property of the primary index is that the root-to-leaf path of $w$, $(u_1, u_2, \dots u_h)$, is such that the siblings of the nodes on this  path\footnote{For example, if $(u_1, \dots, u_h)$ is the root-to-$w$ path where $u_1$ is the root and $u_h = w$, the sibling nodes along the path are given by $u_{2}', u_{3}', \dots, u_{h}'$, where $u_{l}'$ is the sibling of $u_l$. Namely, if the $l$-th node on the root-to-$w$ path is a left child of the $(l-1)$th node, then $u_{l}'$ is the right child of the $(l-1)$-th node. Analogously, if the $l$-th node is a right child of the $(l-1)$-th node, then $u_l'$ is the left child of the $(l-1)$-th node.}  have strictly fewer than $k / 2$ leaves in their subtrees. 

 The relevant event in the first round of the algorithm is that of sampling an index $i \in [n]$ which belongs to a $1$-tuple of the primary index $w$ of $G$. This occurs with probability at least $1 - 1/(100 k)$, since we sample each element of $[n]$ with probability $\Theta(1/(\eps n))$ while there are at least $\Omega(\eps n)$ many $(12\dots k)$-patterns. Now, roughly speaking, letting $(u_1, \dots, u_h)$ be the root-to-$w$ path in $G$, and $(u_2', \dots, u_h')$ be the sibling nodes, the subtrees of $G$ rooted at $u_2', \dots, u_h'$ will be tree descriptors for the function $f$ restricted to $B_{i, j}$'s and within these interval, the density of tuples is at least $\Omega(\eps)$. As a result, the second round of the algorithm, recursively handles each subtree rooted at $u_2', \dots, u_h'$ with one fewer round. Since the subtrees have strictly fewer than $k/2$ leaves, $\lfloor \log_2 k \rfloor - 1$ rounds are enough for an inductive argument. Moreover, since the total number of nodes is at most $2k$ and each recursive call succeeds with probability at least $1-1/(100 k)$, by a union bound we may assume that all recursive calls succeed. 
 
Unrolling the recursion, the query complexity $\Theta((\log n)^{\lfloor \log_2 k \rfloor})$ can be explained with a simple combinatorial game. We start with a rooted binary tree $G$ on $k$ leaves. In one round, whenever $G$ is not simply a leaf, we pick the leaf $w$ which is the primary index of $G$, and replace $G$ with a collection of subtrees obtained by cutting out the root-to-$w$ path in $G$. These rounds ``pay'' a factor of $\Theta(\log n)$, since the algorithm must find intervals on which the collection of subtrees form tree descriptors of $f$ (restricted to these intervals). In the subsequent rounds, we recurse on each subtree simultaneously, picking the leaf of the primary index in each, and so on. After $\lfloor \log_2 k \rfloor$ many rounds, the trees are merely leaves, and the algorithm does not need to pay the factor $\Theta(\log n)$ to find good intervals, as it may simply sample from these intervals.
 
The execution of the above high-level plan is done in Section~\ref{sec:case2}, where Lemma~\ref{lem:case2} is the main inductive lemma containing the analysis of the main algorithm (shown in Figure~\ref{fig:sample-splittable} and Figure~\ref{fig:sample-splittable-2}).

\subsection{Our techniques: Lower bound}

In order to highlight the main ideas behind the proof of Theorem~\ref{thm:intro-lb} (the lower bound on the query complexity), we first cover the simpler case of $k=2$. This case corresponds to a lower bound of $\Omega(\log n)$ on the number of queries needed for non-adaptive and one-sided algorithms for monotonicity testing. Such a lower bound is known, even for adaptive algorithms with two-sided error \cite{EKKRV00, F04}. We rederive and present the well-known non-adaptive one-sided lower bound in our language; after that, we generalize it to the significantly more involved case $k > 2$. For the purpose of this introduction, we give an overview assuming that both $n$ and $k$ are powers of $2$; as described in Section \ref{sec:lowerbounds}, a simple ``padding'' argument generalizes the result to all $n$ and $k$.

For any $n \in \N$ which is a power of $2$ and $t \in [n]$, consider the \emph{binary representation} $B_n(t) = (b^t_1, b^t_2, \ldots, b^t_{\log_2 n}) \in \{0,1\}^{\log_2 n}$ of $t$, where $t = b^t_1 \cdot 2^{0} + b^t_2 \cdot 2^{1} + \dots + b^t_{\log_2 n} \cdot 2^{\log_2 n - 1}$.  
For $i \in [\log_2 n]$, the \emph{bit-flip operator}, $F_i \colon [n] \to [n]$, takes an input $t \in [n]$ with binary representation $B_n(t)$ and outputs the number $F_i(t) = t' \in [n]$ with binary representation obtained by flipping the $i$-th bit of $B_n(t)$. Finally, for any two distinct elements $x,y \in [n]$, let $M(x,y) \in [\log_2 n]$ be the index of the most significant bit in which they differ, i.e., the largest $i$ where $b_i^x \neq b_i^y$. 

As usual for lower bounds on randomized algorithms, we rely on Yao's minimax principle \cite{Y77}. In particular, our lower bounds proceed by defining, for each $n$ and $k$ (which are powers of $2$), a distribution $\calD_{n,k}$ supported on functions $f \colon [n] \to \R$ which are all $\eps$-far from $(12\dots k)$-free. We show that any deterministic and non-adaptive algorithm which makes fewer than $q$ queries, where $q = c_k (\log_2 n)^{\log_2 k}$ and $c_k > 0$ depends only on $k$, fails to find a $(12\ldots k)$-pattern in a random $\boldf \sim \calD_{n,k}$, with probability at least $1/10$. Note that any deterministic, non-adaptive algorithm which makes fewer than $q$ queries is equivalently specified by a set $Q \subseteq [n]$ with $|Q| < q$. Thus, the task of the lower bound is to design a distribution $\calD_{n,k}$ supported on functions $f \colon [n] \to \R$, each of which is $\eps$-far from $(12\dots k)$-free, such that for any $Q \subseteq [n]$ with $|Q| < c_k (\log_2 n)^{\log_2 k}$ the following holds
\begin{align}
\Prx_{\boldf \sim \calD_{n,k}}\left[ \exists i_1, \dots, i_k \in Q: i_1 < \dots < i_k \text{ and } \boldf(i_1) < \dots < \boldf(i_k) \right] \leq \frac{9}{10}. \label{eq:yao}
\end{align}
 
\medskip\noindent\textsc{Lower bound for $k=2$ (monotonicity).} 
The case of $k=2$ relies on the following idea: for any $i \in [\log_2 n]$, one can construct a function (in fact, a permutation) $f_i \colon [n] \to [n]$ which is $1/2$-far from $(12)$-free, and furthermore, all pairs of distinct elements $(x, y) \in [n]^2$ where $x < y$ and $f_i(x) < f_i(y)$ satisfy $M(x, y) = i$.
One can construct such a function $f_i \colon [n] \to [n]$, for any $i \in[\log_2 n]$ in the following way.
First, let $f^{\downarrow} \colon [n] \to [n]$ be the decreasing permutation,   $f^{\downarrow}(x) = n+1-x$ for any $x \in [n]$.
Now take $f_i$ to be $f^{\downarrow} \circ F_i$, where $\circ$ denotes function composition, that is, $f_i(x) = f^{\downarrow}(F_i(x))$ for any $x \in [n]$.
Finally, set $\calD_{n,2}$ to be the uniform distribution over the functions $f_1, f_2, \ldots, f_{\log n}$ (see Figure~\ref{fig:mon-construct}).

\begin{figure}
\centering
\begin{picture}(340, 200)
\put(0,0){\includegraphics[width=0.7\linewidth]{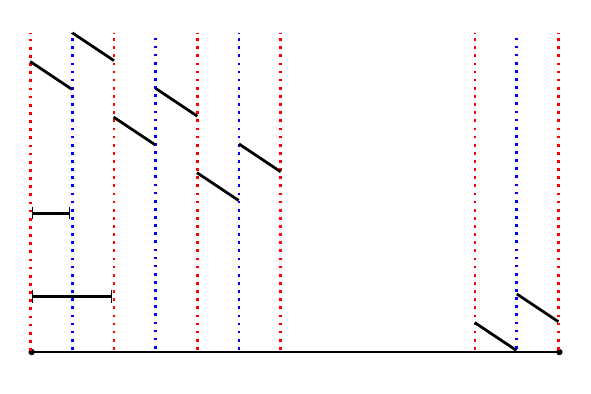}}
\put(45, 60){$2^{i}$}
\put(19, 108){$2^{i-1}$}
\end{picture}
\caption{Example of a function $f_i$ lying in the support of $\calD_{n,2}$. One may view the domain as being divided into intervals of length $2^{i}$ (displayed as intervals lying between dotted red lines) and the permutation $f^{\downarrow}$ flipped across adjacent intervals of length $2^{i-1}$ inside a segment of length $2^i$ (displayed as intervals lying between dotted blue lines). Note that all $(12)$-patterns in $f_i$ above have the $i$th bit flipped.}
\label{fig:mon-construct}
\end{figure}

Towards proving \eqref{eq:yao} for the distribution $\calD_{n, 2}$, we introduce the notion of \emph{binary profiles} captured by a set of queries. For any fixed $Q \subseteq [n]$, the binary profiles captured in $Q$ are given by the set
\[
\bprof(Q) = \{ i \in[\log_2 n] \  :\  \text{there exist $x,y \in Q$ such that } M(x,y)=i\}.
\]
Since all $(12)$-patterns $(x, y)$ of $f_i$ have $M(x, y) = i$, the probability over $\boldf \sim \calD_{n,2}$ that an algorithm whose set of queries is $Q$, finds a $(12)$-pattern in $\boldf$ is at most $|\bprof(Q)| / \log_2 n$. 
We show that for any set $Q \subseteq [n]$, $|\bprof(Q)| \leq |Q| - 1$. This completes \eqref{eq:yao}, and proves the lower bound of $\frac{9}{10}\log_2 n$ for $k=2$. 

The proof that $|\bprof(Q)| \leq |Q|-1$ for any set $Q \subset [n]$, i.e., the number of \emph{captured profiles} is bounded by the number of queries, follows by induction on $|Q|$. 
The base case $|Q| \leq 2$ is trivial. When $|Q| > 2$, 
let $i_\text{max} = \max \bprof(Q)$. Consider the partition of $Q$ into $Q_0$ and $Q_1$, where
\[ 
	Q_0 = \{x \in Q: b_{i_\text{max}}^x = 0\} \qquad \text{ and }\qquad Q_1 = \{ y \in Q : b_{i_{\text{max}}}^y = 1\}. 
\]
Since $\bprof(Q) = \bprof(Q_0) \cup \bprof(Q_1) \cup \{i_\text{max}\}$, we conclude that 
\[
	|\bprof(Q)| \leq |\bprof(Q_0)| + |\bprof(Q_1)| + 1 \leq |Q_0| - 1 + |Q_1| - 1 + 1 = |Q|-1,
\]
where the second inequality follows from the inductive hypothesis.

\medskip\noindent\textsc{Generalization to $k>2$: Proof of Theorem \ref{thm:intro-lb}.} 
We now provide a detailed sketch of the proof of Theorem~\ref{thm:intro-lb}. The main objects and notions used are defined, while leaving technical details to Section~\ref{sec:lowerbounds}. Let $k = 2^h$ for $h \in\N$; the case $h=1$ corresponds to the previous discussion. 

We first define the distributions $\calD_{n,k}$ supported on permutations $f \colon [n] \to [n]$ which are $\Omega(1/k)$-far from $(12\dots k)$-free. Recall that the function $f_i$ in the case $k=2$ was constructed by ``flipping'' bit $i$ in the representation of $f^{\downarrow}$, that is, $f_i = f^{\downarrow} \circ F_i$. Generalizing this construction, for any $i_1 < i_2 < \dots < i_h \in [\log_2 n]$ we let $f_{i_1, \ldots, i_{h}} \colon [n] \to [n]$ denote the result of flipping bits $i_1, i_2, \ldots, i_h$ in the representation of $f^{\downarrow}$: 
\[ f_{i_1, \ldots, i_{h}} \eqdef f^{\downarrow} \circ F_{i_h} \circ \ldots \circ F_{i_1}.
\]
It can be shown that $f_{i_1, \ldots, i_h}$ is $(1/k)$-far from $(12\dots k)$-free (see Figure~\ref{fig:construct-recurse}).
We take $\calD_{n,k}$ as the uniform distribution over all functions of the form $f_{i_1,\ldots, i_{h}}$, where $i_1 < \dots < i_h \in [\log_2 n]$.

\begin{figure}
	\centering
	\begin{picture}(370, 200)
	\put(0,0){\includegraphics[width=0.8\linewidth]{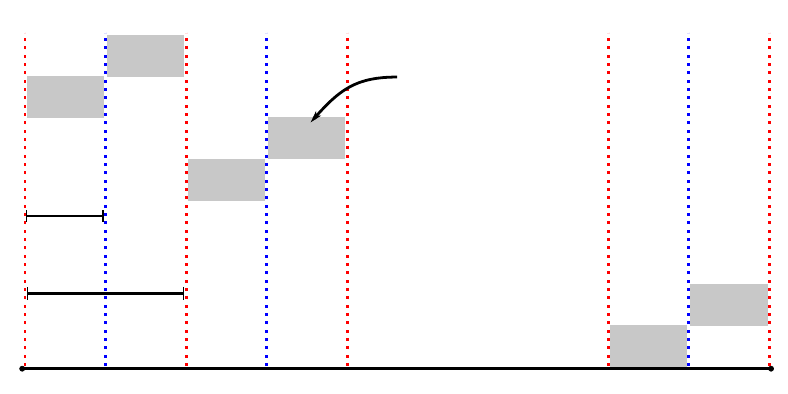}}
	\put(65, 52){$2^{i_h}$}
	\put(19, 90){$2^{i_h-1}$}
	\put(195, 150){$f_{i_1, \dots, i_{h-1}}$}
	\end{picture}
	\caption{Example of a function $f_{i_1,\dots, i_h}$ lying in the support of $\calD_{n,k}$, where $k = 2^h$. Similarly to the case of $k=2$ (shown in Figure~\ref{fig:mon-construct}), the domain is divided into intervals of length $2^{i_h}$ (shown between red dotted lines), and functions are flipped across adjacent intervals of length $2^{i_h-1}$ within an interval of length $2^{i_h}$ (shown between blue dotted lines). Inside each grey region is a recursive application of the construction, $f_{i_1, \dots, i_{h-1}}$, after shifting the range.}
	\label{fig:construct-recurse}
\end{figure}

Towards the proof of \eqref{eq:yao} for the distribution $\calD_{n,k}$, we 
generalize the notion of a binary profile. 
Consider any $k$-tuple of indices $(x_1, \dots, x_k) \in [n]^{k}$ satisfying $x_1 < \dots < x_k$.  
We say that $(x_1, x_2, \ldots, x_k)$ has \emph{$h$-profile of type $(i_1, \ldots, i_h)$} if, 
\[ 
	\text{for every $j \in [k-1]$}, \qquad M(x_{j}, x_{j+1}) = i_{M(j-1, j)}. 
\]
For instance, when $h=3$ (i.e., $k=8$) the tuple $(x_1, \dots, x_k)$ has $h$-profile of type $(i_1, i_2, i_3)$ if the sequence $(M(x_j, x_{j+1}))_{j = 1}^{7}$ is $(i_1, i_2, i_1, i_3, i_1, i_2, i_1)$. See Figure~\ref{fig:profiles} for a visual demonstration of a $3$-profile.\footnote{Unlike the case $k=2$, not all tuples $(x_1, \ldots, x_k)$ with $x_1 < \ldots < x_k$ have an $h$-profile. For what follows we will only be interested in tuples that \emph{do} have a profile.}

\begin{figure}
	\centering
	\begin{picture}(600, 200)
	\put(0,0){\includegraphics[width=\linewidth]{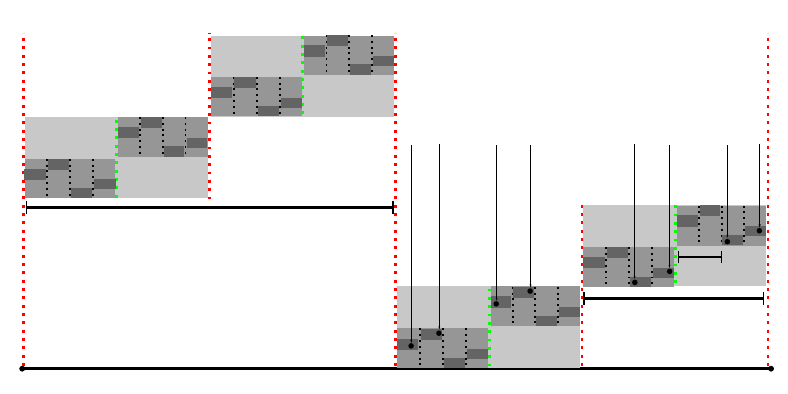}}
	\put(120, 95){$2^{i_3}$}
	\put(405, 42){$2^{i_2}$}
	\put(415, 70){$2^{i_1}$}
	\put(240, 155){$x_1$}
	\put(258, 155){$x_2$}
	\put(295, 155){$x_3$}
	\put(312, 155){$x_4$}
	\put(375, 155){$x_5$}
	\put(395, 155){$x_6$}
	\put(430, 155){$x_7$}
	\put(448, 155){$x_8$}
	\end{picture}
	\caption{Example of a function $f_{i_1, i_2, i_3}$ in the support of $\calD_{n, 8}$, with a $k$-tuple $(x_1, \dots, x_8)$ whose $h$-profile has type $(i_1, i_2, i_3)$.}
	\label{fig:profiles}
\end{figure}

It can be shown that for any $i_1 < \dots < i_h \in [\log_2 n]$, the function $f = f_{i_1, \ldots, i_{h}}$ has the following property. If $x_1 < \dots < x_k \in [n]$ satisfy $f(x_1) < \dots < f(x_k)$, i.e., the $k$-tuple $(x_1, \dots, x_k)$ is a $(12\dots k)$-pattern of $f_{i_1, \dots, i_k}$, then $(x_1, \dots, x_k)$ has an $h$-profile of type $(i_1, \dots, i_h)$. 
We thus proceed similarly to the case $k=2$. For any $Q \subseteq [n]$, we define the set of all $h$-profiles captured by $Q$ as follows
\[
	\bprof_h(Q) = \left\{(i_1, \ldots, i_{h})  :\  \begin{array}{l} \text{there exist $x_1, \dots,x_k \in Q$ where }  x_1 < \dots < x_k \\
	\text{and $(x_1, \dots, x_k)$ has $h$-profile of type $(i_1, \ldots, i_{h})$} \end{array} \right\}.
\]

The proof that $|\bprof_h(Q)| \leq |Q| - 1$ for any $Q \subseteq [n]$ follows by induction on $h$. The base case $h=1$ was covered in the discussion on $k=2$. For $h > 1$, we define subsets 
\[ 
	\emptyset = B_{\log_2 n+1} \subseteq B_{\log_2 n} \subseteq \ldots \subseteq B_{1} = Q, 
\]
where, given $B_{i+1}$, the set $B_i \supseteq B_{i+1}$ is an arbitrary maximal subset of $Q$ containing $B_{i+1}$, so that no two elements $x \neq y \in B_i$ satisfy $M(x,y) < i$. Additionally, for each $j \in [\log_2 n]$ we let
\[ 
	N_j = \left\{ (i_2, \dots, i_h) : 1 \leq j < i_2 \dots < i_h \leq \log_2 n \text{ and } (j, i_2, \dots, i_h) \in \bprof_h(Q) \right\}. 
\]

The key observation is that $N_j \subseteq \bprof_{h-1}(B_j \setminus B_{j+1})$. To see this, note first that any $(j, i_2, \ldots, i_{h}) \in \bprof_h(Q)$ also satisfies $(j, i_2, \ldots, i_{h}) \in \bprof_h(B_j)$. Indeed, suppose that a tuple $(x_1, \ldots, x_k)$ with $x_1 < \dots < x_k \in Q$ has $h$-profile $(j, i_2, \ldots, i_{h})$. By the maximality of $B_j$, we know that for every $1 \leq \ell \leq k$ there exists $y_\ell \in B_j$ such that either $x_{\ell} = y_{\ell}$ or $M(x_{\ell}, y_{\ell}) < j$. This implies that $\{y_1, \ldots, y_{k}\} \subseteq B_j$ has $h$-profile $(j, i_2, \ldots, i_{h})$.

Now, suppose that $y_1, \dots, y_k \in B_j$ satisfies $y_1< \ldots< y_{k}$ and has $h$-profile of type $(j, i_2, \ldots, i_{h})$ in $B_j$. For any $1 \leq t \leq k/2$ we have $M(y_{2t-1}, y_{2t}) = j$. Therefore, at most one of $y_{2t-1}, y_{2t}$ is in $B_{j+1}$, and hence, for any such $t$ there exists $z_t \in \{y_{2t-1}, y_{2t}\} \setminus B_{j+1} \subseteq B_{j} \setminus B_{j+1}$. It follows that $(z_1, \ldots, z_{k/2}) \in B_j \setminus B_{j+1}$ has $(h-1)$-profile $(i_2, \ldots, i_{h})$. This concludes the proof that $N_j \subseteq \bprof_{h-1}(B_j \setminus B_{j+1})$.

We now use the last observation to prove that $|\bprof_h(Q)| \leq |Q|-1$. Note that 
$$
	\bprof_h(Q) = \bigcup_{j=1}^{\log_2 n} \{(j, i_2, \ldots, i_h) : (i_2, \ldots, i_h) \in N_j \}
	\qquad \text{and}  \qquad 
	Q = \bigcup_{j=1}^{\log_2 n} (B_j \setminus B_{j+1}),
$$
where both unions are disjoint unions. By the induction assumption, $|N_j| \leq |\bprof_{h-1}(B_j \setminus B_{j+1})| < |B_j \setminus B_{j+1}|$ for any $j$ if $N_j$ is non-empty; If $N_j$ is empty, then $|N_j| \leq |\bprof_{h-1}(B_j \setminus B_{j+1})| \le |B_j \setminus B_{j+1}|$ trivially holds. Hence
\[
	|Q| = \sum_{j=1}^{\log_2 n} |B_j \setminus B_{j+1}| >\sum_{j=1}^{\log_2 n} |N_j| = |\bprof_h(Q)|, 
\]
where the strict inequality follows because if $\bprof_h(Q)$ is non-empty then $N_j$ is non-empty for some $j$. This completes the proof.

\subsection{Organization}
	We start by introducing the notation that we shall use throughout the paper in Section~\ref{sec:preliminaries}. In Section~\ref{sec:structural} we prove our main structural result, and formally define the notions that underlie it: namely, Theorem~\ref{thm:tree}, along with the definitions of growing suffixes and representation by tree descriptors (Definitions~\ref{def:growing-suffixes} and~\ref{def:tree-rep}). Section~\ref{sec:algorithm} then leverages this dichotomy to describe and analyze our testing algorithm, thus establishing the upper bound of Theorem~\ref{thm:intro-ub} (see Theorem~\ref{thm:ub} for a formal statement). Finally, we complement this algorithm with a matching lower bound in Section~\ref{sec:lowerbounds}, where we prove Theorem~\ref{thm:intro-lb}.

	While Section~\ref{sec:algorithm} crucially relies on Section~\ref{sec:structural}, these two sections are independent of~Section~\ref{sec:lowerbounds}, which is mostly self-contained.
 
\subsection{Notation and Preliminaries}\label{sec:preliminaries}

We write $a \lsim b$ if there exists a universal positive constant $C > 0$ such that $a \leq C b$, and $a \asymp b$ if we have both $a \lsim b$ and $b \lsim a$. At times, we write $\poly(k)$ to stand for $O(k^{C})$, where $C > 0$ is a large enough universal constant. Unless otherwise stated, all logarithms will be in base 2. We frequently denote $\calI$ as a collection of disjoint intervals, $I_1, \dots, I_s$, and then write $\calS(\calI)$ for the set of all sub-intervals which lie within some interval in $\calI$. For two collections of disjoint intervals $\calI_0$ and $\calI_1$, we say that $\calI_1$ is a \emph{refinement} of $\calI_0$ if every interval in $\calI_1$ is contained within an interval in $\calI_0$. (We remark that it is not the case that intervals in $\calI_1$ must form a partition of intervals in $\calI_0$.) For a particular set $A \subseteq [n]$ and an interval $I \subseteq [n]$, we define the \emph{density} of $A$ in $I$ as the ratio $|A \cap I| / |I|$. Given a set $S$, we write $\bx \sim S$ to indicate that $\bx$ is a random variable given by a sample drawn uniformly at random from $S$, and $\calP(S)$ for the power set of $S$. Given a sequence $f$ of length $n$, we shall interchangeably use the notions \emph{$(12\dots k)$-copy}, \emph{$(12 \dots k)$-pattern}, and \emph{length-$k$ increasing subsequence}, to refer to a tuple $(x_1, \ldots, x_k) \in [n]^k$ such that $x_1 < \ldots < x_k$ and $f(x_1) < \ldots < f(x_k)$.
 
\section{Structural Result}\label{sec:structural}

\subsection{Rematching procedure}

	Let $f \colon [n] \to \R$ be a function which is $\eps$-far from $(12\dots k)$-free. 
	Let $T$ be a set of $k$-tuples representing monotone subsequences of length $k$ within $f$, i.e.,
	\[
		T \subseteq \left\{ (i_1, \dots, i_k) \in [n]^k : i_1 < \dots < i_k \text{ and } f(i_1) < \dots < f(i_k) \right\},
	\]
	and for such $T$ let $E(T)$ be the set of indices of subsequences in $T$, so
	\[
		E(T) = \bigcup_{(i_1, \ldots, i_k) \in T} \{i_1, \ldots, i_k\}.
	\]

	\begin{observation} \label{obs:disjoint-families}
		If $f \colon [n] \to \R$ is $\eps$-far from $(12 \dots k)$-free, then there exists a set $T \subseteq [n]^k$ of disjoint length-$k$ increasing subsequences of $f$ such that $|T| \ge \eps n / k$.
	\end{observation}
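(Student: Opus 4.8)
The plan is to prove Observation~\ref{obs:disjoint-families} by a greedy/maximality argument, essentially the one sketched in the footnote earlier in the paper. First I would let $T$ be a \emph{maximal} collection of pairwise-disjoint length-$k$ increasing subsequences of $f$ (maximal with respect to inclusion, which exists since the collection of such $T$ is finite and nonempty whenever $f$ is not $(12\dots k)$-free, which is implied by being $\eps$-far from it for $\eps>0$). The goal is then to show $|T|\ge \eps n/k$.

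The key step is to argue that if $|T|$ were small, one could modify $f$ on few coordinates to make it $(12\dots k)$-free, contradicting $\eps$-farness. Concretely, I would consider the set $E(T)\subseteq[n]$ of all indices appearing in some tuple of $T$, so $|E(T)| = k|T|$ since the tuples are disjoint. I claim that $f$ restricted to $[n]\setminus E(T)$ is $(12\dots k)$-free: otherwise there would be a length-$k$ increasing subsequence of $f$ using only indices outside $E(T)$, and this subsequence would be disjoint from every tuple in $T$, contradicting the maximality of $T$. Now define $g\colon[n]\to\R$ by leaving $g=f$ on $[n]\setminus E(T)$ and redefining $g$ on $E(T)$ so that the whole of $g$ becomes $(12\dots k)$-free — for instance, assign the indices of $E(T)$ values that are all strictly smaller than $\min_{i\in[n]} f(i)$ and arranged in strictly \emph{decreasing} order along $E(T)$. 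Then any increasing subsequence of $g$ can contain at most one index of $E(T)$ (since those values are decreasing), and hence its indices outside $E(T)$ form an increasing subsequence of $f$ of length at least $k-1$ inside a $(12\dots k)$-free sequence — one needs to check this forbids a length-$k$ increasing subsequence of $g$; I would phrase it more carefully as: an increasing subsequence of $g$ of length $k$ would have at least $k-1$ of its indices in $[n]\setminus E(T)$ forming an increasing subsequence there, but such a subsequence could then be extended back into a length-$k$ one, contradicting $(12\dots k)$-freeness of $f|_{[n]\setminus E(T)}$. (A cleaner alternative that avoids this subtlety: just take $g$ to be \emph{any} $(12\dots k)$-free function agreeing with $f$ outside $E(T)$ — such $g$ exists because $f|_{[n]\setminus E(T)}$ is itself $(12\dots k)$-free, and we can extend it arbitrarily, e.g. by the decreasing-values trick, to all of $[n]$ while keeping $(12\dots k)$-freeness.)

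Finally I would combine the pieces: $g$ is $(12\dots k)$-free and $f$ and $g$ differ only on $E(T)$, so $\Pr_{\bi\sim[n]}[f(\bi)\ne g(\bi)] \le |E(T)|/n = k|T|/n$. Since $f$ is $\eps$-far from $(12\dots k)$-free, this probability is at least $\eps$, giving $k|T|/n \ge \eps$, i.e. $|T|\ge \eps n/k$, as desired.

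The main obstacle — really the only point requiring care rather than routine work — is verifying that one can extend $f|_{[n]\setminus E(T)}$ to a $(12\dots k)$-free function on all of $[n]$; the decreasing-values construction handles this cleanly, since prepending a strictly decreasing block of values below the existing range cannot create any new increasing subsequence of length exceeding one (within the new block) plus the longest increasing subsequence of the old part, and the old part has no length-$k$ increasing subsequence. Everything else (disjointness giving $|E(T)| = k|T|$, maximality ruling out external copies, the farness inequality) is immediate.
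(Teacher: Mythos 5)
Your overall strategy matches the paper's (and the footnote's): take $T$ maximal, observe that $f|_{[n]\setminus E(T)}$ is $(12\dots k)$-free by maximality, build a $(12\dots k)$-free $g$ agreeing with $f$ outside $E(T)$, and conclude $|E(T)|\ge\eps n$. However, the concrete construction of $g$ you propose --- a strictly decreasing block of values below $\min f$ on $E(T)$ --- is incorrect, and this is not a routine detail but exactly the step where the proof can break.

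The problem: if $(x_2,\dots,x_k)$ is any length-$(k-1)$ increasing subsequence of $f|_{[n]\setminus E(T)}$ (these may well exist; $(12\dots k)$-freeness only excludes length $k$), then prepending any index $x_1\in E(T)$ with $x_1<x_2$ gives a length-$k$ increasing subsequence of $g$, since $g(x_1)<\min f\le g(x_2)$. Concretely, take $k=3$, $f=(1,3,5,2,4)$ on $[5]$, $T=\{(1,2,3)\}$ (maximal), so $E(T)=\{1,2,3\}$. Your $g$ would be $(0.9,\,0.8,\,0.7,\,2,\,4)$, which contains the $(123)$-pattern $(3,4,5)$. Your own accounting in the last paragraph already exposes the off-by-one: you bound $\mathrm{LIS}(g)\le 1+\mathrm{LIS}(f|_{[n]\setminus E(T)})\le 1+(k-1)=k$, which is not $<k$. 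Likewise, the parenthetical attempt ``such a subsequence could then be extended back into a length-$k$ one, contradicting $(12\dots k)$-freeness of $f|_{[n]\setminus E(T)}$'' does not go through: you only have $k-1$ indices inside $[n]\setminus E(T)$, and there is no reason a length-$(k-1)$ increasing subsequence there extends to length $k$ within $[n]\setminus E(T)$. The ``cleaner alternative'' you offer also ultimately appeals to the same decreasing-values trick, so it inherits the same flaw.

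What does work, and what the paper actually does, is a ``copy the nearest good value'' extension rather than an ``insert small values'' extension: for each $i\in E(T)$ set $g(i)=f(j)$ where $j$ is the largest index of $[n]\setminus E(T)$ below $i$ (and, if no such $j$ exists, set $g(i)=\max_\ell f(\ell)$). Then any length-$k$ increasing subsequence $(x_1,\dots,x_k)$ of $g$ projects to a length-$k$ increasing subsequence $(y_1,\dots,y_k)$ of $f$ inside $[n]\setminus E(T)$ (with $y_i$ being $x_i$ itself when $x_i\notin E(T)$, and its nearest $[n]\setminus E(T)$-predecessor otherwise; the strict inequalities $g(x_i)<g(x_{i+1})$ force the $y_i$ to be strictly increasing), contradicting maximality. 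Alternatively, one can use the equivalence between $(12\dots k)$-freeness and decomposability into $k-1$ non-increasing parts and insert the $E(T)$-indices into one of the parts. Either of these replaces your broken extension step; the rest of your argument (disjointness giving $|E(T)|=k|T|$, and the farness inequality) is fine.
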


	To see why the observation holds, take $T$ to be a maximal disjoint set of such $k$-tuples. Then we can obtain a $(12 \dots k)$-free sequence from $f$ by changing only the entries of $E(T)$ (e.g.\ for every $i \in E(T)$ define $f(i) = f(j)$ where $j$ is the largest $[n] \setminus E(T)$ which is smaller than $i$. If there is no $j \in [n] \setminus E(T)$ where $j < i$, let $f(i) = \max_{\ell \in [n]} f(\ell)$). Since $f$ is $\eps$-far from being $(1 2 \dots k)$-free, we have $|E(T)| \ge \eps n$, thus $|T| \ge \eps n / k$.

	In this section, we show that from a function $f \colon [n] \to \R$ which is $\eps$-far from $(12\dots k)$-free and a set $T_0$ of disjoint, length-$k$ monotone subsequences of $f$, a greedy rematching algorithm finds a set $T$ of disjoint, length-$k$ monotone subsequences of $f$ where $E(T) \subseteq E(T_0)$ with some additional structure, which will later be exploited in the structural lemma and the algorithm. The greedy rematching algorithm, $\GreedyDisjointTuples$, is specified in Figure~\ref{fig:greedy}; \red{for convenience, in view of its later use in the algorithm, we phrase it in terms of an arbitrary parameter $k_0$, not necessarily the (fixed) parameter $k$ itself.}

	\begin{lemma}\label{lem:rematching}
		Let $k_0 \in \N$, $f \colon [n] \to \R$, and let $T_0 \subseteq [n]^{k_0}$ be a set of disjoint monotone subsequences of $f$ of length $k_0$. Then there exists a set $T \subseteq [n]^{k_0}$ of disjoint $k_0$-tuples with $E(T) \subseteq E(T_0)$ such that the following holds.
		\begin{enumerate}
		\item\label{en:cond-1} 
			The set $T$ holds disjoint monotone subsequences of length $k_0$.
		\item\label{en:cond-2} 
			The size of $T$ satisfies $|T| \geq |T_0| / k_0$. 
		\item\label{en:cond-3} 
			For any two $(i_1, \dots, i_{k_0}), (j_1, \dots, j_{k_0}) \in T$ and any $\ell \in [k_0 - 1]$, if $i_1 < j_1$, $i_{\ell} < j_{\ell}$ and $i_{\ell+1} > j_{\ell+1}$ then $f(i_{\ell+1}) > f(j_{\ell+1})$.
		\end{enumerate}
	\end{lemma}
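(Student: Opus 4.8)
The plan is to design and analyze the greedy procedure \GreedyDisjointTuples\ so that the three conditions fall out of the way the tuples are constructed. First I would describe the procedure precisely: process a copy of $T_0$, maintaining a partial matching $T$ (initially empty) and the set $E(T)$ of already-used indices. At each step pick the smallest unused index $i_1 \in E(T_0) \setminus E(T)$ that still starts a length-$k_0$ increasing subsequence using only indices in $E(T_0) \setminus E(T)$ with values consistent with the prefix built so far; then greedily extend, always choosing $i_{j+1}$ to be the \emph{smallest} available index $> i_j$ with $f(i_{j+1}) > f(i_j)$ that can still be completed to a full length-$k_0$ tuple inside $E(T_0)\setminus E(T)$. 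Add the resulting tuple to $T$, mark its indices used, and repeat until no such $i_1$ exists.

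For the easy conditions: Condition~\ref{en:cond-1} is immediate since by construction every tuple added to $T$ is an increasing subsequence of $f$ and tuples are pairwise disjoint (we remove indices from the available pool as we use them); and $E(T) \subseteq E(T_0)$ since we only ever use indices of $E(T_0)$. Condition~\ref{en:cond-2} follows from a standard charging/covering argument: when the procedure halts, every index of $E(T_0)$ is either in $E(T)$ or "blocked" — meaning it cannot start (or extend, at its position) any remaining full $k_0$-tuple inside the leftover index set. One then argues, roughly, that each tuple of $T_0$ must contain at least one index of $E(T)$: if a $T_0$-tuple $(a_1,\dots,a_{k_0})$ avoided $E(T)$ entirely, all its indices would be available at the end, but then $(a_1,\dots,a_{k_0})$ itself is a valid completion, contradicting that $a_1$ is blocked. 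Hence $|E(T_0)| \le k_0 |T|$, so $|T| \ge |T_0|/k_0$.

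Condition~\ref{en:cond-3} is the main obstacle and the heart of the lemma (the "interleaving" property $(\star)$). The plan is: suppose $(i_1,\dots,i_{k_0})$ and $(j_1,\dots,j_{k_0})$ are in $T$ with $i_1<j_1$, $i_\ell<j_\ell$, $i_{\ell+1}>j_{\ell+1}$, and suppose for contradiction $f(i_{\ell+1}) < f(j_{\ell+1})$ (equality being impossible among distinct-valued indices — or handled by tie-breaking). Since $i_1<j_1$ and the procedure processes starting indices in increasing order, the tuple containing $i_1$ was constructed \emph{before} the one containing $j_1$; in particular, when $j_{\ell+1}$ was chosen, the index $i_{\ell+1}$ was already used by $T$, so it was unavailable. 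The key claim is that $i_{\ell+1}$ would have been a \emph{legal and smaller} choice than $j_{\ell+1}$ at that step of constructing the $j$-tuple: it satisfies $i_{\ell+1} > j_\ell$ (because $j_\ell > i_\ell$... here one needs $j_\ell < i_{\ell+1}$, which should follow by combining $i_\ell < j_\ell$ with the structure, or by an inductive use of the property on the prefix), it satisfies $f(i_{\ell+1}) > f(j_\ell)$ since $f(i_{\ell+1}) < f(j_{\ell+1})$ is being assumed and... — actually the cleaner route is to show $(j_1,\dots,j_\ell, i_{\ell+1}, i_{\ell+2},\dots,i_{k_0})$ is a valid completion available at that moment, so greedy would have picked $i_{\ell+1}$ (or something $\le i_{\ell+1}$) rather than $j_{\ell+1} > i_{\ell+1}$, contradiction. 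Making this splice legal requires checking the value conditions $f(j_\ell) < f(i_{\ell+1}) < f(i_{\ell+2}) < \dots$, which is where the hypothesis $f(i_{\ell+1}) < f(j_{\ell+1})$ (or rather its negation of what we want) plus the chain $f(i_\ell) < f(i_{\ell+1})$ and the ordering of the processing must be combined carefully. I expect the bookkeeping of "which indices were available when" across the two construction epochs to be the delicate part, and I would isolate it as a sub-claim about monotonicity of the greedy choices over time.
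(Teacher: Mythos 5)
Your treatment of the algorithm description and of conditions \ref{en:cond-1} and \ref{en:cond-2} matches the paper. The gap is in condition \ref{en:cond-3}, and it is genuine: you are examining the wrong iteration and the wrong splice.

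You argue by looking at the moment the $j$-tuple (the one starting at $j_1$) is being constructed, and you try to show that greedy should have chosen something other than $j_{\ell+1}$ there. This fails for three compounding reasons. First, as you yourself note, $i_{\ell+1}$ (and likewise $i_{\ell+2},\dots,i_{k_0}$) is already in $E(T)$ at that moment, so the spliced completion $(j_1,\dots,j_\ell,i_{\ell+1},\dots,i_{k_0})$ was simply not a legal option; greedy cannot be faulted for failing to take an unavailable index. Second, the hypothesis of condition \ref{en:cond-3} is $i_{\ell+1} > j_{\ell+1}$, so $i_{\ell+1}$ is the \emph{larger} of the two; your sentence ``rather than $j_{\ell+1} > i_{\ell+1}$'' has the inequality reversed, and there is no reason greedy should prefer a larger index. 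Third, even ignoring availability, your splice need not be an increasing subsequence in values: the contradiction hypothesis $f(i_{\ell+1}) < f(j_{\ell+1})$ gives no comparison between $f(j_\ell)$ and $f(i_{\ell+1})$, so $f(j_\ell) < f(i_{\ell+1})$ is not guaranteed.

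The argument the paper uses (and the one that actually works) is the time-reversal of yours. Look at the \emph{earlier} iteration, when $i = i_1$ and the $i$-tuple is being built. At that point in the execution, nothing from $\{i_1,\dots,i_{k_0},j_1,\dots,j_{k_0}\}$ has been placed in $T$ yet (the $j$-tuple is built strictly later, since $j_1 > i_1$). Consider the splice $(i_1,\dots,i_\ell,j_{\ell+1},\dots,j_{k_0})$. It is index-increasing because $i_\ell < j_\ell < j_{\ell+1}$, and under the contradiction hypothesis $f(j_{\ell+1}) \ge f(i_{\ell+1}) > f(i_\ell)$ it is value-increasing as well. So at the step where the greedy procedure is choosing the $(\ell+1)$-st entry after having committed to $(i_1,\dots,i_\ell)$, the index $j_{\ell+1}$ is available, extendable, and \emph{smaller} than $i_{\ell+1}$ (since $j_{\ell+1} < i_{\ell+1}$ by hypothesis), so greedy would have taken $j_{\ell+1}$ instead. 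This is the contradiction. The key point your write-up misses is that all the bookkeeping about availability becomes trivial once you fix attention on the earliest relevant iteration rather than the latest.
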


	\begin{figure}[ht!]
		\begin{framed}
			\centering 
			
			\begin{minipage}{.98\textwidth}

				\noindent Subroutine $\GreedyDisjointTuples\hspace{0.05cm}(f, k_0, T_0)$
				\vspace{0.3cm}

				\noindent {\bf Input:} A function $f \colon [n] \to \R$, integer $k_0\in\N$, and a set $T_0$ of disjoint monotone subsequences of $f$ of length $k_0$.\\
				{\bf Output:} a set $T \subseteq [n]^{k_0}$ of disjoint monotone subsequences of $f$ of length $k_0$. 
				
				\begin{enumerate}
					\item Let $T = \emptyset$ and $i$ be the minimum element in $E(T_0)$. Repeat the following.					
					\begin{itemize}
						\item[i.] 
							Let $i_1 \gets i$. If there exists $j_2, \ldots, j_{k_0} \in E(T_0) \setminus E(T)$ such that $(i_1, j_2, \dots, j_{k_0})$ is an increasing subsequence of $f$, pick $i_2, \ldots, i_{k_0} \in E(T_0) \setminus E(T)$ recursively as follows: for $\ell = 2, \ldots, k_0$, let $i_{\ell}$ be the smallest element in $E(T_0) \setminus E(T)$ for which there exist $j_{\ell+1}, \dots, j_{k_0} \in E(T_0) \setminus E(T)$ such that $(i_1, \dots, i_{\ell}, j_{\ell+1}, \dots, j_{k_0})$ is an increasing subsequence of $f$.
						\item[ii.] 
							If $(i_1, \dots, i_{k_0})$ is a monotone subsequence found by (i), set $T \gets T \cup \{ (i_1, \dots, i_{k_0})\}$.
						\item[iii.] 
							Let $i$ be the next element of $E(T_0)\setminus E(T)$, if such an element exists; otherwise, proceed to \ref{itm:alg-rematch-end}.
					\end{itemize}
					\item \label{itm:alg-rematch-end}
						Output $T$.
				\end{enumerate}
			\end{minipage}
		\end{framed}\vspace{-0.2cm}
		\caption{Description of the $\GreedyDisjointTuples$ subroutine.} \label{fig:greedy}
	\end{figure}

	\begin{proofof}{Lemma~\ref{lem:rematching}}
		We show that the subroutine $\GreedyDisjointTuples(f, k_0, T_0)$, described in Figure~\ref{fig:greedy}, finds a set $T$ with $E(T) \subseteq E(T_0)$ satisfying properties \ref{en:cond-1}, \ref{en:cond-2}, and \ref{en:cond-3}. Property \ref{en:cond-1} is clear from the description of $\GreedyDisjointTuples(f,k_0, T_0)$. For \ref{en:cond-2}, suppose $|T| < |T_0|/k_0$, then, there exists a tuple $(i_1, \dots, i_{k_0}) \in T_0$ with $\{ i_1, \dots, i_{k_0}\} \cap E(T) = \emptyset$. Since $\GreedyDisjointTuples(f, k_0, T_0)$ increases the size of $T$ throughout the execution, $\{ i_1, \dots, i_{k_0} \} \cap T = \emptyset$ at every point in the execution of the algorithm. This is a contradiction; when $i = i_1$, a monotone subsequence disjoint from $T$ would have been found, and $i_1$ included in $T$. Finally, for \ref{en:cond-3}, consider the iteration when $i = i_1$, and note that at this moment, $T \cap \{ i_1, \dots, i_{k_0}, j_1, \dots, j_{k_0}\} = \emptyset$. Suppose that $i_{\ell} < j_{\ell}$, $j_{\ell+1} < i_{\ell+1}$; if $f(j_{\ell+1}) \geq f(i_{\ell+1})$, then $(i_1, \dots, i_{\ell}, j_{\ell+1}, \dots, j_{k_0})$ is an increasing subsequence in $E(T_0) \setminus E(T)$, which means that $j_{\ell+1}$ would have been preferred over $i_{\ell+1}$, a contradiction.
	\end{proofof}

	\begin{definition}[$c$-gap]
		Let $(i_1, \dots, i_{k_0})$ be a monotone subsequence of $f$ and let $c \in [k_0-1]$. We say that $(i_1, \dots, i_{k_0})$ is a \emph{$c$-gap subsequence} if $c$ is the smallest integer such that $i_{c+1} - i_c \geq i_{b+1} - i_b$ for all $b \in [{k_0}-1]$. 
	\end{definition}

	Note that for a set $T$ of disjoint length-$k_0$ monotone subsequences of $f$, we may partition the $k_0$-tuples of $T$ into $(T_1, \dots, T_{k_0-1})$ where for each $c \in [k_0-1]$, $T_c$ holds the $c$-gap monotone subsequences of $T$. As these sets form a partition of $T$, the following lemma is immediate from Lemma~\ref{lem:rematching}.  

	\begin{lemma}\label{lem:big-split}
		Let $f \colon [n] \to \R$, and let $T_0$ be a set of disjoint length-$k_0$ monotone subsequences of $f$. Then there exist $c \in [k_0-1]$ and a family $T \subseteq [n]^{k_0}$ of disjoint monotone subsequences of $f$, with $E(T) \subseteq E(T_0)$ such that the following holds.
		\begin{enumerate}
			\item
				The subsequences in $T$ are all $c$-gap subsequences.
			\item
				$|T| \ge |T_0| / k_0^2$.
			\item\label{itm:property-greedy}
				For any two $(i_1, \dots, i_{k_0}), (j_1, \dots, j_{k_0}) \in T$ and any $\ell \in [k_0 - 1]$, if $i_1 < j_1$, $i_{\ell} < j_{\ell}$ and $i_{\ell+1} > j_{\ell+1}$ then $f(i_{\ell+1}) > f(j_{\ell+1})$.
		\end{enumerate}
	\end{lemma}
 
\subsection{Growing suffixes and splittable intervals}

	We now proceed to set up notation and prepare for the main structural theorem for sequences $f \colon [n] \to \R$ which are $\eps$-far from $(12\dots k)$-free. In order to simplify the presentation of the subsequent discussion, consider fixed $k \in \N$ and $\eps \in (0, 1)$, as well as a fixed sequence $f \colon [n] \to \R$ which is $\eps$-far from $(12\dots k)$-free. We will, at times, suppress polynomial factors in $k$ by writing $\poly(k)$ to refer to a large enough polynomial in $k$, whose degree is a large enough universal constant. 
	By Observation~\ref{obs:disjoint-families} and Lemma~\ref{lem:big-split}, there exists an integer $c \in [k-1]$ and a set $T$ of disjoint monotone subsequences of $f$ which have a $c$-gap, satisfying $|T| \geq \eps n / \poly(k)$ and property \ref{itm:property-greedy} from Lemma~\ref{lem:big-split}. For the rest of the subsection, we consider a fixed setting of such $c \in [k-1]$ and set $T$. 

	We will show (in Theorem~\ref{thm:main-structure}) that one of the following two possibilities holds.
	Either there is a large set of what we call \emph{growing suffixes} (see Definition~\ref{def:growing-suffixes} for a formal definition), or there are disjoint intervals which we call \emph{splittable} (see Definition~\ref{def:splittable} for a formal definition). Intuitively, a growing suffix will be given by the suffix $(a, n]$ and will have the property that by dividing $(a, n]$ into $\Theta(\log_2(n - a))$ segments of geometrically increasing lengths, there are many monotone subsequences $(i_1, \dots, i_k)$ of $f$ lying inside $(a, n]$ where each $i_t$ belongs to a different segment. In the other case, an interval $[a,b]$ is called splittable if it can be divided into three sub-intervals of roughly equal size, 
	which we refer to as the left, middle, and right intervals, with the following property: the left interval contains a large set $T_{L}$ of $(12\dots c)$-patterns, the right interval contains a large set $T_R$ of $(12\dots (k-c))$-patterns, and combining any $(12\dots c)$-pattern in $T_L$ with any $(12\dots (k-c))$-pattern in $T_R$ yields a $(12\dots k)$-pattern.

	For each index $a \in [n]$, let $\eta_a = \lceil \log_2(n-a) \rceil$. Let $S_1(a), \dots, S_{\eta_a}(a) \subseteq [n]$ be disjoint intervals given by $S_t(a) = [a + 2^{t-1}, a + 2^t) \cap [n]$. The collection of intervals $S(a) = (S_t(a) : t \in [\eta_a])$ partitions the suffix $(a, n]$ into intervals of geometrically increasing lengths (except possibly the last interval, which may be shorter), and we refer to the collection $S(a)$ as the \emph{growing suffix} at $a$. 

	\begin{definition}\label{def:growing-suffixes}
		Let $\alpha, \beta \in [0,1]$.
		We say that an index $a \in [n]$ starts an \emph{$(\alpha, \beta)$-growing suffix} if, when considering the collection of intervals $S(a) = \{ S_t(a) : t \in [\eta_a]\}$, for each $t \in [\eta_a]$ there is a subset $D_t(a) \subseteq S_t(a)$ of indices 
		such that the following properties hold. 
		\begin{enumerate} 
			\item 
				We have $|D_t(a)|/|S_t(a)| \le \alpha$ for all $t \in [\eta_a]$, and
				$\sum_{t=1}^{\eta_a} |D_t(a)|/|S_t(a)| \geq \beta$.
				\label{en:grow-cond-3}
			\item 
				For every $t, t' \in [\eta_a]$ where $t < t'$, if $b \in D_t(a)$ and $b' \in D_{t'}(a)$, then $f(b) < f(b')$. 
				\label{en:grow-cond-2}
		\end{enumerate}
	\end{definition}

	Intuitively, our parameter regime will correspond to the case when $\alpha$ is much smaller than $\beta$, specifically, $\alpha \leq \beta / \poly(k)$, for a sufficiently large-degree polynomial in $k$. If $a \in [n]$ starts an $(\alpha, \beta)$-growing suffix with these parameters, then the $\eta_a$ segments, $S_1(a), \dots, S_{\eta_a}(a)$, contain many monotone subsequences of length $k$ which are algorithmically easy to find (given access to the start $a$).
	Indeed, by (\ref{en:grow-cond-2}), it suffices to find a $k$-tuple $(i_1, \ldots, i_k)$ such that $i_1 \in D_{t_1}, \ldots, i_k \in D_{t_k}$, for some $t_1, \ldots, t_k \in [\eta_a]$ with $t_1 < \ldots < t_k$ (see Figure~\ref{fig:growing-suffix}).
	By (\ref{en:grow-cond-3}), the sum of densities is at least $\beta$, yet each density is less than $\alpha \le \beta / \poly(k)$. In other words, the densities of $D_{1}(a), \dots, D_{\eta_a}(a)$ within $S_{1}(a), \dots, S_{\eta_a}(a)$, respectively, must be spread out, which implies, intuitively, that there are many ways to pick suitable $i_1, \ldots, i_k$.

	\begin{figure}
		\begin{picture}(300, 150)
		\put(0,0){\includegraphics[width=\linewidth]{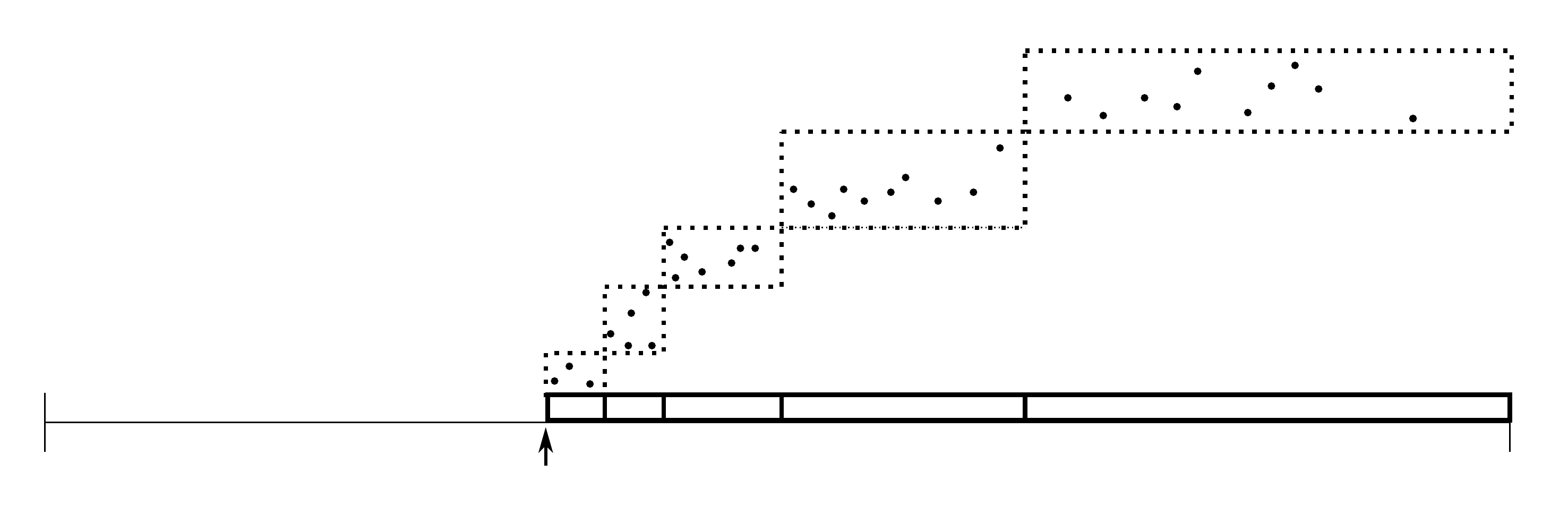}}
		\put(162, 5){$a$}
		\end{picture}
		\caption{Depiction of an $(\alpha, \beta)$-growing suffix at index $a \in [n]$ (see Definition~\ref{def:growing-suffixes}). The labeled segments $S_t(a)$ are shown, as well as the subsets $D_t(a)$. Notice that for all $j$, all the elements in $D_t(a)$ lie below those in $D_{t+1}(a)$. In Section~\ref{sec:case1}, we show that if an algorithm knows that $a$ starts an $(\alpha, \beta)$-growing suffix, for $\alpha \leq \beta / \poly(k)$, then sampling $\poly(k) / \beta$ many random indices from each $S_t(a)$ finds a monotone pattern with probability at least $0.9$.}
		\label{fig:growing-suffix}
	\end{figure}

	\begin{definition}\label{def:splittable}
		Let $\alpha,\beta \in (0,1]$ and $c \in [k_0-1]$. Let $I \subseteq [n]$ be an interval, let $T \subseteq I^{k_0}$ be a set of disjoint, length-$k_0$ monotone subsequences of $f$ lying in $I$, and define
		\begin{align*} 
			T^{(L)} &= \{ (i_1, \dots, i_c) \in I^c : (i_1, \dots, i_c) \text{ is a prefix of a $k_0$-tuple in $T$}\}, \text{ and }\\
			T^{(R)} &= \{ (j_1, \dots, j_{k_0-c}) \in I^{k_0-c} : (j_1, \dots, j_{k_0-c}) \text{ is a suffix of a $k_0$-tuple in $T$}\}.
		\end{align*}
		We say that the pair $(I, T)$ is \emph{$(c, \alpha,\beta)$-splittable} if $|T|/|I| \geq \beta$; $f(i_c) < f(j_1)$ for every $(i_1, \dots, i_c) \in T^{(L)}$ and $(j_1, \dots, j_{k_0-c}) \in T^{(R)}$; and there is a partition of $I$ into three adjacent intervals $L, M, R \subseteq I$ (that appear in this order, from left to right) of size at least $\alpha |I|$, satisfying $T^{(L)} \subseteq L^c$ and $T^{(R)} \subseteq R^{k_0-c}$. 
		
		A collection of disjoint interval-tuple pairs $(I_1, T_1), \dots, (I_s, T_s)$ is called a \emph{$(c, \alpha,\beta)$-splittable collection of $T$} if each $(I_j, T_j)$ is $(c, \alpha, \beta)$-splittable and the sets $(T_j : j \in [s])$ partition $T$.
	\end{definition}
	
	\begin{figure}
		\begin{picture}(300, 120)
		\put(0,0){\includegraphics[width=\linewidth]{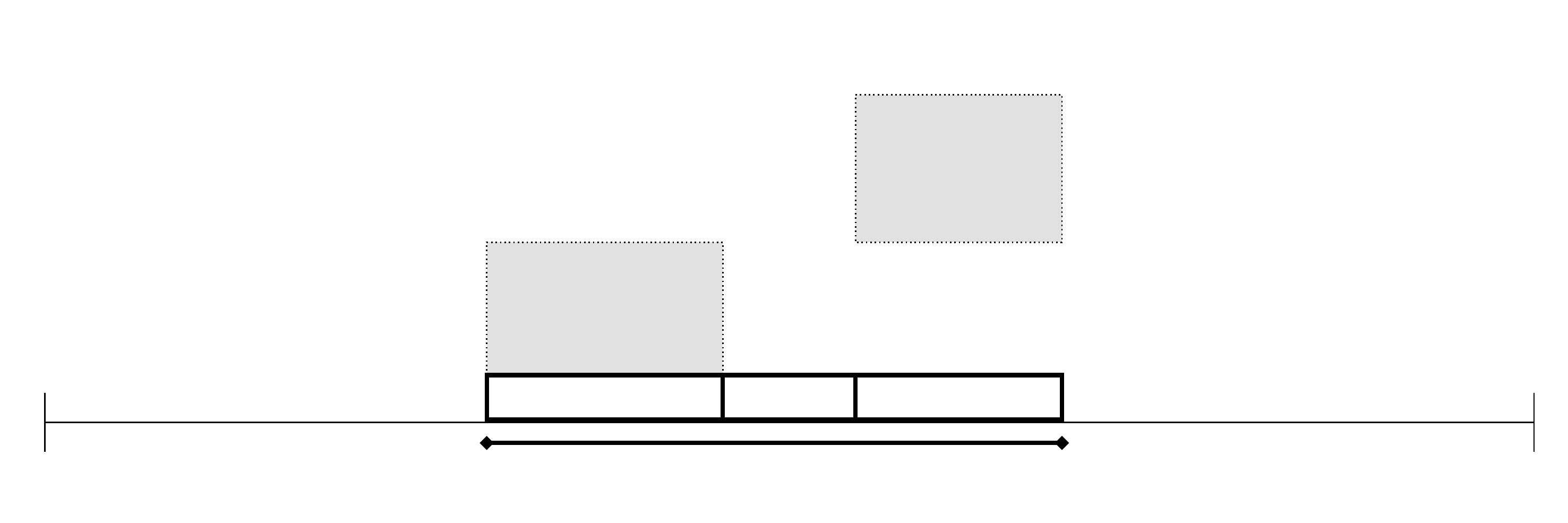}}
		\put(235, 10){$I$}
		\put(180, 34){$L$}
		\put(285, 34){$R$}
		\put(235, 34){$M$}
		\put(180, 60){$T^{(L)}$}
		\put(285, 105){$T^{(R)}$}
		\end{picture}
		\caption{Depiction of a $(c, \alpha,\beta)$-splittable interval, as defined in Definition~\ref{def:splittable}. The interval $I$ is divided into three adjacent intervals, $L, M$, and $R$, and the disjoint monotone sequences are divided so that $T^{(L)}$ contains the indices $(i_1, \dots, i_c)$ and $T^{(R)}$ contains the indices $(i_{c+1}, \dots, i_k)$. Furthermore, we have that every $(i_1, \dots, i_c) \in T^{(L)}$ and $(j_{c+1}, \dots, j_k) \in T^{(R)}$ have $f(i_c) < f(j_{c+1})$, so that any monotone pattern of length $c$ in $E(T^{(L)})$ may be combined with any monotone pattern of length $k-c$ in $E(T^{(R)})$ to obtain a monotone pattern of length $k$ within $I$.}
		\label{fig:splittable}
	\end{figure}

	We now state the main theorem of this section, whose proof will be given in Section~\ref{sec:proof1}. 
	\begin{theorem}\label{thm:main-structure}
		Let $k, k_0 \in \N$ be positive integers satisfying $1\leq k_0\leq k$, and let $\delta\in (0,1)$ and let $C > 0$. Let $f \colon [n] \to \R$ be a function and let $T_0 \subseteq [n]^{k_0}$ be a set of $\delta n$ disjoint monotone subsequences of $f$ of length $k_0$. 
		Then there exists an $\alpha \geq \Omega(\delta/k^5)$ such that at least one of the following conditions holds.
		\begin{enumerate}
			\item \label{en:suffix} 
				Either there exists a set $H \subseteq [n]$, of indices that start an $(\alpha, C k \alpha)$-growing suffix, satisfying $\alpha |H| \geq \delta n / \poly(k, \log(1/\delta))$; or
			\item \label{en:split} 
				There exists an integer $c$ with $1 \le c < k_0$, a set $T$, with $E(T) \subseteq E(T_0)$, of disjoint length-$k_0$ monotone subsequences, and a $(c, 1/(6k),\alpha)$-splittable collection of $T$, of disjoint interval-tuple pairs $(I_1, T_1), \dots, (I_s, T_s)$, such that 
				\[
					\alpha \sum_{h=1}^s |I_h| \geq \frac{|T_0|}{\poly(k, \log(1/\delta))}.
				\]
		\end{enumerate}
	\end{theorem}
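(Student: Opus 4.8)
The plan is to analyze the "cutting" experiment sketched in the introduction and perform a case split based on whether the contribution of the density sum is spread out (growing suffixes) or concentrated (splittable intervals). First I would apply Lemma~\ref{lem:big-split} to $T_0$ to obtain an integer $c \in [k_0-1]$ and a set $T$ of disjoint $c$-gap monotone subsequences with $E(T) \subseteq E(T_0)$, $|T| \ge |T_0|/k_0^2 = \delta n / k_0^2$, and the interleaving property~\ref{itm:property-greedy}. Fix $\eta = O(\log n)$ and, for each $k_0$-tuple in $T$, record its distance profile and recall that its $c$-th coordinate is (approximately) maximal. For each $d \in [\eta]$ let $T_{c,d}$ be the subtuples of $T$ whose $c$-th distance coordinate equals $d$; since $|T| = \sum_d |T_{c,d}|$, there is some scale $d$ capturing a $\Omega(1/\log n)$-fraction, but more importantly I want to reason about the full sum. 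For $\bell \sim [n]$, let $\delta_d(\bell)$ be the density of tuples in $T_{c,d}$ cut at $c$ \emph{with slack} by $\bell$ and lying in $[\bell - k\cdot 2^{d+1}, \bell + k\cdot 2^{d+1}]$. The slack version of~\eqref{eq:prob} gives $\Ex_{\bell}\big[\sum_{d\in[\eta]} \delta_d(\bell)\big] \gsim |T|/n \gsim \delta/k_0^2$, so by averaging there is a set $G \subseteq [n]$ of density $\gsim \delta/\poly(k)$ such that every $\ell \in G$ satisfies $\sum_{d\in[\eta]} \delta_d(\ell) \gsim \delta/\poly(k)$; set $\beta$ to be this common lower bound (or a suitable fraction of it) and take $\alpha = \beta/(Ck)$, which will satisfy $\alpha \ge \Omega(\delta/k^5)$ after fixing the polynomial degrees.

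Next I split each $\ell \in G$ into two types. Say $\ell$ is \emph{spread} if $\delta_d(\ell) \le \alpha = \beta/(Ck)$ for every $d \in [\eta]$, and \emph{concentrated} otherwise, i.e.\ there is $d^* = d^*(\ell)$ with $\delta_{d^*}(\ell) > \alpha$. If at least half of $G$ (weighted appropriately) is of the spread type, I claim each such $\ell$ starts an $(\alpha, Ck\alpha)$-growing suffix: set $D_t(\ell)$ to be the indices in $S_t(\ell)$ that serve as $(c{+}1)$-th coordinates of tuples in $T_{c, t'}(\ell)$ for the appropriate scale $t'$ aligned with $t$; condition~\ref{en:grow-cond-3} follows because the per-scale densities are $\le \alpha$ and sum to $\ge \beta \ge Ck\alpha$, while condition~\ref{en:grow-cond-2} (monotonicity across segments) is exactly the consequence of property~\ref{itm:property-greedy}/($\star$) described in the introduction: larger gap scale forces a larger $f$-value at the $(c{+}1)$-th coordinate, after a $\Theta(\log k)$ shift in the scale indexing that I absorb into constants. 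This yields case~\ref{en:suffix} with $H$ the set of spread indices, since $\alpha|H| \gsim \alpha \cdot |G| \gsim \delta n/\poly(k)$ up to the $\log(1/\delta)$ losses from the dyadic bucketing.

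Otherwise at least half of $G$ is concentrated. For each concentrated $\ell$, the scale $d^* = d^*(\ell)$ gives a single interval $I(\ell) = [\ell - k\cdot 2^{d^*+1}, \ell + k\cdot 2^{d^*+1}]$ carrying $\delta_{d^*}(\ell) > \alpha$ density of tuples of $T_{c,d^*}$ cut with slack. Taking $L, M, R$ to be the three roughly-equal thirds of $I(\ell)$ determined by the slack cut, the prefixes of these tuples lie in $L$, the suffixes lie in $R$, and property~\ref{itm:property-greedy} (plus $f(i_c) < f(i_{c+1})$ within each tuple and the interleaving across tuples) guarantees $f(i_c) < f(j_1)$ for any prefix/suffix pair, so $(I(\ell), T_{\ell})$ is $(c, 1/(6k), \alpha)$-splittable for the sub-collection $T_\ell$ of tuples witnessing this. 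The remaining work is to extract from $\{I(\ell)\}$ a \emph{disjoint} sub-collection carrying comparable total weight — a standard greedy/Vitali-type covering argument, noting that each point of $[n]$ lies in $O(\log n)$ of the $I(\ell)$'s (one per scale), loses only a $\poly(\log n)$ factor which is absorbed; restricting the $T_\ell$'s to be disjoint and re-applying the greedy rematching if necessary preserves the splittable structure. Then $\alpha \sum_h |I_h| \gsim \sum_h \delta_{d^*}(\ell_h)|I_h| \gsim |T| \gsim |T_0|/\poly(k)$, giving case~\ref{en:split}.

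The main obstacle I anticipate is the bookkeeping in the concentrated case: turning the overlapping family of intervals $\{I(\ell)\}$ into a genuinely disjoint splittable collection whose tuple-sets partition a large sub-family of $T$, while simultaneously ensuring each selected $(I, T_I)$ retains \emph{all three} splittability requirements (density $\ge \alpha$, the value separation $f(i_c) < f(j_1)$, and the three-way partition into $L, M, R$ of size $\ge |I|/(6k)$). The value-separation requirement is the subtle one — it is not automatic after discarding tuples, and keeping it is exactly where property~\ref{itm:property-greedy} and the "with slack" strengthening of the cut are used; care is needed that the $\Theta(\log k)$ scale shift needed to invoke ($\star$) does not shrink the effective interval below the $\alpha|I|$ threshold, which is why $\alpha$ carries a $1/k^5$ rather than $1/k^2$ factor.
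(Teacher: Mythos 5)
Your proposal has the right high-level architecture --- apply Lemma~\ref{lem:big-split}, cut with slack, and split on spread versus concentrated density --- but the averaging step does not give what you claim, and this breaks the proof. Writing $v(\ell) = \sum_d \delta_d(\ell)$, you have $\Ex_\ell[v]\gsim\delta/k^2$, but $v(\ell)$ can be as large as $\Theta(\log n)$ (each $\delta_d(\ell)$ is $O(1)$ and there are $\Theta(\log n)$ scales), so Markov only gives $\Pr_\ell[v\ge\delta/(2k^2)]\gsim \delta/(k^2\log n)$. Thus $|G|$ can be a $\log n$-factor smaller than your density bound states, and the final guarantee would inherit a $\log n$ loss, contradicting the $\poly(k,\log(1/\delta))$ denominator. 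The paper's remedy (Lemma~\ref{lem:easy-growing}) is a preprocessing pass that repeatedly peels off tuples $(i_1,\dots,i_{k_0})$ whose $i_c$ already starts a $(1,Ck)$-growing suffix, until $v(\ell, V)\le O(k\log k)$ for \emph{every} $\ell$; this compressed upper bound is what makes the subsequent dyadic bucketing range over only $O(\log k + \log(1/\delta))$ levels. Your proposal has no analogue of this step. Relatedly, you treat $\alpha$ as a fixed quantity $\asymp\delta/\poly(k)$; in the paper, $\alpha$ is tied to the level $2^{-j^\ast}$ of the dominant bucket, and the bound comes from $2^{-j^\ast}\cdot|B_{j^\ast}|\ge\mu n$. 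Following your accounting as written, $\alpha|G|\gsim(\delta/\poly(k))(\delta n/\poly(k))=\delta^2 n/\poly(k)$, one factor of $\delta$ short of the requirement.

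In the concentrated case you assert that the Vitali-type covering ``loses only a $\poly(\log n)$ factor which is absorbed,'' but that factor is exactly what must not appear; the paper's \texttt{GreedyDisjointIntervals} avoids it by operating inside a single bucket (constant density level) and greedily picking intervals in decreasing order of scale $q(\ell)$, charging the removed indices against $|B_{j^\ast}|/2$ to achieve a constant-factor loss. Finally, your claim that property~\ref{itm:property-greedy} together with internal monotonicity forces $f(i_c) < f(j_1)$ for \emph{every} prefix/suffix pair is false: property~\ref{itm:property-greedy} only constrains crossing tuples, and two tuples cut at the same scale that do not cross at index $c+1$ can easily have $f(i_c) > f(j_{c+1})$. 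The paper enforces the separation via a median recombination inside \texttt{GreedyDisjointIntervals}: find $\nu$ with at least half the tuples having $f(i_c)\le\nu$ and at least half having $f(i_{c+1})>\nu$, then re-pair the low prefixes with the high suffixes. This step is essential and is not a consequence of the interleaving property you cite; your closing remark that this is ``where property~\ref{itm:property-greedy} and the slack are used'' misattributes the mechanism.
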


	We remark that the above theorem is stated with respect to the two parameters, $k_0$ and $k$, for ease of applicability. In particular, in the next section, we will apply Theorem~\ref{thm:main-structure} multiple times, and it will be convenient to have $k$ be fixed and $k_0$ be a varying parameter. In that sense, even though the monotone subsequences in question have length $k_0$, the relevant parameters that Theorem~\ref{thm:main-structure} lower bounds only depend on $k$.

	Consider the following scenario: $f \colon [n] \to \R$ is a sequence which is $\eps$-far from $(12\dots k)$-free, so by Observation~\ref{obs:disjoint-families}, there exists a set $T_0$ of disjoint, length-$k$ monotone subsequences of $f$ of size at least $\eps n / k$. Suppose that upon applying Theorem~\ref{thm:main-structure} with $k_0 = k$ and $\delta = \eps / k$, (\ref{en:split}) holds. Then, there exists a $(c, 1/(6k),\alpha)$-splittable collection of a large subset of disjoint, length-$k$ monotone subsequences $T$ into disjoint interval-tuple pairs $(I_1, T_1), \dots, (I_s, T_s)$. For each $h \in [s]$, the pair $(I_h, T_h)$ is $(c, 1/(6k),\alpha)$-splittable, so let $I_h = L_h \cup M_h \cup R_h$ be the left, middle, and right intervals of $I_h$; furthermore, let $T_{h}^{(L)}$ be the $(12\dots c)$-patterns in $L_h$ which appear as prefixes of $T_h$, and $T_h^{(R)}$ be the $(12\dots (k-c))$-patterns in $R_h$ which appear as suffixes of $T_h$ in $R_h$. Thus, the restricted function $f_{|L_h} \colon L_h \to \R$ contains $|T_h|$ disjoint $(12\dots c)$-patterns, and $f_{|R_h} \colon R_h \to \R$ contains $|T_h|$ disjoint $(12\dots (k-c))$-patterns. This naturally leads to a recursive application of Theorem~\ref{thm:main-structure} to the function $f_{|L_h}$ with $k_0 = c$, and to the function $f_{|R_h}$ with $k_0 = k-c$, for all $h \in [s]$. 

\subsection{Tree descriptors}\label{ssec:tree:descriptors}

	We now introduce the notion of \emph{tree descriptors}, which will summarize information about a function $f$ after applying Theorem~\ref{thm:main-structure} recursively. Then, we state the main structural result for functions that are $\eps$-far from $(12\dots k)$-free. The goal is to say that every function which is $\eps$-far from $(12\dots k)$-free either has many growing suffixes, or there exists a tree descriptor which describes the behavior of many disjoint, length-$k$ monotone subsequences in the function. The following two definitions make up the notion of a tree descriptor representing a function. Figure~\ref{fig:tree-descriptor} shows an example of Definitions~\ref{def:tree-descriptor} and~\ref{def:tree-rep}.

	\begin{definition}\label{def:tree-descriptor}
		Let $k_0 \in \N$ and $\delta \in (0,1)$.
		A \emph{$(k_0, \delta)$-weighted-tree} is a pair $(G, \varrho)$, where
		\begin{itemize}
			\item 
				$G = (V, E, w)$ is a rooted binary tree with edges labeled by a function $w \colon E \to \{0,1\}$. Every non-leaf node has two outgoing edges, $e_0, e_1$ with $w(e_0) = 0$ and $w(e_1) = 1$. The set of leaves $V_{\ell} \subseteq V$ satisfies $|V_{\ell}| = k_0$, and $\leq_{G}$ is the total order defined on the leaves by the values of $w$ on a root-to-leaf path.\footnote{Specifically, for $l_1, l_2 \in V_{\ell}$ at depths $d_1$ and $d_2$, with root to leaf paths $(r, u^{(1)}, \dots, u^{(d_1-1)}, l_1)$ and $(r, v^{(1)}, \dots, v^{(d_2-1)}, l_2)$, then $l_1 \leq_{G} l_2$ if and only if $(w(r, u^{(1)}), w(u^{(1)}, u^{(2)}), \dots, w(u^{(d_1-1)}, l_1)) \leq (w(r, v^{(1)}), w(v^{(1)}, v^{(2)}), \dots, w(v^{(d_2-1)}, l_2))$ in the natural partial order on $\{0,1\}^*$.} 
			\item 
				$\varrho \colon V \to [\lceil \log(1 / \delta) \rceil]$ is a function that assigns a positive integer to each node of $G$.
		\end{itemize}
	\end{definition}

	In the next definition, we show how we use weighted trees to represent a function $f$ and a set of disjoint, length-$k_0$ monotone subsequences. 
	\begin{definition}\label{def:tree-rep}
		Let $k, k_0 \in \N$ be such that $1 \leq k_0 \leq k$, let $\alpha \in (0,1)$, let $I \subseteq \N$ be an interval, and let $f \colon I \to \R$ be a function. Let $T \subseteq I^{k_0}$ be a set of disjoint monotone subsequences of $f$.
		A triple $(G, \varrho, \sfI)$ is called a $(k, k_0, \delta)$-\emph{tree descriptor}\footnote{We shall sometimes refer to this as a $k_0$-tree descriptor, in particular when $k, \delta$ are not crucial to the discussion.} of $(f,T,I)$, if $(G, \varrho)$ is a $(k_0, \delta)$-weighted tree, $\sfI$ is a function $\sfI \colon V \rightarrow \calP(\calI)$ (where $V = V(G)$), and the following recursive definition holds. 
		\begin{enumerate}
		\item If $k_0 = 1$ (so $T \subseteq I$),
		\begin{itemize}
			\item 
				The graph $G= (V, E, w)$ is the rooted tree with one node, $r$, and no edges.
			\item 
				The function $\varrho \colon V \to [\lceil \log(1 / \delta) \rceil]$ (simply mapping one node) satisfies $ 2^{-\varrho(r)} \le |T|/|I| \le 2^{-\varrho(r) + 1}$.
			\item 
				The map $\sfI \colon V \to \calS(I)$ is given by $\sfI(r) = \{ \{ t \} : t \in T\}$.
		\end{itemize}
		\item If $k_0 > 1$, 
		\begin{itemize}
			\item 
				The graph $G = (V, E, w)$ is a rooted binary tree with $k_0$ leaves. We refer to the root by $r$, the left child of the root (namely, the child incident with the edge given $0$ by $w$) by $v_{\sf left}$, and the right child of the root (the child incident with the edge given $1$) by $v_{\sf right}$. Let $c$ be the number of leaves in the subtree of $v_{\sf left}$, so $v_{\sf right}$ has $k_0 - c$ leaves in its subtree.
			\item 
				Write $\sfI(r) = \{I_1, \dots, I_s\}$. Then $I_1, \dots, I_s$ are disjoint sub-intervals of $I$, and, setting $T_i = (I_i)^{k_0} \cap T$, the pairs $(I_1, T_1), \dots, (I_s, T_s)$ form a $(c, 1/(6k), 2^{-\varrho(r)})$-splittable collection of $T$, and
				\[
					2^{-\varrho(r)} \sum_{h = 1}^s |I_h| \ge \frac{|T|}{\poly(k, \log(1/\delta))^k}.
				\]
			\item 
				For each $h\in [s]$ there exists a partition $(L_h, M_h, R_h)$ of $I_h$ that satisfies Definition~\ref{def:splittable}, such that the sets $T_h^{(L)}$, of prefixes of length $c$ of subsequences in $T_h$, and $T_h^{(R)}$, of suffixes of length $k_0 - c$ of subsequences in $T_h$, satisfy $T_h^{(L)} \subseteq (L_h)^c$ and $T_h^{(R)} \subseteq (R_h)^{k_0 - c}$. Moreover, the following holds.
				
				The tuple $(G_{\sf left}, \varrho_{\sf left}, \sfI_{h, \, \sf left})$ is a $(k, c, \delta)$-tree descriptor of $f$, $T_h^{(L)}$, and $L_h$, where $G_{\sf left}$ is the subtree rooted at $v_{\sf left}$, $\varrho_{\sf left}$ is the restriction of $\varrho$ to the subtree $G_{\sf left}$, and $\sfI_{h,\, \sf left}$ is defined by $\sfI_{h,\, \sf left}(v) \eqdef \{J \in \sfI(v) \colon J \subseteq L_h\}$ for all $v \in G_{\sf left}$. 

				Analogously, the tuple $(G_{\sf right}, \varrho_{\sf right}, \sfI_{h,\, \sf right})$ is a $(k,k_0 - c, \delta)$-tree descriptor of $f$, $T_h^{(R)}$, and $R_h$, where $G_{\sf right}, \varrho_{\sf right}, \sfI_{h,\, \sf right}$ are defined analogously.
		\end{itemize}
	\end{enumerate}
	\end{definition}

	We remark that it is \emph{not} the case that for every function $f \colon I \to \R$ defined on an interval $I$, and for every $T \subseteq I^{k_0}$ which is a set of disjoint, length-$k_0$ monotone subsequences of $f$, there must exist a $k_0$-tree descriptor which represents $(f, T,I)$. 
	The goal will be to apply Theorem~\ref{thm:main-structure} recursively whenever we are in (\ref{en:split}), and to find a sufficiently large set $T$ of disjoint length-$k$ monotone subsequences, as well as a $k$-tree descriptor which represents $(f, T,I)$.  

	\begin{figure}
		\begin{picture}(300, 280)
		\put(0,0){\includegraphics[width=\linewidth]{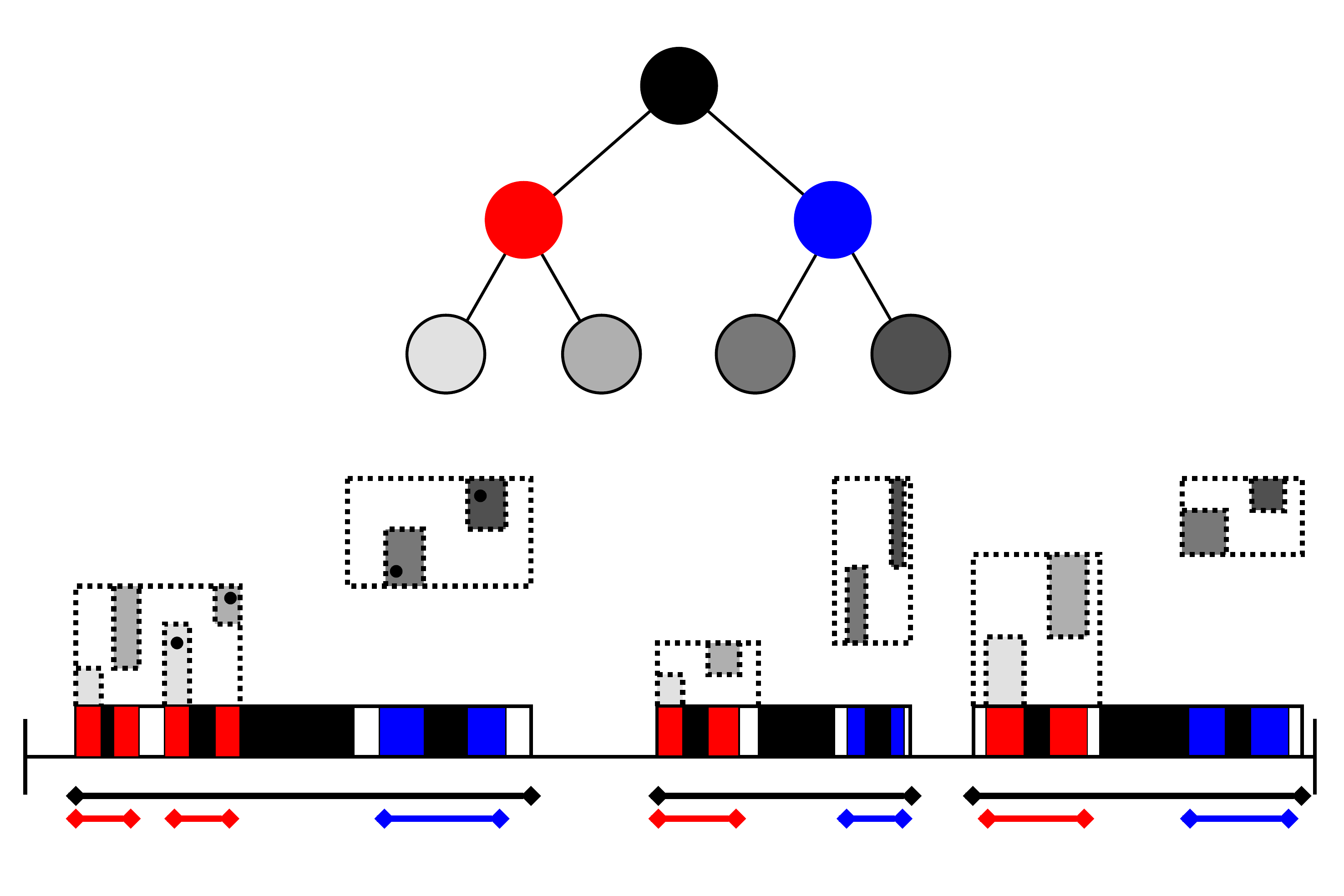}}
		\put(154, 187){$1$}
		\put(209, 187){$2$}
		\put(262, 187){$3$}
		\put(317, 187){$4$}
		\put(236, 282){\textcolor{white}{$r$}}
		\put(180, 235){$v_0$}
		\put(288, 235){\textcolor{white}{$v_1$}}
		\put(60, 39){$i_1$}
		\put(78, 39){$j_2$}
		\put(138, 39){$l_3$}
		\put(168, 39){$h_4$}
		\end{picture}
			\caption{Depiction of a tree descriptor $(G, \varrho, \sfI)$ representing $(f, T, I)$, as defined in Definitions~\ref{def:tree-descriptor} and~\ref{def:tree-rep}. The graph $G$ displayed above is a rooted tree with four leaves, which are ordered and labeled left-to-right. The root node $r$, filled in black, has its corresponding intervals from $\sfI(r)$ shown below the sequence as three black intervals. Each of the black intervals in $\sfI(r)$ is a $(2, \alpha,\beta)$-splittable interval, for $\alpha \approx 1/3$ and $\beta \geq 1/6$. Then, the root has the left child $v_0$, filled in red, and the right child $v_1$, filled in blue. The red intervals are those belonging to $\sfI(v_0)$, and the blue intervals are those belonging to $\sfI(v_1)$. Each black interval in $\sfI(r)$ has a left part, which contains intervals in $\sfI(v_0)$, and a right part, which contains intervals in $\sfI(v_1)$. The red and blue intervals in $\sfI(v_0)$ and $\sfI(v_1)$ are also $(1, \alpha,\beta)$-splittable, and the left part of the red intervals contains indices which will form the 1 in the monotone pattern of length $4$, and the right part of the red intervals contains indices which will form the 2. Likewise, the left part of blue intervals will contain the indices corresponding to 3, and the right part of the blue intervals will contain indices corresponding to 4. The regions where the indices from $T$ lie are shown above the sequence, where the indices 1--4 of some monotone pattern in $T$ lie in regions which are progressively darker. In order to see how a monotone subsequence may be sampled given that $(G, \ell, \sfI)$ is a tree descriptor for $(f, T, I)$ with sufficiently large $T$, consider indices $i_1$ and $j_2$ that belong to some subsequences from $T$, and lie in different shaded regions of the same red interval, within a black interval; and furthermore, $l_3$ and $h_4$ belong to some subsequence from $T$, and lie in different shaded regions of the same blue interval, within the same black interval as $i_1$ and $j_2$; then, the subsequence $(i_1, j_2, l_3, h_4)$ is a monotone subsequence even though $(i_1, j_2, l_3, h_4) \notin T$.
		}
		\label{fig:tree-descriptor}
	\end{figure}

\subsection{The structural dichotomy theorem}

	We are now in a position to state the main structural theorem of far-from-$(12\dots k)$-free sequences, which guarantees that every far-from-$(12\dots k)$-free sequence either has many growing suffixes, or can be represented by a tree descriptor. The algorithm for finding a $(12\dots k)$-pattern will proceed by considering the two cases independently. The first case, when a sequence has many growing suffixes, is easy for algorithms; we will give a straight-forward sampling algorithm making roughly $O_k(\log n/\eps)$ queries. The second case, when a sequence is represented by a tree descriptor is the ``hard'' case for the algorithm.

	\begin{theorem}[Main structural result]
	  \label{thm:tree}
		Let $k \in \N$, $\eps > 0$, and let $f\colon [n] \to \R$ be a function which is $\eps$-far from $(12 \dots k)$-free. Then one of the following holds, where $C > 0$ is a large constant.
		\begin{itemize}
			\item
				There exists a parameter $\alpha \ge \eps / \poly(k, \log(1/\eps))^k$, and a set $H \subseteq [n]$ of indices which start an $(\alpha, C k \alpha)$-growing suffix, with
				\[ 
					\alpha |H| \geq \frac{\eps n}{\poly(k, \log(1/\eps))^{k}}, 
				\]
			\item
				or there exists a set $T \subseteq [n]^k$ of disjoint monotone subsequences of $f$ satisfying
				\[ 
					|T| \geq \frac{\eps n}{\poly(k, \log(1/\eps))^{k^2}}
				\]
				and a $(k, k, \beta)$-tree descriptor $(G, \varrho, \sfI)$ which represents $(f, T, [n])$, where $\beta \ge \eps / \poly(k, \log(1/\eps))^{k^2}$.
		\end{itemize}
	\end{theorem}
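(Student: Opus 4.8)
The plan is to obtain the tree descriptor by applying Theorem~\ref{thm:main-structure} repeatedly, following exactly the recursion encoded in Definition~\ref{def:tree-rep}. It is cleanest to prove, by induction on $k_0$, a slightly stronger statement that can be fed back into itself: given disjoint intervals $J_1,\dots,J_m\subseteq[n]$ and, for each $h$, a set $U_h$ of disjoint length-$k_0$ monotone subsequences of $f$ lying in $J_h$ with $\sum_h|U_h|\ge W$, \emph{either} there is a set $H\subseteq[n]$ of indices each starting an $(\alpha,Ck\alpha)$-growing suffix with $\alpha|H|\ge W/\poly(k,\log(1/\eps))^{O(k^2)}$, \emph{or} there are a \emph{single} $(k_0,\beta)$-weighted tree $(G,\varrho)$ and, for each $h$, a subset $\hat U_h\subseteq U_h$ and a map $\sfI_h$ with $\sum_h|\hat U_h|\ge W/\poly(k,\log(1/\eps))^{O(k^2)}$ such that $(G,\varrho,\sfI_h)$ is a $(k,k_0,\beta)$-tree descriptor of $(f,\hat U_h,J_h)$. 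Theorem~\ref{thm:tree} is then the instance $k_0=k$, $m=1$, $J_1=[n]$, with $U_1$ the set of at least $\eps n/k$ disjoint length-$k$ monotone subsequences furnished by Observation~\ref{obs:disjoint-families}.

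For the inductive step I would invoke Theorem~\ref{thm:main-structure} on each $(f|_{J_h},U_h)$ with $\delta=|U_h|/|J_h|$ and with its constant set slightly larger than the target constant of Theorem~\ref{thm:tree}. If the $U_h$'s for which the growing-suffix alternative fires carry at least half of $W$, I collect the resulting indices: a growing suffix inside $J_h$ lifts to a growing suffix inside $[n]$, because for every $t$ below the top scale the segment $S_t(a)$ is the same whether truncated to $J_h$ or to $[n]$, so the per-segment densities are unchanged and the density sum drops by at most one term, i.e.\ by $O(\alpha)$, which the slightly larger constant absorbs. (Note also that a growing suffix already contains length-$k$ monotone subsequences regardless of $k_0$: since the density sum is $\ge Ck\alpha$ while each term is $\le\alpha$, at least $Ck\ge k$ of the sets $D_t(a)$ are nonempty, and one index from each of any $k$ of them, read left to right, is increasing in both index and value.) After bucketing the collected indices by the dyadic scale of $\alpha$ this yields the first alternative.

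Otherwise at least half of $W$ lies in the splittable alternative, and the crux is to force a \emph{common} tree shape and $\varrho$ across all the $J_h$'s. I would first bucket the relevant $h$'s by the gap $c\in[k_0-1]$ returned by Theorem~\ref{thm:main-structure} and keep the heaviest bucket, losing a factor $k$; then bucket by the dyadic scale of the splittability parameter and keep the heaviest bucket, losing a factor $O(\log(1/\eps))$. Now all surviving splittable collections share a common $c$ and a common dyadic density, so I create the root of $G$ with a $c$-leaf left child and a $(k_0-c)$-leaf right child, let $\sfI_h(\mathrm{root})$ be the splittable intervals inside $J_h$, split each such interval into $L,M,R$, and recurse. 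Crucially I recurse \emph{sequentially}: first apply the inductive hypothesis to the collection of all left parts with their length-$c$ prefixes, obtaining $(G_{\sf left},\varrho_{\sf left})$ and pruned prefix sets; then keep only the $k_0$-tuples whose prefix survived, pass their length-$(k_0-c)$ suffixes to a second application of the inductive hypothesis on the right parts; and let $\hat U_h$ consist of the $k_0$-tuples surviving both recursions. The sequential order is what makes the two pruning losses multiply, instead of leaving the surviving prefixes and surviving suffixes possibly disjoint.

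What remains is quantitative bookkeeping. Since $G$ has at most $k-1$ internal nodes, of depth at most $k-1$, and the loss incurred at a single node (from Theorem~\ref{thm:main-structure} together with the two bucketings) is $\poly(k,\log(1/\eps))$ — the densities stay above $\eps/\poly(k,\log(1/\eps))^{O(k)}$, so $\log(1/\delta)$ remains $O(k\log k+\log(1/\eps))$ throughout — the losses compound to $\poly(k,\log(1/\eps))^{O(k^2)}$, which fixes $\beta$. To make the splittability-density and weight conditions of Definition~\ref{def:tree-rep} hold with respect to the \emph{final} pruned set rather than the set present when a node is first created, I would set $\varrho(v)$ to the dyadic scale of that density \emph{divided by an a-priori upper bound on the loss still to be incurred in the subtree of $v$}; this ``pre-payment'' keeps the final density at $v$ at least $2^{-\varrho(v)}$ (so the splittable-collection condition holds) while the corresponding weight sums remain at least a $\poly(k,\log(1/\delta))^{-k}$ fraction of the pruned tuple counts, and it keeps $\varrho(v)\le\lceil\log(1/\beta)\rceil$. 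I expect the main obstacle to be precisely this interplay — synchronizing one weighted-tree shape and one function $\varrho$ across the many disjoint intervals living at a single node of the tree, while ensuring the tree-descriptor invariants survive the repeated pruning of the pattern set — with the (conceptually easy but fiddly) growing-suffix lifting a secondary source of care; everything else is a routine composition of the guarantees of Theorem~\ref{thm:main-structure}.
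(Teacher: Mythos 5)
Your skeleton — induct on $k_0$, apply Theorem~\ref{thm:main-structure} at each level, collect growing-suffix start points or recurse into the splittable intervals — matches the paper's. The genuine difference is in how you force a \emph{single} weighted tree $(G,\varrho)$ across the many intervals present at one node of the descriptor. You strengthen the inductive hypothesis to accept a whole family of disjoint intervals at once, recurse \emph{sequentially} (left parts, then right parts restricted to tuples whose prefix survived), and compensate for the compounded pruning with a ``pre-payment'' on $\varrho$. The paper instead keeps the single-interval inductive hypothesis, applies it independently to each $(L_h,T_h^{(L)})$ and $(R_h,T_h^{(R)})$, and then synchronizes post hoc: the number of possible quadruples $(G_h^{(L)},\varrho_h^{(L)},G_h^{(R)},\varrho_h^{(R)})$ is at most $(2k\lceil\log(1/\beta)\rceil)^{2k_0}$, so keeping the heaviest bucket loses only a $\poly(k,\log(1/\eps))^{O(k)}$ factor, after which $\calL_h$ and $\calR_h$ are balanced and matched into $k_0$-tuples. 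This bucketing-by-descriptor-shape step is the key device your proposal does not have; it is what lets the paper avoid ever committing to a $\varrho$ before the sub-recursion has produced it.

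Because you lack that device, the pre-payment is carrying real weight, and as written it has two concrete problems. First, you say $\varrho(v)$ is discounted by the ``loss still to be incurred in the subtree of $v$,'' but for $v\in G_{\sf left}$ the relevant loss comes from the recursion on the \emph{sibling} subtree $G_{\sf right}$ (a prefix is dropped because the suffix it is matched to disappears), which is not in $v$'s subtree; so the stated discount does not cover the loss that actually occurs. Second, Definition~\ref{def:tree-rep} at a leaf requires the two-sided bound $2^{-\varrho(r)}\le|T|/|I|\le 2^{-\varrho(r)+1}$: pre-paying inflates $\varrho(r)$, and if the anticipated loss fails to materialize in full, the upper bound breaks; you would then have to trim $T$ at the leaves down to the predicted size and argue this additional trimming can be propagated consistently back up through every $\sfI(\cdot)$ and every internal-node density requirement, which is exactly the cascade the paper's construction avoids. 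The approach is conceptually viable — forcing a common tree via a multi-interval induction is a legitimate alternative — but the pre-payment bookkeeping needs to be reworked (discount by the whole remaining global loss, and handle the leaf upper bound explicitly) before this would close.
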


	\begin{proof}
		We shall prove the following claim, by induction, for all $k_0 \in [k]$. Here $C > 0$ is a large constant, and $C' > 0$ is a large enough constant such that $\alpha \ge \delta / (C' k^{5})$ in the statement of Theorem~\ref{thm:main-structure}, applied with the constant $C$.
		\begin{description}
			\item[Claim.] 
				Let $K = C'  k^5$ and let $P(\cdot, \cdot)$ be the function from the statement of Theorem~\ref{thm:main-structure}; so $P(x,y) = \poly(x, \log y)$, and we may assume that $P$ is increasing in both variables. 
				Let $A(\cdot, \cdot)$ and $B(\cdot, \cdot)$ be increasing functions, such that
				\begin{align} \label{eqn:ABC}
					\begin{split}
					& A(k_0, 1/\delta) \ge 12k \lceil \log(K^{k_0}/\delta) \rceil \cdot P( k, 1/\delta) \cdot A(k_0-1, K/\delta) \\
					& A(1, 1/\delta) = 1/\delta \\
					& B(k_0, 1/\delta) \ge 2 \cdot P(k, K/\delta) \cdot \left(2k \lceil \log(K B(k_0-1, K/\delta)/\delta) \rceil\right)^{2k_0} \cdot B(k_0-1, K/\delta) \\
					& B(1, 1/\delta) = 1/\delta \\					
					\end{split}
				\end{align}
				Note that there exists such $A(\cdot, \cdot)$ and $B(\cdot, \cdot)$ with $A(k, 1/\delta) = (\poly(k, \log(1/\delta)))^k$ and $B(k, 1/\delta) = (\poly(k, \log(1/\delta)))^{k^2}$.

				Let $I \subseteq \N$ be an interval, let $g$ be a sequence $g \colon I \to \R$, let $T_0 \subseteq I^{k_0}$ be a set of disjoint length-$k_0$ monotone subsequences, and define $\delta \eqdef |T_0|/|I|$. Then 			
				\begin{enumerate}
					\item \label{case:H}
						Either there exists $\alpha \geq \delta / K^{k_0}$, which is an integer power of $1/2$, along with a set $H \subseteq I$ of $(\alpha, C k \alpha)$-growing suffix start points such that 
						\[ 
							\alpha |H| \geq \frac{\delta|I|}{A(k_0, 1/\delta)}, 
						\]
					\item \label{case:split}
						Or there exists a set $T \subseteq I^{k_0}$ of disjoint $k_0$-tuples satisfying $E(T) \subseteq E(T_0)$ and
						\[ 
							|T| \geq \frac{|T_0|}{B(k_0, 1/\delta)} 
						\]
						and a ($k, k_0, \alpha$)-tree descriptor $(G, \varrho, \sfI)$ for $(g, T, I)$, where $\alpha \ge \delta / B(k_0, 1/\delta)$.
				\end{enumerate}
		\end{description}
		Note that since $f$ is $\eps$-far from $(12 \dots k)$-free, there is a set $T_0 \subseteq [n]^k$ of at least $\eps n / k$ disjoint length-$k$ monotone  subsequences. By applying the above claim for $k_0 = k$, $T_0$, $[n]$ and $f$, the theorem follows. Thus, it remains to prove the claim; we proceed by induction.
		\begin{description}
			\item[if $k_0=1$:] 
				Note that here $T_0$ is a subset of $I$.
				We define the ($k,1, \delta$)-tree descriptor $(G, \varrho, \sfI)$ which represents $f, T = T_0, I$ in the natural way:
				\begin{itemize}
					\item 
						$G= (V, E)$ is a rooted tree with one node: $V = \{r\}$ and $E = \emptyset$. 
					\item 
						$\varrho\colon V\to\N$ is given by $\varrho(r) = \lceil \log(1/\delta) \rceil$, so $2^{-\varrho(r)} \leq |I\cap T|/|I| \leq 2^{-\varrho(r)+1}$.
					\item 
						$\sfI \colon V \to \calS(I)$ is given by $\sfI(r) = \{ \{ t \} : t \in T\}$.
				\end{itemize}
		 
			\item[if $2\leq k_0\leq k$:] 
				By Theorem~\ref{thm:main-structure}, there exists $\alpha \ge \delta/ K$ such that one of~\eqref{en:suffix} and~\eqref{en:split}, from the statement of the theorem, holds.
				\begin{itemize}
					\item 
						If~\eqref{en:suffix} holds, there is a set $H \subseteq I$ of $(\alpha, C k \alpha)$-growing suffix start points with 
						\[
							\alpha |H| \geq \frac{\delta |I|}{P(k, 1/\delta)};
						\]
						note that we may assume that $\alpha$ is an integer power of $1/2$.\footnote{to be precise and to ensure that we can take $\alpha$ to be an integer power of $2$, it might be better to apply Theorem~\ref{thm:main-structure} with constant $2C$, to allow for some slack; this does not change the argument.}
					\item 
						Otherwise,~\eqref{en:split} holds, and we are given an integer $c\in[k_0-1]$, a set $T$ of disjoint length-$k_0$ monotone subsequences, with $E(T) \subseteq E(T_0)$, and a $(c, 1/(6k),\alpha)$-splittable collection of $T$ into disjoint interval-tuple pairs $(I_1, T_1), \dots, (I_s, T_s)$, such that
						\[
							\alpha \sum_{h=1}^s |I_h| 
							\geq \frac{|T_0|}{P(k, 1/\delta)} 
							= \frac{\delta |I|}{P(k, 1/\delta)}.
						\]
						Recall that by definition of splittability, $|T_h| / |I_h| \ge \alpha$ for every $h \in [s]$.
				\end{itemize}
				If \eqref{en:suffix} holds, we are done; so we assume that \eqref{en:split} holds.

				For each $h \in [s]$, since $(I_h, T_h)$ is a $(c, 1/(6k), \alpha)$-splittable pair, there exists a partition $(L_h, M_h, R_h)$ that satisfies the conditions stated in Definition~\ref{def:splittable}. Let $T_h^{(L)}$ be the collection of prefixes of length $c$ of subsequences in $T_h$, and let $T_h^{(R)}$ be the collection of suffixes of length $k_0 - c$ of subsequences in $T_h$.
				 
				We apply the induction hypothesis to each of the pairs $(L_h, T_h^{(L)})$ and $(R_h, T_h^{(R)})$.
				We consider two cases for each $h \in [s]$.
				\begin{enumerate}
					\item
						\eqref{case:H} holds for either $(L_h, T_h^{(L)})$ or $(R_h, T_h^{(R)})$. This means that there exists $\beta_h$, which is an integer power of $1/2$, and which satisfies $\beta_h \ge \alpha / K^{\max\{c, k_0 - c\}} \ge \alpha / K^{k_0 - 1} \ge \delta / K^{k_0}$, and a set $H_h \subseteq I_h$ of start points of $(\beta_h, Ck \beta_h)$-growing subsequences, such that (using $|R_h|, |L_h| \ge |I_h| / (6k)$)
						\[
							\beta_h |H_h| \ge 
							\frac{\alpha |I_h|}{6k \cdot A(k_0-1, 1/\alpha)} 
						\]
					\item
						Otherwise, \eqref{case:split} holds for both $(L_h, T_h^{(L)})$ and $(R_h, T_h^{(R)})$. Setting $\beta = \alpha / B(k_0-1, 1/\alpha)$, this means that there exists a $(k, c, \beta)$-tree descriptor $(G_h^{(L)}, \varrho_h^{(L)}, \sfI_h^{(L)})$, for $(g, \calL_h, L_h)$ where $\calL_h \subseteq (L_h)^c$ is a set of length-$c$ monotone subsequences, such that $E(\calL_h) \subseteq E(T_h^{(L)})$ and
						\begin{equation} \label{eqn:calL-h}
							|\calL_h| \ge 
							\frac{|T_h^{(L)}|}{B(k_0-1, 1/\alpha)}, 
						\end{equation}
						and, similarly, there exists a $(k, k_0 - c, \beta)$-tree descriptor $(G_h^{(R)}, \varrho_h^{(R)}, \sfI_h^{(R)})$ for $(g, \calR_h, L_h)$, where $\calR_h \subseteq (R_h)^{k_0-c}$ is a set of length-$(k_0 - c)$ monotone subsequences, such that $E(\calR_h) \subseteq E(T_h^{(R)})$ and
						\begin{equation} \label{eqn:calR-h}
							|\calR_h| \ge 
							\frac{|T_h^{(R)}|}{B(k_0-1, 1/\alpha)}.
						\end{equation}
						For convenience, we shall assume that $|\calL_h| = |\calR_h|$, by possibly removing some elements of the largest of the two (and reflecting this in the corresponding tree descriptor).
				\end{enumerate}
				Suppose first that 
				\[
					\sum_{h \colon \, \text{first case holds for $h$}} |I_h| \ge \frac{1}{2} \cdot \sum_{h = 1}^s |I_h|.
				\]
				Since each $\beta_h$ is an integer power of $1/2$, there are at most $\lceil \log(K^{k_0} / \delta) \rceil$ possible values for $\beta_h$. Hence, there exists some $\beta$ (with $\beta \ge \delta / K^{k_0}$) such that the collection $S$, of indices $h \in [s]$ for which the first case holds for $h$ and $\beta_h = \beta$, satisfies 
				\[
					\sum_{h \in S} |I_h| \ge \frac{1}{2 \lceil \log(K^{k_0}/\delta) \rceil} \cdot \sum_{h = 1}^s |I_h|.
				\]
				Let $H = \bigcup_{h \in S} H_h$. Then $H$ is a set of start points of $(\beta, Ck \beta)$-growing suffixes, with
				\begin{align*}
					\beta |H| 
					&\ge \frac{\alpha}{6k \cdot A(k_0-1, 1/\alpha)} \cdot \sum_{h \in S} |I_h| 
					\ge \frac{\alpha}{12 k \lceil \log(K^{k_0}/\delta) \rceil \cdot A(k_0-1, 1/\alpha)} \cdot \sum_{h = 1}^s |I_h| \\
					&\ge \frac{\delta |I|}{12 k \lceil \log(K^{k_0}/\delta) \rceil \cdot P(k, 1/\delta) \cdot A(k_0-1, 1/\alpha) } 
					\ge \frac{\delta|I|}{A(k_0, 1/\delta)},
				\end{align*}
				where the last inequality follows from \eqref{eqn:ABC}.
				This proves the claim in this case.

				Next, we may assume that
				\[
					\sum_{h \colon \, \text{second case holds for $h$}} |I_h| \ge \frac{1}{2} \cdot \sum_{h = 1}^s |I_h|.
				\]
				Note that the number of quadruples $(G_h^{(L)}, \varrho_h^{(L)}, G_h^{(R)}, \varrho_h^{(R)})$ (whose elements are as above) is at most $(2c)^{2c} (2(k_0-c))^{2(k_0-c)} (\lceil \log(1/\beta) \rceil)^{2k_0} \le (2k \lceil \log(1/\beta) \rceil)^{2k_0}$, since the number of trees on $l$ vertices is at most $l^l$, and we have at most $\lceil \log(1/\beta) \rceil$ possible weights to assign to each of the vertices. It follows that there exists such a quadruple $(G_L^\ast, \varrho_L^\ast, G_R^\ast, \varrho_R^\ast)$ such that if $S$ is the set of indices $h$ that were assigned this quadruple, then
				\begin{align} \label{eqn:intervals-dense}
					\begin{split}					
					\alpha \cdot \sum_{h \in S} |I_h| 
					&\ge \frac{\alpha}{(2k\lceil \log(1/\beta) \rceil)^{2k_0}} \cdot \sum_{\text{second case holds for $h$}}|I_h| \\
					&\ge \frac{\alpha}{2\cdot(2k\lceil \log(1/\beta) \rceil)^{2k_0}} \cdot \sum_{h = 1}^s|I_h| 
					\ge \frac{|T_0|}{2 \cdot P(k, 1/\delta) \cdot (2 k \lceil \log(1/\beta) \rceil)^{2k_0}}.
					\end{split}
				\end{align}
				We form a set $\calT_h$ of monotone length-$k_0$ subsequences by matching elements from $\calL_h$ with elements from $\calR_h$ for each $h \in S$; that they can be matched follows from the assumption that $|\calL_h| = |\calR_h|$, and that these form monotone subsequences follows from the assumptions on $\calL_h, \calR_h$. Set $\calT := \cup_{h \in S} \calT_h$.
				Note that $(I_h, \calT_h)$ is $(k_0, c, \beta)$-splittable by \eqref{eqn:calL-h} and \eqref{eqn:calR-h} (using $\beta = \alpha / B(k_0-1, 1/\alpha)$). 
				Let $(G, \varrho)$ be the $(k, k_0, \beta)$-weighted-tree obtained by taking a root $r$, with weight $\varrho(r) = \lceil \log(1 / \beta) \rceil$, adding the tree $(G_L^\ast, \varrho^\ast)$ as a subtree to its left (i.e., the root of this tree is joined to $r$ by an edge with value $0$) and adding the tree $(G_R^\ast, \varrho^\ast)$ as a subtree to its right.
				Now, we form a $(G, \varrho, \sfI)$-tree descriptor by setting
				\[
					\sfI(v) = 
					\left\{
						\begin{array}{ll}
							\{I_h : h \in S\} & v = r \\
							\bigcup_{h \in S} \sfI_h^{(L)}(v) & v \in G_L^{\ast} \\
							\bigcup_{h \in S} \sfI_h^{(R)}(v) & v \in G_R^{\ast}.
						\end{array}
					\right.
				\]
				We claim that $(G, \varrho, \sfI)$ is a $(k, k_0, \beta)$-tree descriptor for $(g, \calT, I)$. 
				Indeed, $((I_h, \calT_h))_{h \in S}$ is a $(c, 1/(6k), 2^{-\varrho(r)})$-splittable collection of $\calT$, and, by \eqref{eqn:intervals-dense} and because $|T_0| \ge |\calT|$ 
				\begin{align*}
					2^{-\varrho(r)} \sum_{h \in S} |I_h| 
					& \ge 
					\frac{\alpha}{2} \cdot \sum_{h \in S} |I_h| \ge
					\frac{|\calT|}{4 \cdot P(k, 1/\delta) \cdot (2k\lceil \log(1/\beta) \rceil)^{2k_0}}  =
					\frac{|\calT|}{\poly(k, \log(1/\delta))^k}.
				\end{align*}
				The remaining requirements in the recursive defnition of a tree descriptor (see Definition~\ref{def:tree-rep}) follow as $(G_L^{\ast}, \varrho^{\ast}, \sfI_h^{(L)})$ is a $(k, c, \beta)$-tree descriptor for $(g, \calL_h, L_h)$ and $(G_L^{\ast}, \varrho^{\ast}, \sfI_h^{(R)})$ is a $(k, k_0-c, \beta)$-tree descriptor for $(g, \calR_h, R_h)$ for every $h \in S$. Since $\beta = \alpha / B(k_0-1, 1/\alpha) \ge  \delta / B(k_0, 1/\delta)$, it follows that $(G, \varrho, \sfI)$ is a $(k, k_0, \delta / B(k_0, 1/\delta))$-tree descriptor for $(g, \calT, I)$.
				
				It remains to lower-bound the size of $\calT$. Using \eqref{eqn:calR-h} and \eqref{eqn:intervals-dense}, we have
				\begin{align*}
					|\calT| 
					& = \sum_{h \in S} |\calR_h|
					\ge \frac{1}{B(k_0-1, 1/\alpha)}\cdot \sum_{h \in S} |T_h|
					\ge \frac{\alpha}{B(k_0-1, 1/\alpha)} \cdot \sum_{h \in S} |I_h|  \\
					& \ge \frac{|T_0|}{2 \cdot P(k, 1/\delta) \cdot (2k \lceil \log(1/\beta) \rceil)^{2k_0} \cdot B(k_0-1, 1/\alpha)} \ge \frac{|T_0|}{B(k_0, 1/\delta)}. 
				\end{align*}
				This completes the proof of the inductive claim in this case.
				\qedhere
		\end{description}
	\end{proof}
 
\subsection{Proof of Theorem~\ref{thm:main-structure}}\label{sec:proof1}

	We now prove Theorem~\ref{thm:main-structure}. For the rest of this section, let $k, k_0 \in \N$, with $1 \leq k_0 \leq k$, be fixed, and let $f \colon [n] \to \R$ be a fixed function. Let $T_0$ be a set of $\delta n$ disjoint monotone subsequences of $f$ of length $k_0$. We apply Lemma~\ref{lem:big-split} to the set $T_0$; this specifies an integer $c \in [k_0-1]$ and a subset $T$ of at least $\delta n / k^2$ disjoint monotone subsequences of length $k_0$ satisfying the conclusion of Lemma~\ref{lem:big-split}. 

	\begin{definition}
		Let $(i_1, \dots, i_{k_0}) \in [n]^{k_0}$ be a monotone subsequence with a $c$-gap. We say that $(i_1, \dots, i_{k_0})$ is \emph{at scale $t$} if $2^{t} \leq i_{c+1} - i_c \leq 2^{t+1}$, where $t \in \{0, \dots, \lfloor \log n \rfloor\}$.
	\end{definition}

	\begin{definition}
		Let $(i_1, \dots, i_{k_0}) \in [n]^{k_0}$ be a monotone subsequence with a $c$-gap. For $\gamma \in (0,1)$, we say that $\ell \in [n]$ \emph{$\gamma$-cuts $(i_1, \dots, i_{k_{0}})$ at $c$ with slack} if
		\begin{equation} \label{eq:slack}
			i_c + \gamma(i_{c+1} - i_c) \leq \ell \leq i_{c+1} - \gamma(i_{c+1} - i_c). 
		\end{equation}
	\end{definition}

	We hereafter consider the parameter setting of $\gamma \eqdef 1/3$. For $\ell \in [n]$, $t \in \{0, \dots, \lfloor \log n \rfloor\}$, and any subset $U \subset T$ of disjoint $(12\dots k_0)$-patterns in $f$ let 
	\begin{align} 
		A_t(\ell, U) &= \{ (i_1, \dots, i_{k_0}) \in U : (i_1, \dots, i_{k_0}) \text{ is at scale $t$ and is $\gamma$-cut at $c$ with slack by $\ell$} \}. \label{eq:cut-sets}
	\end{align}
	We note that for each $(i_1, \dots, i_{k_0}) \in A_t(\ell, U)$, the index $i_{c+1}$ is in $[\ell, \ell + 2^{t+1}]$, and since $A_t(\ell, U)$ is made of disjoint monotone sequences, $|A_t(\ell, U)| \leq 2^{t+1}$.

	\begin{lemma}
		\label{lem:At:properties}
		For every $\ell \in [n]$, $t \in \{0, \dots, \lfloor \log n \rfloor \}$, and $U \subset T$,
		\begin{itemize}
			\item 
				Every $(i_1, \dots, i_{k_0}) \in A_t(\ell, U)$ satisfies 
				\[
					\ell - (k-1)2^{t+1} \leq i_1, \dots, i_c \le \ell - \gamma 2^t \qquad \qquad 
					\ell + \gamma 2^t \le i_{c+1}, \dots, i_{k_0} \leq \ell + (k-1) 2^{t+1}.
				\]
			\item 
				Let $t_1 \geq t_2 + 1 + \log(1/\gamma) + \log(c+1)$, $(i_1, \dots, i_{k_0}) \in A_{t_1}(\ell, U)$ and $(j_1, \dots, j_{k_0}) \in A_{t_2}(\ell, U)$. Then $f(j_{c+1}) < f(i_{c+1})$.
		\end{itemize}
	\end{lemma}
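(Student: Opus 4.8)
The plan is to verify the two bullet points of Lemma~\ref{lem:At:properties} separately, the first being an immediate consequence of the definitions and the second being the place where property~\ref{itm:property-greedy} from Lemma~\ref{lem:big-split} (the ``interleaving'' property) does the real work.

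\textbf{First bullet (localization of the indices).} Fix $(i_1,\dots,i_{k_0}) \in A_t(\ell,U)$. By definition of being at scale $t$ we have $2^{t} \le i_{c+1}-i_c \le 2^{t+1}$, and by definition of being $\gamma$-cut at $c$ with slack we have $i_c + \gamma(i_{c+1}-i_c) \le \ell \le i_{c+1} - \gamma(i_{c+1}-i_c)$. Combining these, $\ell \ge i_c + \gamma 2^{t}$, so $i_c \le \ell - \gamma 2^{t}$; since $(i_1,\dots,i_{k_0})$ is a $c$-gap subsequence, every other consecutive gap $i_{b+1}-i_b$ is at most $i_{c+1}-i_c \le 2^{t+1}$, hence $i_1 \ge i_c - (c-1)2^{t+1} \ge \ell - (k-1)2^{t+1}$ (using $c \le k_0-1 \le k-1$), and likewise all of $i_1,\dots,i_c$ lie in $[\ell-(k-1)2^{t+1},\, \ell-\gamma 2^t]$. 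The symmetric computation on the right side gives $i_{c+1} \ge \ell + \gamma 2^t$ and $i_{k_0} \le i_{c+1} + (k-1)2^{t+1} \le \ell + (k-1)2^{t+1}$, which is the claimed inclusion. This part is pure bookkeeping with the geometric-scale bounds and the $c$-gap property.

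\textbf{Second bullet (monotonicity of values across scales).} Take $t_1 \ge t_2 + 1 + \log(1/\gamma) + \log(c+1)$, $(i_1,\dots,i_{k_0}) \in A_{t_1}(\ell,U)$, $(j_1,\dots,j_{k_0}) \in A_{t_2}(\ell,U)$; both tuples lie in $T$, so property~\ref{itm:property-greedy} of Lemma~\ref{lem:big-split} is available. The strategy is to check the hypotheses of that property with the roles set so that it yields $f(j_{c+1}) < f(i_{c+1})$: concretely, we want to show $i_1 < j_1$, $i_c < j_c$, and $j_{c+1} < i_{c+1}$, and then property~\ref{itm:property-greedy} (applied with the pair $(i_1,\dots,i_{k_0})$, $(j_1,\dots,j_{k_0})$ and $\ell := c$ there) gives $f(j_{c+1}) > f(i_{c+1})$ --- wait, one must be careful with the direction, so I would instead apply it with the two tuples swapped, i.e.\ with ``$i$'' being the $A_{t_2}$ tuple, to conclude $f(i_{c+1})$ is the larger one. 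The point is that the $A_{t_2}$-tuple has a \emph{small} middle gap (at most $2^{t_2+1}$) while the $A_{t_1}$-tuple has a \emph{large} middle gap (at least $2^{t_1}$), and $\ell$ sits with $\gamma$-slack inside both gaps; so $j_c$ and $j_{c+1}$ are both within $2^{t_2+1}$ of $\ell$, whereas $i_c \le \ell - \gamma 2^{t_1}$ and $i_{c+1} \ge \ell + \gamma 2^{t_1}$. The choice $t_1 \ge t_2 + 1 + \log(1/\gamma) + \log(c+1)$ is exactly what makes $\gamma 2^{t_1} \ge (c+1) 2^{t_2+1}$, which in turn (using the first bullet to bound $j_1 \ge j_c - (c-1)2^{t_2+1}$ and $j_{c+1},\dots$ from above) forces $i_1 < j_1$, $i_c < j_c$, $j_{c+1} < i_{c+1}$. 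Feeding this into property~\ref{itm:property-greedy} yields $f(j_{c+1}) < f(i_{c+1})$.

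\textbf{Main obstacle.} The only subtle point is getting the inequality directions and the index-offset arithmetic in the second bullet exactly right: one needs $i_1 < j_1$ and $i_c < j_c$ (the $A_{t_1}$-tuple starts \emph{earlier}) but $j_{c+1} < i_{c+1}$ (the $A_{t_2}$-tuple's right part starts \emph{earlier}), and verifying all three simultaneously requires carefully chaining the bounds $j_c \ge \ell - \gamma 2^{t_2} \ge \ell - 2^{t_2+1}$, $i_c \le \ell - \gamma 2^{t_1}$, the spread of the $j$-prefix over at most $(c-1)2^{t_2+1}$, and the separation $\gamma 2^{t_1} \ge (c+1)2^{t_2+1}$ guaranteed by the hypothesis on $t_1-t_2$. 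Once the three order relations are established, the conclusion is immediate from Lemma~\ref{lem:big-split}\ref{itm:property-greedy}; so the whole lemma is essentially a careful unpacking of definitions plus one invocation of the greedy interleaving property.
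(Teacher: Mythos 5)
Your proposal is correct and matches the paper's argument essentially step for step: unpack the scale bound $2^{t} \le i_{c+1}-i_c \le 2^{t+1}$ and the slack inequality for the first bullet, and for the second derive the three order relations $i_1 < j_1$, $i_c < j_c$, $j_{c+1} < i_{c+1}$ from the geometric separation of scales, then invoke property~\ref{itm:property-greedy} of Lemma~\ref{lem:big-split}. Two small remarks. First, your mid-proof worry about ``swapping the tuples'' is misplaced: applying property~\ref{itm:property-greedy} directly with $(i_1,\dots,i_{k_0})$ as the first tuple, $(j_1,\dots,j_{k_0})$ as the second, and $\ell := c$ already gives $f(i_{c+1}) > f(j_{c+1})$, which is exactly the claimed $f(j_{c+1}) < f(i_{c+1})$; swapping the roles would reverse all three hypotheses and the property would then not apply at all. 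You happen to land on the right conclusion anyway, but the detour is a red herring. Second, in the first bullet the step $i_c - (c-1)2^{t+1} \ge \ell - (k-1)2^{t+1}$ also uses $i_c \ge \ell - 2^{t+1}$ (from $i_{c+1} \ge \ell$ and $i_{c+1}-i_c \le 2^{t+1}$), which you should state explicitly; the $c \le k-1$ you cite is not by itself enough.
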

	\begin{proof}
		Fix any $\ell \in [n]$, $t\in \{0, \dots, \lfloor \log n \rfloor\}$ and $U \subset T$. To establish the first bullet, consider any $(i_1, \dots, i_{k_0}) \in A_t(\ell, U)$. 
		By definition of a $c$-gap sequence, we have
		\[
			i_1 \ge i_{c+1} - c (i_{c+1} - i_c) \ge \ell - (k-1) 2^{t+1},
		\]
		using $i_{c+1} - i_c \le 2^{t+1}$ and $i_{c+1} \ge \ell$. By \eqref{eq:slack}, we have $i_c \le \ell - \gamma 2^t$ (using $i_{c+1} - i_c \ge 2^t$).
		The first inequality follows as $i_1 < \dots < i_c$. The inequality for $i_{c+1}, \dots, i_{k_0}$ follows similarly.

		For the second bullet, let $(i_1, \dots, i_{k_0}) \in A_{t_1}(\ell, U)$ and $(j_1, \dots, j_{k_0}) \in A_{t_2}(\ell, U)$ and suppose that $2^{t_1} \geq 2^{t_2+1}\cdot (c+1)/\gamma$. 
		We have $i_c \le \ell - \gamma 2^{t_1}$ and $j_c \ge \ell - 2^{t_2+1}$ (using \eqref{eq:slack} and \eqref{eq:cut-sets}), from which it follows that $j_c > i_c$.
		Similarly, $i_1 < i_c \le \ell - \gamma 2^{t_1}$ and $j_1 \ge \ell - (c - 1) 2^{t_2 + 1}$, implying that $j_1 > i_1$, and $i_{c+1} \ge \ell + \gamma 2^{t_1}$ and $j_{c+1} \le \ell + 2^{t_2+1}$, which implies that $i_{c+1} > j_{c+1}$. The inequality $f(j_{c+1}) < f(i_{c+1})$ follows from the assumption that $T$ satisfies \eqref{itm:property-greedy} from Lemma~\ref{lem:big-split}.	
	\end{proof}

	The proof of Theorem~\ref{thm:main-structure} will follow by considering a random $\bell \sim [n]$ and the sets $A_1(\bell, T), \dots, A_{\lfloor\log n\rfloor}(\bell, T)$. By looking at how the sizes of the sets $A_1(\bell, T), \dots, A_{\log n-1}(\bell, T)$ vary, we will be able to say that $\bell$ is the start of a growing suffix, or identify a splittable interval. Towards this goal, we first establish a simple lemma; here $v(\ell, U)$ is defined to be $\sum_{t=0}^{\lfloor \log n \rfloor} |A_t(\ell, U)| / 2^t$.
	\begin{lemma}\label{lem:expect}
		Let $U \subset T$ be any subset and $\bell \sim [n]$ be sampled uniformly at random. Then
		\[
			\Ex_{\bell \sim [n]} v(\ell, U) \geq \frac{|U|}{3n}.
		\]
	\end{lemma}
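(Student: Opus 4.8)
The plan is a one-line linearity-of-expectation computation, after unpacking the definition of $v(\ell,U)$ in terms of the sets $A_t(\ell,U)$. Concretely, I would write
\begin{align*}
  \Ex_{\bell\sim[n]}[v(\bell,U)]
  &= \sum_{t=0}^{\lfloor\log n\rfloor}\frac{1}{2^t}\,\Ex_{\bell\sim[n]}\bigl[|A_t(\bell,U)|\bigr] \\
  &= \sum_{t=0}^{\lfloor\log n\rfloor}\frac{1}{2^t}\sum_{\substack{(i_1,\dots,i_{k_0})\in U\\\text{at scale }t}} \Prx_{\bell\sim[n]}\bigl[\bell\ \gamma\text{-cuts }(i_1,\dots,i_{k_0})\text{ at }c\text{ with slack}\bigr],
\end{align*}
using that $|A_t(\bell,U)|$ counts the tuples of $U$ that are at scale $t$ and are $\gamma$-cut at $c$ with slack by $\bell$, and that every $c$-gap monotone subsequence of $f$ lies at exactly one scale $t\in\{0,\dots,\lfloor\log n\rfloor\}$, so the double sum collapses to a single sum over all of $U$.

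The key estimate is a lower bound, for a fixed tuple $(i_1,\dots,i_{k_0})\in U$ at scale $t$, on the probability that a uniformly random $\bell\in[n]$ $\gamma$-cuts it at $c$ with slack (recall $\gamma=1/3$). By definition this event occurs exactly when $\bell$ is an integer in the interval $[\,i_c+\gamma(i_{c+1}-i_c),\ i_{c+1}-\gamma(i_{c+1}-i_c)\,]$, whose length is $(1-2\gamma)(i_{c+1}-i_c)=(i_{c+1}-i_c)/3\ge 2^t/3$ since the tuple is at scale $t$. Hence there are at least $2^t/3$ admissible values of $\bell$, giving a probability of at least $2^t/(3n)$; multiplying by the weight $1/2^t$, every tuple of $U$ contributes at least $1/(3n)$ to the expectation, and summing over the $|U|$ tuples yields $\Ex_{\bell\sim[n]}[v(\bell,U)]\ge |U|/(3n)$.

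I do not anticipate any genuine obstacle: this is a routine averaging argument, and the only inequality with content is that a random element of $[n]$ lands in the ``middle third'' $[i_c+(i_{c+1}-i_c)/3,\ i_{c+1}-(i_{c+1}-i_c)/3]$ of a gap of length $\ge 2^t$ with probability $\ge 2^t/(3n)$. The one point to handle with (minor) care is the discreteness of $[n]$: the interval above contains $\lfloor (i_{c+1}-i_c)/3\rfloor$ integers, which for very short gaps can dip slightly below $2^t/3$; this is immaterial, since downstream only the bound $\Ex_{\bell\sim[n]}[v(\bell,U)]=\Omega(|U|/n)$ is used, and the slack in the choice $\gamma=1/3$ covers the rounding once gaps exceed a small absolute constant.
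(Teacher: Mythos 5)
Your proposal is correct and follows the same route as the paper's own proof: linearity of expectation applied to $v(\bell,U)=\sum_t |A_t(\bell,U)|/2^t$, together with the per-tuple estimate that a uniform $\bell$ $\gamma$-cuts a fixed scale-$t$ tuple with slack with probability at least $(1-2\gamma)2^t/n = 2^t/(3n)$. The paper does not comment on the integer-counting issue you raise at the end; as you say, it is immaterial here since only the $\Omega(|U|/n)$ bound is used downstream.
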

	\begin{proof}
		Fix a sequence $i = (i_1, \dots, i_{k_0}) \in U$, and let $t(i)\in \{0, \dots, \lfloor\log n\rfloor\}$ be its scale. 
		Then, the probability (over a uniformly random $\bell$ in $[n]$) that $i$ belongs to $A_{t(i)}(\bell, U)$ is lower bounded as
		\[
			\Prx_{\bell\sim[n]}[ i\in A_{t(i)}(\bell, U) ] \geq \frac{(1-2\gamma)2^{t(i)}}{n} = \frac{2^{t(i)}}{3n}.
		\]
		Therefore,
		$
			\sum_{t=0}^{\log n-1} \sum_{i\in U \colon t(i) = t} \Pr_{\bell\sim[n]}[ i\in A_{t}(\bell, U) ] /2^t \geq |U|/(3n)
		$, or, equivalently, since $\Pr_{\bell\sim[n]}[ i\in A_{t}(\bell, U) ] = 0$ for $t\neq t(i)$,
		\[
			\Ex_{\bell \sim [n]}\left[  \sum_{t=0}^{\log n-1} \frac{ |A_{t}(\bell, U)|  }{2^t} \right]
			= \Ex_{\bell \sim [n]}\left[  \sum_{t=0}^{\log n-1} \sum_{i\in U} \frac{ \mathbbm{1}\{ i \in A_{t}(\bell, U)\} }{2^t} \right] \geq \frac{|U|}{3n},
		\]
		establishing the lemma.
	\end{proof}

	We next establish an auxiliary lemma that we will use in order to find growing suffixes.

	\begin{lemma}\label{lem:move-back}
		Let $\ell \in [n]$ and $U \subset T$ be such that every $t \in \{0, \dots, \lfloor \log n \rfloor\}$ satisfies $|A_t(\ell, U)| / 2^t \leq \beta$. Then, if $\ell' \in [n]$ is any index satisfying
		\begin{align}
			\max \{ i_c : (i_1, \dots, i_{k_0}) \in A_t(\ell, U), t \in \{0, \dots, \lfloor \log n \rfloor\} \leq \ell' \leq \ell,  \label{eq:bound-on-ell}
		\end{align}
	 	then $\ell'$ is the start of an $(4\beta, v(\ell, U) / (12 \log k))$-growing suffix.
	\end{lemma}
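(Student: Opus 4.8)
The plan is to build the growing-suffix data $(D_t(\ell'))_t$ at $\ell'$ directly out of the $(c+1)$-th coordinates of the tuples in the sets $A_t(\ell,U)$, and to invoke the second bullet of Lemma~\ref{lem:At:properties} to secure the monotonicity property. Three things must be arranged: each coordinate $i_{c+1}$ has to land in the suffix $(\ell',n]$ inside a controlled segment; the induced densities must be at most $4\beta$ while summing to at least $v(\ell,U)/(12\log k)$; and along distinct segments the $f$-values must increase.

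First I would locate the coordinates. Fix $(i_1,\dots,i_{k_0})\in A_t(\ell,U)$ and put $g\eqdef i_{c+1}-i_c\in[2^t,2^{t+1}]$. The slack inequality \eqref{eq:slack} gives $\ell-i_c,\ i_{c+1}-\ell\in[\gamma g,(1-\gamma)g]$, and together with the hypothesis $i_c\le\ell'\le\ell$ this yields
\[
  \frac{1}{3}\,2^t \;\le\; \gamma g \;\le\; i_{c+1}-\ell' \;=\; (i_{c+1}-\ell)+(\ell-\ell') \;\le\; (i_{c+1}-\ell)+(\ell-i_c) \;=\; g \;\le\; 2^{t+1}.
\]
In particular $i_{c+1}>\ell\ge\ell'$, so $i_{c+1}$ lies in the suffix $(\ell',n]$, and since $2^{t-2}<\tfrac13 2^t\le i_{c+1}-\ell'\le 2^{t+1}$ it lies in one of the four consecutive segments $S_{t-1}(\ell'),\dots,S_{t+2}(\ell')$ of the growing suffix at $\ell'$ — in fact one of $S_{t-1}(\ell'),S_t(\ell'),S_{t+1}(\ell')$ up to a harmless boundary convention on ``at scale $t$''.

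Next I would choose which scales to use and define $D$. Let $L\eqdef\lceil 1+\log(1/\gamma)+\log(c+1)\rceil$, which for $\gamma=1/3$ and $c+1\le k_0\le k$ satisfies $4\le L=O(\log k)$, is at least the gap $1+\log(1/\gamma)+\log(c+1)$ demanded by the second bullet of Lemma~\ref{lem:At:properties}, and exceeds the width (three) of the segment window above. Partition $\{0,\dots,\lfloor\log n\rfloor\}$ into the $L$ residue classes modulo $L$; since $v(\ell,U)=\sum_t|A_t(\ell,U)|/2^t$, some class $R$ has $\sum_{t\in R}|A_t(\ell,U)|/2^t\ge v(\ell,U)/L$. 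For each $t\in R$, by pigeonhole one of $S_{t-1}(\ell'),S_t(\ell'),S_{t+1}(\ell')$ contains the $i_{c+1}$-coordinates of at least $|A_t(\ell,U)|/3$ of the tuples of $A_t(\ell,U)$; call that segment $S_{\sigma(t)}(\ell')$ and let $D_{\sigma(t)}(\ell')$ be exactly those coordinates, setting $D_\sigma(\ell')=\emptyset$ otherwise. Now verify Definition~\ref{def:growing-suffixes}. \emph{Density:} $|D_{\sigma(t)}(\ell')|\le|A_t(\ell,U)|\le\beta 2^t$ by hypothesis, while $|S_{\sigma(t)}(\ell')|=2^{\sigma(t)-1}\ge 2^{t-2}$ since $\sigma(t)\ge t-1$, so every density is $\le 4\beta$. \emph{Sum:} since $\sigma(t)\le t+1$ we have $|S_{\sigma(t)}(\ell')|\le 2^t$, so scale $t\in R$ contributes at least $\frac13|A_t(\ell,U)|/2^t$, and hence $\sum_\sigma|D_\sigma(\ell')|/|S_\sigma(\ell')|\ge\frac1{3}\sum_{t\in R}|A_t(\ell,U)|/2^t\ge v(\ell,U)/(3L)\ge v(\ell,U)/(12\log k)$, using $3L\le 12\log k$ for all $k\ge 2$. \emph{Monotonicity:} if $t<t'$ both lie in $R$ then $t'\ge t+L>t+3$, so the three-wide windows are disjoint and $\sigma$ is increasing on $R$; thus if $b\in D_{\sigma(t)}(\ell')$ and $b'\in D_{\sigma(t')}(\ell')$ with $\sigma(t)<\sigma(t')$, then $t<t'$ with $t'-t\ge L\ge 1+\log(1/\gamma)+\log(c+1)$, and $f(b)<f(b')$ follows from the second bullet of Lemma~\ref{lem:At:properties} applied (with this $U$) to the tuples realizing $b$ and $b'$. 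Therefore $\ell'$ starts a $(4\beta,v(\ell,U)/(12\log k))$-growing suffix.

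The main obstacle is the second (monotonicity) condition of Definition~\ref{def:growing-suffixes}: a single set $A_t(\ell,U)$ carries no usable order on $f$-values, yet its $i_{c+1}$-coordinates are spread across three segments, so one cannot simply dump all of $A_t(\ell,U)$ into the growing suffix. This is what forces both the restriction to one segment per used scale — the source of the factor $4$ in ``$4\beta$'' and of the constant in ``$12$'' — and the spacing of used scales by a residue class of width $L=\Theta(\log k)$ — the source of the $\log k$ factor — precisely so that the second bullet of Lemma~\ref{lem:At:properties} applies between every pair of segments that are used simultaneously; once these two choices are in place, checking that the numerology lands on $4\beta$ and $12\log k$ (with care at the scale boundary $i_{c+1}-i_c=2^{t+1}$) is routine.
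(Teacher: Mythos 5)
Your proposal is correct and follows essentially the same route as the paper: locate the $(c+1)$-coordinates in one of three consecutive segments $S_{t-1}(\ell'),S_t(\ell'),S_{t+1}(\ell')$, pigeonhole to one segment per scale, space the used scales by $\Theta(\log k)$ so that the second bullet of Lemma~\ref{lem:At:properties} applies between any two of them, and tally the densities. The only (cosmetic) difference is that you pick a residue class modulo $L=\lceil\Delta\rceil$ to enforce the spacing, whereas the paper extracts a $\Delta$-separated subset by averaging, which yields the same constants up to roundoff; you are also a touch more careful about the $i_{c+1}-\ell'=2^{t+1}$ boundary case, which the paper glosses over.
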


	\begin{proof}
		Let $\Delta = 1 + \log(1/\gamma) + \log(c+1) $, and notice that $3 \le \Delta \le 3\log k$. Then, there exists a set $\calT \subseteq \{0,\dots, \lfloor \log n \rfloor\}$ such that
		\begin{enumerate}
			\item 
				All distinct $t, t' \in \calT$ satisfy $|t - t' | \geq \Delta$; and,
			\item 
				$\sum_{t \in \calT} \frac{|A_t(\ell, U)|}{2^t} \geq \frac{1}{\Delta + 1} \sum_{t=0}^{\log n-1} \frac{|A_t(\ell, U)|}{2^t} = \frac{v(\ell, U)}{\Delta + 1}$.
		\end{enumerate}
		(Such a set exists by an averaging argument.) Now, consider the sets 
		\[
			D_t(\ell)= 
			\begin{cases}
				\{ i_{c+1} : (i_1,\dots,i_{k_0})\in A_t(\ell, U) \} &\text{ if } t\in\calT\\
				\emptyset &\text{ if } t\in\{0,\dots,\lfloor \log n \rfloor\}\setminus \calT .
			\end{cases}
		\]
		Considering any $\ell' \in [n]$ satisfying (\ref{eq:bound-on-ell}), we have the following for all $t \in \{0, \dots, \lfloor \log n \rfloor\}$ with $D_{t}(\ell) \neq \emptyset$: $\ell - 2^{t+1} \leq \ell' \leq \ell$; $\min D_t(\ell) \geq \ell + 2^t /3$; and $\max D_t(\ell) \leq \ell' + 2^{t+1}$. Therefore, $D_t(\ell) \subset S_{t-1}(\ell') \cup S_{t}(\ell') \cup S_{t+1}(\ell')$. (Recall that $S_t(a) = [a + 2^{t-1}, a + 2^t)$.)
		For each $t \in \calT$, let $n(t) \in \{ t-1, t, t+1\}$  satisfying $|D_t(\ell) \cap S_{n(t)}(\ell') | \geq |D_t(\ell)|/ 3$, and notice that all $n(t) \in \{0, \dots, \lfloor \log n \rfloor\}$ are distinct since $\Delta \geq 3$.

		The first condition in Definition~\ref{def:growing-suffixes} holds as the densities of $D_t(\ell) \cap S_{n(t)}(\ell')$ in the corresponding intervals $S_{n(t)}(\ell')$ are upper bounded by $|D_t(\ell)| / |S_{n(t)}(\ell')| \le |A_t(\ell, U)|/2^{t-2} \leq 4\beta$, and the sum of these densities satisfies 
		\[ 
			\sum_{t \in \calT} \frac{|D_t(\ell)\cap S_{n(t)}(\ell')|}{|S_{n(t)}(\ell')|} \geq \sum_{t \in \calT} \frac{|D_t(\ell)|}{3 \cdot 2^t} = \sum_{t \in \calT} \frac{|A_t(\ell, U)|}{3 \cdot 2^t} \geq \frac{v(\ell, U)}{3(\Delta + 1)},
		\]
		which is at least $v(\ell, U) / (12 \log k)$.
		The second condition in Definition~\ref{def:growing-suffixes} holds, because for any choice of $b \in D_t(\ell), b' \in D_{t'}(\ell)$ with $t < t'$, we have $t' \ge t + \Delta$ (by the choice of $\calT$), and hence $f(b) < f(b')$ by the second item of Lemma~\ref{lem:At:properties}.
	\end{proof}

	\begin{lemma}\label{lem:easy-growing}
		For every $\eta > 0$, there exists a subset $U \subset T$ such that every $(i_1, \dots, i_{k_0}) \in U$ has $i_{c}$ as the start of an $(1, \eta)$-growing suffix, and every $\ell \in [n]$ satisfies $v(\ell, T \setminus U) \leq 12\eta \log(k)$.
	\end{lemma}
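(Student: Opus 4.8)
The plan is to take $U$ to be \emph{exactly} the set of tuples in $T$ whose $c$-th coordinate starts a $(1,\eta)$-growing suffix, i.e.\ $U := \{(i_1,\dots,i_{k_0}) \in T : i_c \text{ starts a } (1,\eta)\text{-growing suffix}\}$. With this choice the first conclusion of the lemma holds by construction, so the entire content is the bound $v(\ell, T\setminus U) \le 12\eta\log k$ for every $\ell \in [n]$; I would prove this by contradiction, feeding the set $T\setminus U$ back into Lemma~\ref{lem:move-back}. The point is that Lemma~\ref{lem:move-back} "self-improves": whenever $v(\ell,\cdot)$ is large for the \emph{current} set of tuples, it exhibits a fresh tuple whose $c$-th index starts a growing suffix, and that index is literally among the $i_c$'s of the tuples in play.

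So suppose some $\ell \in [n]$ has $v(\ell, T\setminus U) > 12\eta \log k$. Since each $A_t(\ell, T\setminus U)$ consists of disjoint length-$k_0$ monotone subsequences whose $(c+1)$-st index lies in $[\ell,\ell+2^{t+1}]$, we have $|A_t(\ell, T\setminus U)| \le 2^{t+1}$ for every $t$, so the hypothesis of Lemma~\ref{lem:move-back} is satisfied with the set $T\setminus U$ and $\beta := 2$. Applying that lemma with $\ell' := \max\{\, i_c : (i_1,\dots,i_{k_0}) \in A_t(\ell, T\setminus U),\ t \in \{0,\dots,\lfloor \log n\rfloor\}\,\}$ — a maximum over a nonempty set, because $v(\ell, T\setminus U) > 0$ forces some $A_t(\ell, T\setminus U)$ to be nonempty — we obtain that $\ell'$ starts an $(8,\ v(\ell, T\setminus U)/(12\log k))$-growing suffix, and $v(\ell, T\setminus U)/(12\log k) > \eta$ by assumption.

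The final ingredient is the trivial observation that an index starting an $(\alpha,\beta')$-growing suffix with $\alpha \ge 1$ and $\beta' > \eta$ also starts a $(1,\eta)$-growing suffix: the witnessing sets $D_t$ in Definition~\ref{def:growing-suffixes} are already subsets of the $S_t$, so the per-segment density bound with parameter $1$ is automatic, while the density sum remains $>\eta$. Hence $\ell'$ starts a $(1,\eta)$-growing suffix. But $\ell' = i_c$ for some tuple $(i_1,\dots,i_{k_0}) \in A_t(\ell, T\setminus U) \subseteq T\setminus U$, and this tuple's $c$-th coordinate starts a $(1,\eta)$-growing suffix, so by the definition of $U$ it must belong to $U$ — contradicting that it lies in $T\setminus U$. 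This proves $v(\ell, T\setminus U) \le 12\eta\log k$ for all $\ell$, completing the argument. There is no serious obstacle here; the only points that need care are checking that $\beta = 2$ is a legitimate choice in Lemma~\ref{lem:move-back} (which rests on the uniform bound $|A_t(\ell,\cdot)| \le 2^{t+1}$) and that the set defining $\ell'$ is nonempty, both of which are immediate.
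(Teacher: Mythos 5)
Your proposal is correct and uses the same central ingredient as the paper (applying Lemma~\ref{lem:move-back} with $\beta = 2$, justified by the universal bound $|A_t(\ell,\cdot)| \le 2^{t+1}$, and then noting that any $(\alpha',\beta')$-growing suffix with $\alpha'\ge 1$ and $\beta'>\eta$ trivially yields a $(1,\eta)$-growing suffix). The only difference is your choice of $U$: you take $U$ to be the \emph{maximal} set of tuples whose $c$-th coordinate starts a $(1,\eta)$-growing suffix, and argue by contradiction, whereas the paper builds $U$ iteratively, at each step choosing an $\ell_j$ witnessing $v(\ell_j, T\setminus U_{j-1}) > 12\eta\log k$, feeding it into Lemma~\ref{lem:move-back}, and adding the resulting tuple to $U$; the iteration terminates since $T$ is finite. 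Your version is a bit cleaner --- it sidesteps the explicit greedy process and the termination argument by letting maximality do the work --- but the mathematical content is the same. Note also that $U$ is a superset of the paper's $U$, and since $v(\ell,\cdot)$ is monotone in the second argument, the stated bound is only easier to achieve with your (larger) $U$.
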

		
	\begin{proof}
		Define sets $U_j$, elements $\ell_j$, and $k_0$-tuples $(i_{j,1}, \ldots, i_{j, k_0})$ recursively as follows. Set $U_0 := \emptyset$, and given a set $U_{j-1}$, if $v(\ell, T \setminus U_{j-1}) \le 12 \eta \log k$ for every $\ell \in [n]$, stop; otherwise, let $\ell_{j} \in [n]$ be such that $v(\ell_j, T \setminus U_j) > 12 \eta \log k$ and define $U_j = U_{j-1} \cup \{(i_{j,1}, \ldots, i_{j,k_0})\}$, where 
		\[
			i_{j,c} = \max\{i_c : (i_1, \ldots, i_{k_0}) \in T \setminus U_j \text{ and $(i_1, \ldots, i_{k_0})$ is $\gamma$-cut by $\ell_j$}\}.
		\]
		Let $j^*$ be the maximum $j$ for which $U_j$ was defined, and set $U := U_{j^*}$.
		Every $k_0$-tuple in $U$ is of the form $(i_{j,1}, \ldots, i_{j, k_0})$ for some $j \le j^*$. By Lemma~\ref{lem:move-back}, applied with $\ell = \ell_j$, $U = T \setminus U_{j-1}$, $i_{j,c}$, it follows that $i_{j,c}$ is the start of an $(1, \eta)$-growing suffix, for every $j$ for which $U_j$ was defined. Lemma~\ref{lem:easy-growing} follows.
	\end{proof}

	We let $C > 0$ be a large enough constant. Let $U \subset T$ be the set obtained from Lemma~\ref{lem:easy-growing} with $\eta = Ck$, and suppose that $|U| \geq |T|/2$. Then, we may let $\alpha = 1$ and $H = \{ i_c : (i_1, \dots, i_{k_0}) \in U\}$. Notice that every index in $H$ is the start of an $(\alpha, Ck\alpha)$-growing suffix, and since $|H| \geq |T|/2$, we obtain the first item in Theorem~\ref{thm:main-structure}. Suppose then, that $|U| < |T|/2$, and consider the set $V = T\setminus U$. By definition of $V$, we now have $v(\ell, V) \leq 12 Ck \log k$ for every $\ell \in [n]$. Let $b_0$ be the largest integer which satisfies $2^{b_0} \leq 12 C k\log k$ and $b_1$ be the smallest integer which satisfies $2^{-b_1} \leq \delta/(12 k^2)$, so $2^{b_0} \lsim 2^{b_1} \asymp k^2/\delta$. For $-b_0 \leq j \leq b_1$, consider the pairwise-disjoint sets
	\begin{align} \label{eq:def-B}
		B_j &= \left\{ \ell \in [n] : 2^{-j} \leq v(\ell, V) \leq 2^{-j + 1}\right\},
	\end{align}
	and note that by Lemma~\ref{lem:expect}, since $|V| \geq |T| / 2 \ge \delta n / 2k^2$,
	\[ 
		\frac{1}{n} \sum_{j =-b_0}^{b_1} |B_j| \cdot 2^{-j+1} 
		\geq \frac{1}{n} \sum_{\ell \in [n]} v(\ell, V) \geq \frac{\delta}{6k^2}.  
	\]
	Thus, denoting
	\[ 
		\mu \eqdef \frac{\delta}{6 k^2(b_1 + b_0 +1)} \asymp \frac{\delta}{k^2\log(k/\delta)},
	\] 
	there is an integer $-b_0 \leq j^\ast \leq b_1$ that satisfies
	\begin{align} 
		|B_{j^\ast}| \cdot 2^{-j^\ast} &\geq \mu n. \label{eq:B-lb}
	\end{align}

	\newcommand{\GreedyDisjointIntervals}{\texttt{GreedyDisjointIntervals}}

	\begin{lemma}
		There exists a deterministic algorithm, $\emph{\GreedyDisjointIntervals}(f, B, j)$, which takes three inputs: a function $f \colon [n] \to \R$, a set $B \subseteq [n]$ of integers, and an integer $j \in [-b_0, b_1]$, and outputs a collection $\calI$ of interval-tuple pairs  
		or a subset $H \subseteq B$. An execution of the algorithm $\emph{\GreedyDisjointIntervals}(f, B_{j^\ast}, j^\ast)$ where $\mu$, $B_{j^\ast}$ and $j^\ast$ are defined in~\eqref{eq:B-lb}, satisfies one of the following two conditions, where $C > 0$ is a large constant.
			\begin{itemize}
			\item 
				The algorithm returns a set $H \subseteq B$ of indices that start a $(4 \cdot 2^{-j^\ast} /(Ck\log k), 2^{-j^\ast}/(12\log k))$-growing suffix, and $|H| \geq 2^{j^\ast-1} \mu n$; or
			\item 
				The algorithm returns a $(c, 1/(6k),2^{-j^\ast}/(8Ck^2\log k))$-splittable collection $(I_1, T_1), \dots, (I_s, T_s)$, where $\sum_{h = 1}^{s} |I_h| \geq 2^{j^\ast-2} \mu n$.
			\end{itemize}
	\end{lemma}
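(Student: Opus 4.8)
The plan is to carry out the growing-suffix/splittable-interval dichotomy at the single scale $j^\ast$. Throughout, $V=T\setminus U$ denotes the set of $c$-gap tuples fixed just before the lemma, so that $v(\ell,V)\le 12Ck\log k$ for every $\ell$ and $2^{-j^\ast}\le v(\ell,V)\le 2^{-j^\ast+1}$ for $\ell\in B_{j^\ast}$; put $\tau:=2^{-j^\ast}/(Ck\log k)$. On input $(f,B_{j^\ast},j^\ast)$ the algorithm first computes
\[
	H_0=\bigl\{\,\ell\in B_{j^\ast}\ :\ |A_t(\ell,V)|/2^t<\tau\ \text{ for every }t\in\{0,\dots,\lfloor\log n\rfloor\}\,\bigr\}.
\]
If $|H_0|\ge 2^{j^\ast-1}\mu n$ it returns $H:=H_0$; otherwise $|B_{j^\ast}\setminus H_0|\ge|B_{j^\ast}|-2^{j^\ast-1}\mu n\ge 2^{j^\ast-1}\mu n$ (using $|B_{j^\ast}|\cdot 2^{-j^\ast}\ge\mu n$), and it returns the splittable collection produced by a greedy sweep described below.

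For the first branch, every $\ell\in H_0$ has all scale-densities bounded by $\tau$, so Lemma~\ref{lem:move-back} applies with $U=V$, $\beta=\tau$ and $\ell'=\ell$; the choice $\ell'=\ell$ is admissible because Lemma~\ref{lem:At:properties} gives $i_c\le\ell-\gamma 2^t<\ell$ for every tuple counted by any $A_t(\ell,V)$. Hence $\ell$ starts a $(4\tau,v(\ell,V)/(12\log k))$-growing suffix, and as $v(\ell,V)\ge 2^{-j^\ast}$ this is a $\bigl(4\cdot 2^{-j^\ast}/(Ck\log k),\,2^{-j^\ast}/(12\log k)\bigr)$-growing suffix; combined with $|H|\ge 2^{j^\ast-1}\mu n$ this is exactly the first conclusion.

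For the second branch, for each $\ell\in B_{j^\ast}\setminus H_0$ fix the largest witness scale $t(\ell)$ with $|A_{t(\ell)}(\ell,V)|\ge\tau 2^{t(\ell)}$. Sweeping the $\ell$'s in decreasing order of $t(\ell)$ and maintaining a set $W$ of consumed tuples, on reaching $\ell$ we check whether $\mathcal A:=A_{t(\ell)}(\ell,V\setminus W)$ still has $|\mathcal A|\ge\tfrac12\tau 2^{t(\ell)}$ and whether $I(\ell):=[\min_{\mathcal A}i_1,\max_{\mathcal A}i_{k_0}]$ is disjoint from all previously selected intervals; if so we select $I(\ell)$ and move $\mathcal A$ into $W$. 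On this interval we build a splittable pair in the manner of Newman et al.: let $r_{\med}$ be the median of $\{f(i_c):(i_1,\dots,i_{k_0})\in\mathcal A\}$, keep the length-$c$ prefixes of the tuples with $f(i_c)\le r_{\med}$ and the length-$(k_0-c)$ suffixes of the tuples with $f(i_c)\ge r_{\med}$, and stitch a matching of prefixes to suffixes into a set $T_{I(\ell)}$ of $\Omega(|\mathcal A|)$ tuples. Each stitched tuple $(i_1,\dots,i_c,i'_{c+1},\dots,i'_{k_0})$ is an increasing subsequence of $f$ because $i_c<\ell<i'_{c+1}$ and $f(i_c)\le r_{\med}\le f(i'_c)<f(i'_{c+1})$, and the stitched tuples are pairwise disjoint because all prefix indices lie below $\ell$ and all suffix indices above it. By Lemma~\ref{lem:At:properties} every kept prefix index is $\le\ell-\gamma 2^{t(\ell)}$, every kept suffix index is $\ge\ell+\gamma 2^{t(\ell)}$, and $|I(\ell)|\le 4(k-1)2^{t(\ell)}+1$; so taking $L$ to be the part of $I(\ell)$ at or below the largest prefix index, $R$ the part at or above the smallest suffix index, and $M$ the rest, one checks (using $\gamma=1/3$) that $|M|\ge|I(\ell)|/(6k)$, while $|T_{I(\ell)}|/|I(\ell)|\ge 2^{-j^\ast}/(8Ck^2\log k)$ for a suitable choice of $C$. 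Thus each selected pair is $(c,1/(6k),2^{-j^\ast}/(8Ck^2\log k))$-splittable, and together they form a splittable collection of $T:=\bigcup_h T_{I_h}$ (with $E(T)\subseteq E(V)\subseteq E(T_0)$); the degenerate small-scale case, where $I(\ell)$ is too short to contain the middle interval, forces $\mathcal A$ to consist of few, widely spread tuples and is handled by associating one interval to each such tuple.

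The crux is the bound $\sum_h|I_h|\ge 2^{j^\ast-2}\mu n$, proved by a charging argument showing that every $\ell\in B_{j^\ast}\setminus H_0$ lies in a bounded dilation of some selected interval. If $I(\ell)$ was not selected, then either it overlapped a previously selected $I(\ell')$ — in which case $t(\ell')\ge t(\ell)$ and, since $|I(\ell)|=O(k2^{t(\ell)})\le O(k2^{t(\ell')})$, the point $\ell$ is within $O(k2^{t(\ell')})$ of $I(\ell')$ — or more than half of $A_{t(\ell)}(\ell,V)$ had already been consumed, and any such consumed tuple lies in $A_{t(\ell')}(\ell',V\setminus W')$ for some selected $I(\ell')$; because a tuple's scale is intrinsic this forces $t(\ell')=t(\ell)$, and because both $\ell$ and $\ell'$ lie strictly between the $c$-th and $(c+1)$-th indices of that tuple we get $|\ell-\ell'|\le 2^{t(\ell)+1}$. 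In either case $\ell$ lies inside a fixed constant multiple of $I(\ell')$, so $|B_{j^\ast}\setminus H_0|$ is at most a constant times $\sum_h|I_h|$; with $|B_{j^\ast}\setminus H_0|\ge 2^{j^\ast-1}\mu n$ this gives the claimed bound. I expect this last step to be the main obstacle, since the charging has to be run tightly enough that the dilation costs only a factor of about $2$ (the gap between the $2^{j^\ast-1}\mu n$ we have and the $2^{j^\ast-2}\mu n$ we need); this forces careful treatment of intervals $I(\ell)$ truncated against the ends of $[n]$ — where $|I(\ell)|$ can be far shorter than $4k2^{t(\ell)}$, so both the density bound and the covering bound must be rechecked — and a precise balancing of all the constants, in particular the slack between the cap $\tau$ defining $H_0$ (which feeds the growing-suffix parameter $4\tau$) and the $\tfrac12\tau 2^{t(\ell)}$ threshold tolerated after consumption (which feeds the splittable density $2^{-j^\ast}/(8Ck^2\log k)$). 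Everything else — disjointness of the selected intervals and of the stitched tuples, monotonicity, and the cross-value condition $f(i_c)<f(j_1)$ — is routine once these accounting choices are fixed.
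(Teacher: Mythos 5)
Your high-level strategy matches the paper's: threshold on the per-scale densities $|A_t(\ell,V)|/2^t$, route the high-density $\ell$'s into a growing-suffix set via Lemma~\ref{lem:move-back} (that branch is essentially identical to the paper's), and otherwise run a greedy sweep at the witness scale $t(\ell)=q(\ell)$ to extract disjoint splittable intervals, using the median split to build $T''$. The differences are in the mechanics of the greedy sweep, and both of the two departures you make create genuine gaps that you yourself flag as unresolved. First, your tuple-dependent interval $I(\ell)=[\min_{\mathcal A}i_1,\max_{\mathcal A}i_{k_0}]$ can be far shorter than $k\cdot 2^{t+1}$ on one side: if the prefixes of $\mathcal A$ cluster (in the extreme, $c=1$ and all first indices coincide), $|L|$ can be $O(1)$ while $|I(\ell)|=\Omega(2^t)$, so $|L|\geq |I(\ell)|/(6k)$ fails once $2^t\gg k$. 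The paper avoids this entirely by taking the fixed-length interval $I_h=[\ell-k2^{t+1},\ell+k2^{t+1}]$ and the fixed middle interval $M_h=(\ell-\gamma 2^t,\ell+\gamma 2^t)$, under which $|L_h|,|M_h|,|R_h|\geq|I_h|/(6k)$ is automatic (with $\gamma=1/3$) and the tuple locations from the first bullet of Lemma~\ref{lem:At:properties} guarantee $T_h^{(L)}\subseteq L_h^c$ and $T_h^{(R)}\subseteq R_h^{k_0-c}$.

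Second, and more seriously, your charging argument does not reach $\sum_h|I_h|\geq 2^{j^\ast-2}\mu n$. Because $|I(\ell)|$ can be as large as $\Theta(k2^{t(\ell)})$ while a previously selected $|I(\ell')|$ can be as small as $\Theta(2^{t(\ell')})$, the dilation needed to cover a skipped $\ell$ is $\Theta(k)$, not a factor $\leq 2$, and even if you were to switch to the paper's fixed-length intervals (so $|I(\ell)|\leq|I(\ell')|$) the natural dilation constant is $4$, not $2$. The paper's proof sidesteps this whole difficulty: instead of tracking consumed tuples $W$ and charging skipped $\ell$'s, it simply removes the interval $[\ell-2k2^{t+1},\ell+2k2^{t+1}]$ from the candidate set $D$ upon selecting $\ell$. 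That removed interval has length exactly $2|I_h|$ (up to an additive $1$), so $\sum_h|I_h|\geq|D|/2\geq|B_{j^\ast}|/4\geq 2^{j^\ast-2}\mu n$ is immediate, and the disjointness of the $I_h$'s follows from the removal together with processing in decreasing order of $q(\ell)$. I would recommend adopting the fixed-length interval and the ``remove a double-sized neighborhood from $D$'' bookkeeping, which eliminates both gaps at once.
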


	\begin{figure}[ht!]
		\begin{framed}

			\centering
			\begin{minipage}{.98\textwidth}
				\noindent Subroutine $\GreedyDisjointIntervals\vspace{0.3cm}(f, B, j)$

				\noindent {\bf Input:} A function $f \colon [n] \to \R$, a set $B \subseteq [n]$ and an integer $j$, such that every $\ell \in B$ satisfies $2^{-j} \leq v(\ell, V) \leq 2^{-j+1}$.\\
				{\bf Output:} a set of disjoint intervals-tuple pairs $(I_1, T_1), \dots, (I_s, T_s)$ or a subset $H \subseteq B$. 
				
				\begin{enumerate}
					\item 
						Let $\calI$ be a collection of interval-tuple pairs, which is initially empty.
					\item 
						Consider the map $q \colon B \to \{ 0, \dots, \lfloor \log n \rfloor\} \cup \{ \bot \}$ defined by
						\begin{align*} 
						q(\ell) = \left\{ 
							\begin{array}{lc} 
								\bot  & \forall t \in \{0, \dots, \lfloor \log n \rfloor \}, \frac{|A_t(\ell,V)|}{2^{t}} <  \frac{2^{-j}}{ Ck\log k} \\
								\max\left\{ t :  \frac{|A_t(\ell, V)|}{2^t} \geq \frac{2^{-j}}{Ck \log k} \right\} & \text{otherwise} 
							\end{array} 
							\right.  .
						\end{align*} 
						\item\label{step:return:H} Let $H = \{ \ell \in B : q(\ell) = \bot\}$, and \Return $H$ if $|H| \geq |B| / 2$. 
					\item Otherwise, let $D \gets B \setminus H$ and repeat the following until $D = \emptyset$:
					\begin{itemize}
						\item\label{step:qt} 
							Pick any $\ell \in D$ where $q(\ell) = \max_{\ell' \in D} q(\ell')$, and let $t = q(\ell)$.
						\item\label{step:IT} 
							Let $I \gets [\ell - k 2^{t+1}, \ell + k 2^{t+1}] \cap [n]$ and $T' \gets A_t(\ell, V)$.
						\item 
							Obtain $T''$ from $T'$ as follows: find a value $\nu$ such that at least $|T'|/2$ of tuples $(i_1,\dots, i_{k_0})\in T'$ satisfy $f(i_c) \leq \nu$, and at least $|T'|/2$ of tuples $(i_1,\dots, i_{k_0})\in T'$ satisfy $f(i_{c+1}) > \nu$ ($\nu$ could be taken to be the median of the multiset $\{f(i_c) : (i_1, \dots, i_{k_0}) \in T'\}$).  Recombine these prefixes and suffixes (matching them in one-to-one correspondence) to obtain a set of disjoint $k_0$-tuples $T''$ of size $|T''|\geq |T'|/2$.
						\item\label{step:update:C} 
							Append $(I, T'')$ to $\calI$, and let $D \gets D \setminus [\ell - 2 \cdot k 2^{t+1}, \ell + 2 \cdot k 2^{t+1}]$.
					\end{itemize} 
				\item\label{step:return:I} \Return $\calI$.
				\end{enumerate}
			\end{minipage}
		\end{framed}\vspace{-0.2cm}
		\caption{Description of the $\GreedyDisjointIntervals$ subroutine.} \label{fig:greedy-intervals}
	\end{figure}

	\begin{proof}
		It is clear that the algorithm always terminates, and outputs either a collection $\calI$ of interval-tuple pairs or a subset $H \subseteq B$. 
		Suppose that the input of the algorithm, $(f, B_{j^\ast}, j^\ast)$, satisfies~\eqref{eq:B-lb}, and consider the two possible types of outputs.\medskip

		If the algorithm returns a set $H\subseteq B_{j^\ast}$ (in step~\ref{step:return:H}), then we have $|H| \geq \frac{|B|}{2} \geq \frac{1}{2}\cdot 2^{j^\ast}\mu n$ (the second inequality by~\eqref{eq:B-lb}). (To see why the elements of $H$ start $(4 \cdot 2^{-j^\ast} /(Ck\log k), 2^{-j^\ast}/(12\log k))$-growing suffixes (Definition~\ref{def:growing-suffixes}), notice that we may apply Lemma~\ref{lem:move-back} with $\ell' = \ell$ and $\beta = 2^{-j^\ast} / (Ck \log k)$.) 
		
		If, instead, the algorithm returns a collection $\calI = ((I_h, T_h) : h \in [s])$ in step~\ref{step:return:I}, we have that, by construction, each $T_h$ is obtained from a set $T'_h=A_{t}(\ell, V)$ for some $\ell$ with $q(\ell)\neq \bot$. Consequently, for all $h \in[s]$ we have 
		\begin{equation} \label{eq:density-T-h}
			\frac{|T_h|}{|I_h|} \geq \frac{|T'_h|}{2|I_h|} 
			\geq \frac{|A_{q(\ell)}(\ell, V)|}{4k \cdot 2^{q(\ell)+1}} 
			\geq \frac{1}{8k} \cdot \frac{2^{-j^\ast}}{Ck\log k}.
		\end{equation}
		(from the definition of $q(\ell)$). To argue that $\sum_{h=1}^s |I_h|$ is large, observe that, since we did not output the set $H$, we must have had $|D|>|B_{j^\ast}|/2$. Since, when adding $(I_h, T_h)$ (corresponding to some $\ell_h$) to $\calI$ we remove at most $4k 2^{q(\ell)+1}=2|I_h|$ elements from $D$, in order to obtain an empty set $D$ and reach step~\ref{step:return:I} we must have $\sum_{h=1}^s |I_h| \geq |B_{j^\ast}|/4$, which is at least $2^{j^\ast}\mu n/4$ by~\eqref{eq:B-lb}. Moreover, the sets $I_h$ are disjoint: this is because of our choice of maximal $q(\ell)$ in step~\ref{step:IT}, which ensures that after removing $[\ell - 2k 2^{q(\ell)+1}, \ell + 2k 2^{q(\ell)+1}]$ in step~\ref{step:update:C} there cannot remain any $\ell'\in D$ with $[\ell' - k 2^{q(\ell')+1}, \ell' + k 2^{q(\ell')+1}]\cap I_h \neq \emptyset$.

		Thus, it remains to prove that $\calI$ is a $(c,1/(6k),2^{-j^\ast}/(8Ck^2\log k))$-splittable collection. To do so, consider any $(I_h, T_h)\in\calI$. 
		The first condition in Definition~\ref{def:splittable} of splittable pairs, namely that $|T_h| / |I_h| \ge 2^{-j^\ast}/(8Ck^2 \log k)$ holds due to \eqref{eq:density-T-h}.
		Recalling step~\ref{step:IT}, we have $I_h = [\ell - k 2^{t+1}, \ell + k 2^{t+1}]$ for some $\ell$, where $t = q(\ell)$, and $T_h$ obtained from $T_h' = A_{t}(\ell, V)$. Set
		\[
			L_h\eqdef [\ell - k 2^{t+1}, \ell - \gamma 2^{t}], \quad M_h\eqdef (\ell - \gamma 2^{t}, \ell + \gamma 2^{t}),\quad R_h\eqdef [\ell + \gamma 2^{t}, \ell + k 2^{t+1}].
		\]
		This is a partition of $I_h$ into three adjacent intervals whose size is at least $|I_h| / (6k)$ (recall that $\gamma = 1/3$). Moreover, for every $(i_1,\dots,i_{k_0})\in T_h'$, the $c$-prefix $(i_1,\dots,i_c)$ is in $(L_h)^c$ while the $(k_0-c)$-suffix $(i_{c+1},\dots,i_{k_0})$ is in $(R_h)^{k_0-c}$, by the first item of Lemma~\ref{lem:At:properties}. Since $T_h$ is obtained from a subset of these very prefixes and suffices, the conclusion holds for $T_h$ as well. Moreover, our construction of $T_h$ from $T'_h$ guarantees that the last requirement in Definition~\ref{def:splittable} holds: for every prefix $(i_1, \dots, i_c)$ of a tuple in $T_h$ and suffix $(j_1, \dots, j_{k_0-c})$ of a tuple in $T_h$, we have $f(i_c) < f(j_1)$. This shows that $(I_h,T_h)$ is $(c,1/(6k),2^{-j^\ast}/(8Ck^2 \log k))$-splittable, and overall that $\calI$ is a $(c,1/(6k),2^{-j^\ast}/(8Ck^2\log k))$-splittable collection as claimed.
	\end{proof}

	Theorem~\ref{thm:main-structure} follows by executing $\GreedyDisjointIntervals(f, B_{j^\ast}, j^\ast)$. 
	If the algorithm outputs a set $H \subseteq B_{j^\ast}$, set $\alpha = 4 \cdot 2^{-j^\ast} / (Ck\log k)$, so we have identified a subset $H$ of $(\alpha, C' \alpha k)$-growing suffixes (where $C' = C / 48$) satisfying $\alpha |H| \geq \delta n / \poly(k, \log(1/\delta)) = |T_0| / \poly(k, \log(1/\delta))$ (using the definition of $\mu$ before \eqref{eq:B-lb}). 
	Otherwise, set $\alpha = 2^{-j^\ast} / (8 Ck^2\log k)$, and the algorithm outputs a $(c, 1/(6k),\alpha)$-splittable collection $\{(I_1, T_1), \dots, (I_s, T_s)\}$ of the set $T' := \cup_{h \in [s]} T_h$. Clearly, $E(T') \subseteq E(T)$, and moreover, $\alpha \sum_{h = 1}^s |I_h| \ge \delta n / \poly(k, \log(1/\delta)) = |T_0| / \poly(k, \log(1/\delta))$. 
	In fact, $2^{-j^*} = \Omega(\delta / k^2)$ and so $\alpha \ge \Omega(\delta / (k^4 \log k))$.

\section{The Algorithm}\label{sec:algorithm}

\subsection{High-level plan}

	\newcommand{\Sampler}{\texttt{Sampler}}

	We now present the algorithm for finding monotone subsequences of length $k$. 

	\begin{theorem}\label{thm:ub}
		Consider any fixed value of $k \in \N$. There exists a non-adaptive and randomized algorithm, $\emph{\Sampler}_k(f, \eps)$, which takes two inputs: query access to a function $f \colon [n] \to \R$ and a parameter $\eps > 0$. If $f$ is $\eps$-far from $(12\dots k)$-free, then $\emph{\Sampler}_k(f, \eps)$ finds a $(12\dots k)$-pattern with probability at least $9/10$. The query complexity of $\emph{\Sampler}_k(f, \eps)$ is at most 
		\[ \frac{1}{\eps} \left(\frac{\log n}{\eps}\right)^{\lfloor \log_2 k \rfloor} \cdot \poly(\log(1/\eps))\,. \]
	\end{theorem}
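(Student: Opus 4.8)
The plan is to derive Theorem~\ref{thm:ub} by combining the structural dichotomy of Theorem~\ref{thm:tree} with two independent sampling subroutines, one for each side of the dichotomy. The algorithm $\texttt{Sampler}_k(f,\eps)$ runs both in parallel: the \emph{growing-suffix sampler} (the straightforward sampler of Section~\ref{sec:case1}) and the \emph{splittable-intervals sampler}, which is the $r_0 = 1 + \lfloor\log_2 k\rfloor$-round process sketched in the introduction and shown in Figures~\ref{fig:sample-splittable} and~\ref{fig:sample-splittable-2}. It queries $f$ at the union of the indices requested by the two subroutines and outputs any $(12\dots k)$-pattern among the queried points. By Theorem~\ref{thm:tree}, if $f$ is $\eps$-far from $(12\dots k)$-free then either (i) there is a set $H$ of starts of $(\alpha, Ck\alpha)$-growing suffixes with $\alpha|H| \ge \eps n/\poly(k,\log(1/\eps))^{k}$, or (ii) there is a set $T$ of $\ge \eps n/\poly(k,\log(1/\eps))^{k^2}$ disjoint monotone $k$-patterns represented by a $(k,k,\beta)$-tree descriptor with $\beta \ge \eps/\poly(k,\log(1/\eps))^{k^2}$. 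Hence it suffices to prove that the growing-suffix sampler succeeds with probability $\ge 9/10$ in case~(i), the splittable-intervals sampler succeeds with probability $\ge 9/10$ in case~(ii), and each makes at most $\tfrac1\eps\bigl(\tfrac{\log n}{\eps}\bigr)^{\lfloor\log_2 k\rfloor}\poly(\log(1/\eps))$ queries; since $k$ is fixed, all $\poly(k)$ factors are absorbed into constants and the $\poly(\log(1/\eps))$ factor comes from the density loss $\beta \ge \eps/\poly(k,\log(1/\eps))^{k^2}$.

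Case~(i) (Lemma~\ref{lem:case1}) is the easy part. The sampler draws $\poly(k,\log(1/\eps))/\eps$ uniform indices from $[n]$, each of which lands in $H$ with probability $|H|/n$, and for every such candidate $a$ it samples $\poly(k,\log(1/\eps))/\eps$ indices from each of the $O(\log n)$ intervals $S_t(a)=[a+2^{t-1},a+2^t)$. Conditioning on having sampled some $a\in H$, Definition~\ref{def:growing-suffixes} supplies subsets $D_t(a)\subseteq S_t(a)$ with density $\rho_t\le\alpha$ and $\sum_t \rho_t \ge Ck\alpha$; a short pigeonhole argument shows that with $\poly(k)/\alpha$ samples per interval we hit $D_t(a)$ for at least $k$ distinct scales $t_1<\dots<t_k$ with high probability, and by property~(\ref{en:grow-cond-2}) of Definition~\ref{def:growing-suffixes} the hit indices form an increasing subsequence of length $k$. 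The total query count is $O(\log n)\cdot\poly(k,\log(1/\eps))/\eps$, which is within the claimed bound for all $k\ge 2$ (the case $k=1$ being trivial).

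Case~(ii) (Lemma~\ref{lem:case2}) carries the main weight, and I would prove it by induction on $k_0$, showing: for any interval $I$ and any $(k,k_0,\beta)$-tree descriptor $(G,\varrho,\sfI)$ of $(f,T,I)$ with $|T|/|I|\ge\beta$, running the last $1+\lfloor\log_2 k_0\rfloor$ rounds of the sampling process on $I$ (with sampling rate $\Theta(\poly(\log(1/\eps))/(\eps\cdot|I|))$ in the first of these rounds) produces a set containing a $k_0$-pattern of $f$ in $I$ with probability at least $1-|V(G)|/(100k)$. The base case $k_0=1$ is a single density estimate. For the step, take the primary-index leaf $w$, the root-to-$w$ path $(u_1=r,\dots,u_h=w)$ and its sibling nodes $(u_2',\dots,u_h')$, each of whose subtrees has strictly fewer than $k_0/2$ leaves; with probability $\ge 1-1/(100k)$ the first round puts into $\bA$ an index $i$ lying in a $1$-tuple stored at $w$. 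For each $l$, the scale data recorded along the path (the $\varrho$-values and the ``at scale $t$'' information, inflated by the $\Theta(\log k)$ slack of Lemma~\ref{lem:At:properties}) guarantees that one of the $O(\log n)$ intervals $B_{i,j}=[i-2^j,i+2^j]$ considered in the next round \emph{contains} an interval from $\sfI(u_l)$ on which the subtree rooted at $u_l'$ is itself a $(k,k_l',\Omega(\beta))$-tree descriptor with $k_l'<k_0/2$ leaves — and, crucially, by splittability there is \emph{no} $\log n$ loss in the density, the key improvement over \cite{NRRS17}. Since the rate $\Theta(1/(\eps 2^j))$ on $B_{i,j}$ matches (up to $\poly(\log(1/\eps))$, absorbed into the rate) the rate the inductive process needs on that subinterval, I would apply the induction hypothesis to each of the $h-1$ sibling subtrees with $1+\lfloor\log_2(k_0/2)\rfloor=\lfloor\log_2 k_0\rfloor$ remaining rounds; union-bounding over the $\le 2k$ nodes of $G$ keeps the total failure probability below $1/10$, and the $k_l'$-patterns returned concatenate along the leaves of $G$ in left-to-right order — using $f(i_c)<f(j_1)$ at every split, i.e.\ the interleaving property propagated through the descriptor — into one increasing subsequence of length $k_0$. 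Taking $k_0=k$, $I=[n]$ finishes case~(ii); the footnote-level subtlety that one re-runs $\GreedyDisjointTuples$ after each split is already incorporated into Definition~\ref{def:tree-rep}, so the induction is unaffected. The query bound follows by unrolling: the first round costs $\poly(\log(1/\eps))/\eps$ queries, and each of the remaining $\lfloor\log_2 k\rfloor$ rounds multiplies the count by $O(\log n/\eps)\cdot\poly(\log(1/\eps))$ (per surviving index, $O(\log n)$ intervals with $\Theta(\poly(\log(1/\eps))/\eps)$ samples each), giving $\tfrac1\eps\bigl(\tfrac{\log n}{\eps}\bigr)^{\lfloor\log_2 k\rfloor}\poly(\log(1/\eps))$; non-adaptivity is immediate since the round-$r$ sample locations depend only on the random indices chosen in round $r-1$. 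The main obstacle is precisely the inductive step above: matching the $O(\log n)$ algorithmic intervals $B_{i,j}$ to the intervals stored at the nodes of the tree descriptor with the right scales, verifying that restriction to $B_{i,j}$ yields a genuine lower-arity descriptor of comparable density, and controlling the union bound over $\le 2k$ nodes while the sampling rate decays by only a constant (not a $\log n$) factor per level.
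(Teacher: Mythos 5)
Your proposal matches the paper's strategy: combine the growing-suffix sampler (Lemma~\ref{lem:case1}) and the recursive splittable-intervals sampler (Lemma~\ref{lem:case2}) via the dichotomy of Theorem~\ref{thm:tree}, with the splittable case argued by induction on the tree descriptor over $1+\lfloor\log_2 k_0\rfloor$ rounds. Your framing recurses directly on all siblings of the primary-index path at once, whereas Lemma~\ref{lem:case2-alg} implements the same recursion as a binary induction on the root's two children (the smaller side getting one fewer round) and introduces the helper subroutine $\SampleH^*$ together with the coupling lemma (Lemma~\ref{lem:coupling}) to rigorously carry out the interval bookkeeping and conditioning that you correctly flag as the main obstacle but leave implicit.
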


	The particular dependence on $k$ and $\log(1/\eps)$ obtained from Theorem~\ref{thm:ub} is on the order of $(k \log(1/\eps))^{O(k^2)}$. 
	The algorithm is divided into two cases, corresponding to the two outcomes from an application of Theorem~\ref{thm:tree}. Suppose $f \colon [n] \to \R$ is a function which is $\eps$-far from being $(12\dots k)$-free. By Theorem~\ref{thm:tree} one of the followin holds, where $C > 0$ is a large constant.
	\begin{description}
		\item[Case 1:] 
			there exist $\alpha \ge \eps / \polylog(1/\eps)$ and a set $H \subseteq [n]$ of $(\alpha, Ck\alpha)$-growing suffixes where $\alpha |H| \geq \eps n / \polylog(1/\eps)$, or
		\item[Case 2:] 
			there exist a set $T \subseteq [n]^k$ of disjoint, length-$k$ monotone sequences, that satisfies $|T| \geq \eps n / (\polylog(1/\eps))$, and a $k$-tree descriptor $(G, \varrho, \sfI)$ which represents $(f, T, [n])$. 
	\end{description}
	Theorem~\ref{thm:ub} follows from analyzing the two cases independently, and designing an algorithm for each.

	\newcommand{\SampleSuffix}{\texttt{Sample-Suffix}}
	\begin{lemma}[Case 1]\label{lem:case1}
		Consider any fixed value of $k \in \N$, and let $C > 0$ be a large enough constant. There exists a non-adaptive and randomized algorithm, $\emph{\SampleSuffix}_k(f, \eps)$ which takes two inputs: query access to a function $f \colon [n] \to \R$ and a parameter $\eps > 0$. Suppose there exist $\alpha \in (0,1)$ and a set $H \subseteq [n]$ of $(\alpha, Ck \alpha)$-growing suffixes satisfying $\alpha |H| \geq \eps n / \polylog(1/\eps)$,\footnote{Here we think of $k$ as fixed, so $\polylog(1/\eps)$ is allowed to depend on $k$. In this lemma, the expression stands for $(k \log (1/\eps))^k$.} then $\emph{\SampleSuffix}_k(f, \eps)$ finds a length-$k$ monotone subsequence of $f$ with probability at least $9/10$.
		The query complexity of $\emph{\SampleSuffix}_k(f, \eps)$ is at most
		\[ 
			\frac{\log n}{\eps} \cdot \polylog(1/\eps).
		\]
	\end{lemma}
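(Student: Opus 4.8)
The plan is to exploit the definition of an $(\alpha, Ck\alpha)$-growing suffix directly: if $a \in H$, then the suffix $(a,n]$ is partitioned into segments $S_1(a), \dots, S_{\eta_a}(a)$ of geometrically increasing lengths, carrying subsets $D_t(a) \subseteq S_t(a)$ with $|D_t(a)|/|S_t(a)| \le \alpha$ for each $t$, with $\sum_t |D_t(a)|/|S_t(a)| \ge Ck\alpha$, and with the monotonicity property that $f(b) < f(b')$ whenever $b \in D_t(a)$, $b' \in D_{t'}(a)$, $t < t'$. Thus it suffices to land one sampled point in each of $k$ distinct $D_{t}(a)$'s with increasing $t$: their $f$-values are automatically increasing, and (since each $D_t(a)$ lies in a single segment, which lies to the right of $a$) their indices are automatically increasing, so they form a $(12\dots k)$-pattern. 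So the algorithm $\SampleSuffix_k$ will: (i) draw $m_1 = \Theta(\polylog(1/\eps)/\eps)$ uniform samples $\ell \in [n]$, hoping one of them lies in $H$ (more precisely, is the start point $a$ of a good growing suffix); and (ii) for each sampled $\ell$ and each scale $t = 1, \dots, \lceil \log_2 n\rceil$, draw $m_2 = \Theta(\poly(k)/( \alpha \cdot \text{something}))$ — I will need to be careful here — uniform samples from the segment $[\ell + 2^{t-1}, \ell+2^t)$. Then query $f$ on all sampled points and check whether any $k$ of them form an increasing subsequence. Non-adaptivity is automatic since all these samples can be chosen in advance.

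The key quantitative step is a "spread-out" argument. Fix $a \in H$. We know $\sum_{t} p_t \ge Ck\alpha$ where $p_t \coloneqq |D_t(a)|/|S_t(a)| \le \alpha$. Consider the number of samples we take inside $S_t(a)$: if we take $N$ uniform samples from $S_t(a)$, the expected number landing in $D_t(a)$ is $N p_t$, and the probability of landing in $D_t(a)$ at least once is roughly $1 - (1-p_t)^N \ge 1 - e^{-Np_t} \ge \Omega(\min\{1, Np_t\})$. If we set $N$ so that $N\alpha \approx \poly(k)$, i.e.\ $N \approx \poly(k)/\alpha$, then for each $t$ the "success probability at scale $t$" is $\gtrsim N p_t / \poly(k) \cdot (\text{const})$ — actually I'd phrase it as: the expected number of scales $t$ at which we hit $D_t(a)$ is $\sum_t (1 - (1-p_t)^N) \ge \sum_t \Omega(\min\{1,Np_t\})$. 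Since each $p_t \le \alpha$ and $N \ge 1/\alpha$, we have $Np_t \le N\alpha$; choosing $N = \Theta(\poly(k)/\alpha)$ forces $\min\{1,Np_t\} \ge Np_t/\poly(k) \cdot c$, whence the expected number of hit scales is $\ge c' N \sum_t p_t /\poly(k) \ge c' (\poly(k)/\alpha)(Ck\alpha)/\poly(k) \ge$ some large constant times $k$ (taking $C$ large). Then a second-moment / Markov-type argument on the number of hit scales shows that with constant probability we hit at least $k$ distinct scales, which gives the pattern. The query count is $m_1 \cdot (\log n) \cdot N = \Theta(\polylog(1/\eps)/\eps) \cdot \log n \cdot \poly(k)/\alpha$. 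Since $\alpha |H| \ge \eps n/\polylog(1/\eps)$ gives $|H|/n \ge \eps/(\alpha \polylog(1/\eps))$, the probability a given uniform sample lands in $H$ is $\ge \eps/(\alpha\polylog(1/\eps))$, so $m_1 = \Theta(\alpha \polylog(1/\eps)/\eps)$ suffices for step (i) to succeed w.h.p. Multiplying, $m_1 \cdot \log n \cdot N = \Theta(\alpha\polylog(1/\eps)/\eps)\cdot \log n \cdot \poly(k)/\alpha = \Theta((\log n/\eps)\cdot\polylog(1/\eps))$ — the $\alpha$'s cancel, matching the claimed bound.

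There is one subtlety I'd want to handle carefully: $H$ is a set of *start points*, but a uniform sample $\ell$ is unlikely to exactly equal a start point $a$ unless $|H|$ is large. Here it is, since $|H| \ge \eps n/(\alpha\polylog(1/\eps))$ and $\alpha \ge \eps/\polylog(1/\eps)$, so $|H|/n \ge \eps^2/\polylog(1/\eps)$, i.e.\ a uniform sample hits $H$ with probability $\gtrsim \eps^2/\polylog(1/\eps)$; actually using $\alpha|H| \ge \eps n/\polylog$ directly, $|H|/n \ge \eps/(\alpha \polylog(1/\eps))$, and the $m_1$ bound above is exactly tuned to this. So we do sample an actual start point. (If one wanted to be safe one could instead observe that landing near a start point also works, by shifting scales by $\pm 1$ as in Lemma~\ref{lem:move-back}, but this is not needed.) The main obstacle is the spread-out argument: turning "$\sum_t p_t$ large but each $p_t$ small" into "many scales are independently hit, hence $\ge k$ of them with constant probability" requires choosing $N$ and $C$ so that the per-scale hit probabilities are simultaneously bounded below by $\Omega(Np_t)$ and sum to $\gg k$, and then a concentration bound (Markov on the number of *missed* scales among a suitable set, or a direct Chernoff bound on independent indicators across the $\ge k$ scales with largest $p_t$) to get constant success probability. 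I expect this is where the bulk of the technical care goes; the rest is bookkeeping and a union bound over the $O(1)$ failure events.
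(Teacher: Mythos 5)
Your approach is essentially the same as the paper's: sample a uniform index $\ell$, hope it lies in $H$, sample uniformly from each geometric segment $[\ell + 2^{t-1}, \ell + 2^t)$, and show that with constant probability at least $k$ distinct scales $t$ are hit in $D_t(\ell)$. The ``spread-out'' step you flag as needing care is not a real difficulty; the paper resolves it cleanly by splitting scales $t$ into those with $T\delta_t(a) \geq 1$ (each contributes $\geq 1/10$ to $\mathbb{E}[\sum_t \bE_t]$) and those without (for which $\sum_t \delta_t(a) \geq \beta - |J|\alpha \geq \beta/2$, giving a contribution of $T\beta/20$), and then Chernoff applies since the per-scale indicators are independent. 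Your $\min\{1, Np_t\} \geq Np_t/\poly(k)$ bound is also fine if you track constants.

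The genuine gap is that your algorithm is parameterized by $\alpha$, which it does not know. Both of your sample counts depend on it: $m_1 = \Theta(\alpha\,\polylog(1/\eps)/\eps)$ (to hit $H$) and $N = \Theta(\poly(k)/\alpha)$ (per segment), and you explicitly rely on their product cancelling the $\alpha$ to match the stated query bound. But you cannot simply oversample to be safe: taking $m_1$ large for the small-$\alpha$ case and $N$ large for the small-$\alpha$ case simultaneously would give $\Theta(\log n/\eps^2)$, not $\Theta(\log n/\eps)$. The paper's $\SampleSuffix_k$ closes this by iterating over $O(\log(1/\eps))$ geometric guesses $\alpha_j = 2^{-j}$ (using that the structural theorem guarantees $\alpha \geq \eps/\polylog(1/\eps)$), setting $t_j \propto \alpha_j$ and the per-segment budget $\propto 1/\alpha_j$ in each iteration so that the product is independent of $j$; this costs only an extra $\log(1/\eps)$ factor, absorbed into the $\polylog(1/\eps)$. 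You need to add this loop over guesses of $\alpha$; otherwise the algorithm is not actually specified.
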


	Lemma~\ref{lem:case1} above, which corresponds to the first case of Theorem~\ref{thm:tree}, is proved in Section~\ref{sec:case1}.

	\newcommand{\SampleSplittable}{\texttt{Sample-Splittable}}

	\begin{lemma}[Case 2]\label{lem:case2}
		Consider any fixed value of $k \in \N$. There exists a non-adaptive, randomized algorithm, $\emph{\SampleSplittable}_k(f, \eps)$ which takes two inputs: query access to a sequence $f \colon [n] \to \R$ and a parameter $\eps > 0$. Suppose there exists a set $T \subseteq [n]^k$ of disjoint, length-$k$ monotone subsequences of $f$ where $|T| \geq \eps n /  \polylog(1/\eps)$,\footnote{in this case the $\polylog(1/\eps)$ term stands for $(k \log(1/\eps))^{O(k^2)}$} as well as a $(k, k, \alpha)$-tree descriptor $(G, \varrho, \sfI)$ that represents $(f, T, [n])$, where $\alpha \ge \eps / \polylog(1/\eps)$, then $\emph{\SampleSplittable}_k(f, \eps)$ finds a length-$k$ monotone subsequence of $f$ with probability at least $9/10$.  The query complexity of $\emph{\SampleSplittable}_k(f, \eps)$ is at most
		\[ \frac{1}{\eps} \left(\frac{\log n}{\eps} \right)^{\lfloor \log_2 k \rfloor} \cdot \polylog(1/\eps). \]
	\end{lemma}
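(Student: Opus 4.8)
The plan is to analyze the $(1+\lfloor\log_2 k\rfloor)$-round sampling procedure sketched in the introduction by an induction over the tree descriptor produced in Case~2. First I would fix $\SampleSplittable_k(f,\eps)$ precisely: set $r_0=1+\lfloor\log_2 k\rfloor$, and let $P=\polylog(1/\eps)$ (allowed to depend on the fixed $k$) be large enough that Theorem~\ref{thm:tree} in Case~2 gives $|T|\ge \eps n/P$ and a tree descriptor with $\alpha\ge\eps/P$. Round~$1$ puts each $i\in[n]$ into $\bA$ independently with probability $\Theta(P/(\eps n))$; for $2\le r\le r_0$, Round~$r$ takes every $i$ added in Round~$r-1$ and every $j\in\{0,\dots,\lceil\log_2 n\rceil\}$ and puts each $i'\in B_{i,j}=[i-2^j,i+2^j]\cap[n]$ into $\bA$ independently with probability $\Theta(P/(\eps 2^j))$. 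Finally $f$ is queried on $\bA$ and any $(12\dots k)$-pattern present is returned; non-adaptivity and output correctness are immediate. For the query bound, $\Ex|\bA_r|\le \Ex|\bA_{r-1}|\cdot(\lceil\log_2 n\rceil+1)\cdot O(P/\eps)$, so $\Ex|\bA_{r_0}|=(O(\log n))^{r_0-1}(O(P/\eps))^{r_0}=\tfrac1\eps\big(\tfrac{\log n}{\eps}\big)^{\lfloor\log_2 k\rfloor}\polylog(1/\eps)$; truncating each $\bA_r$ at a fixed multiple of its expectation turns this into a worst-case bound at the cost of an $o(1)$ loss in success probability.

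Second, I would isolate the combinatorial backbone. For a tree descriptor $(G,\varrho,\sfI)$ with $k_0$ leaves, define the \emph{primary leaf} $w$ by descending from the root to the child with more leaves in its subtree (ties to the left); let $(u_1,\dots,u_h)$ be the root-to-$w$ path and $u_2',\dots,u_h'$ the off-path siblings. A one-line count at each step (the heavier child of a node with $m$ leaves keeps $\ge\lceil m/2\rceil$ of them) shows the subtree $G_l'$ rooted at $u_l'$ has at most $\lfloor k_0/2\rfloor$ leaves. Since $1+\lfloor\log_2\lfloor k_0/2\rfloor\rfloor=\lfloor\log_2 k_0\rfloor$ for $k_0\ge2$, this is exactly what makes $1+\lfloor\log_2 k_0\rfloor$ rounds enough: Round~$1$ locates $w$ and the remaining $\lfloor\log_2 k_0\rfloor$ rounds recursively dispatch $G_2',\dots,G_h'$, each of which needs at most $\lfloor\log_2 k_0\rfloor$ rounds; more generally a subtree appearing at recursion depth $d$ has $\le\lfloor k/2^d\rfloor$ leaves and thus occupies rounds $d+1,\dots,d+1+\lfloor\log_2\lfloor k/2^d\rfloor\rfloor\le r_0$.

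The core is then an induction on $k_0$ showing: for a $(k,k_0,\beta)$-tree descriptor of $(f,\tilde T,I)$ with $|\tilde T|/|I|$ and $\beta$ both $\ge\eps/\polylog(1/\eps)$, the $(1+\lfloor\log_2 k_0\rfloor)$-round process on $I$ produces, except with probability $k_0/(100k)$, a set containing a monotone length-$k_0$ subsequence of $f$ whose $p$-th index lies in the $p$-th leaf of $G$ in the $\le_G$-order. The base case $k_0=1$ is trivial. For the step: since $\tilde T$ is disjoint, the set $T_w$ of indices occurring at the primary leaf $w$ has size $|\tilde T|\ge\eps|I|/\polylog$, so Round~$1$ hits $T_w$ except with probability $1/(100k)$; fix $i\in\bA\cap T_w$ and its tuple $\tau\in\tilde T$. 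Following $\tau$ through the descriptor yields a nested chain $I^{(1)}\supseteq\cdots\supseteq I^{(h-1)}\ni i$ of the splittable intervals along the root-to-$w$ path, and for each off-path subtree $G_l'$ (living on the part $I_l'$ of $I^{(l-1)}$ opposite the path) one checks, using the explicit form $I^{(l-1)}=[\ell-k2^{t+1},\ell+k2^{t+1}]$ and that $G_l'$ tracks $\ge2^{-\varrho(u_{l-1})}|I^{(l-1)}|\gsim\beta|I^{(l-1)}|$ tuples, that some scale $B_{i,j_l}$ contains $I_l'$, has length $\Theta_k(|I_l'|)$, and hence sees $G_l'$'s primary-leaf indices at density $\Omega_k(\beta)\ge\eps/\polylog$; thus Round~$2$ (processing $i$ over all $j$) reaches each $G_l'$ at essentially the rate a fresh Round~$1$ on $I_l'$ would, and Rounds $2,\dots,r_0$ on these children coincide with the recursive processes on the $I_l'$. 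Since each $G_l'$ has $\le\lfloor k_0/2\rfloor$ leaves, the induction hypothesis supplies a monotone subsequence $\sigma_l$ realizing $G_l'$ inside $\bA$, and the splittable structure along the path (each $u_{l-1}$ contributing, via Definition~\ref{def:splittable}, $f(x)<f(y)$ for $x$ an index of a length-$c$ prefix tuple in its left part and $y$ an index of a suffix tuple in its right part, upgraded by in-tuple monotonicity) guarantees that $i,\sigma_2,\dots,\sigma_h$ listed in position order is a monotone length-$k_0$ subsequence consistent with $\le_G$. A union bound over the $k_0$ recursion nodes preserves the failure bound $k_0/(100k)$; instantiating with $k_0=k$, $I=[n]$ and adding the truncation loss makes the total failure probability at most $1/10$, proving the lemma.

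The step I expect to be the main obstacle is this inductive step, and within it two points in particular. First, the $\lfloor k_0/2\rfloor$ bound on off-path subtree sizes — what buys the exponent $\lfloor\log_2 k\rfloor$ rather than $k$ — must be threaded carefully through the round/recursion-depth bookkeeping so that exactly $r_0=1+\lfloor\log_2 k\rfloor$ rounds suffice at every level. Second, one must verify the interface between the adversary-oblivious sampling and the unknown descriptor: that one of the $O(\log n)$ scales $B_{i,j}$ both contains the relevant sub-interval $I_l'$ and sees it at density $\Omega(\beta)$ (this uses the explicit form of splittable intervals, whose $L,M,R$ pieces have comparable length, so $|B_{i,j_l}|=\Theta_k(|I_l'|)$), and that the recursively assembled sub-patterns genuinely form a monotone pattern (this uses the prefix/suffix value separation guaranteed by Definition~\ref{def:splittable}, together with the matching of the $\le_G$-order with the left-to-right order of the splittable pieces). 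Choosing $P$ large enough for the Chernoff bounds, carrying out the union bound, and the truncation argument for the query bound are then routine.
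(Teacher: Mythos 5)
Your proposal is correct and follows essentially the same approach as the paper: the same two-phase sampler (a Bernoulli Round~1 followed by $\lfloor\log_2 k\rfloor$ geometric rounds around each previously sampled index), the same query bound via expectation and Markov, the same notion of primary leaf and the observation that off-path subtrees have $\le\lfloor k_0/2\rfloor$ leaves, the same density argument $\Theta_k(\beta)$ inside the covering scale $B_{i,j_l}$, and the same $k_0/(100k)$ failure accounting. The paper packages your ``peel off all off-path siblings'' picture as a single binary split at the root (Lemma~\ref{lem:case2-alg}), and makes the two technical points you flag rigorous by (i) stating the inductive claim over a \emph{collection} of disjoint intervals $\calI$ rather than a single interval, which absorbs the mismatch between $|B_{i,j_l}|$ and $|I_l'|$ and lets the heavy child reuse the parent's full round budget, and (ii) proving a monotone coupling lemma (Lemma~\ref{lem:coupling}) that formally justifies passing to a refinement of intervals and decoupling the left and right recursions via conditioning on the outcome $(S_0,S)$ of the left side; when that binary recursion is unrolled it is exactly your heavy-path argument.
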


	\begin{proof}[Proof of Theorem~\ref{thm:ub} assuming Lemmas~\ref{lem:case1} and~\ref{lem:case2}]
		The algorithm $\Sampler_k(f, \eps)$ executes both $\SampleSuffix_k(f, \eps)$ and $\SampleSplittable_k(f, \eps)$; if either algorithm finds a length-$k$ monotone subsequence of $f$, output such a subsequence. We note that by Theorem~\ref{thm:tree}, either case 1, or case 2 holds. If case 1 holds, then by Lemma~\ref{lem:case1}, $\SampleSuffix(f, \eps)$ outputs a length-$k$ monotone subsequence with probability at least $9/10$, and if case 2 holds, then by Lemma~\ref{lem:case2}, $\SampleSplittable_k(f, \eps)$ outputs a length-$k$ monotone subsequence with probability at least $9/10$. Thus, regardless of which case holds, a length-$k$ monotone subsequence will be found with probability at least $9/10$. The query complexity then follows from the maximum of the two query complexities.
	\end{proof}

\subsection{Proof of Lemma~\ref{lem:case1}: an algorithm for growing suffixes}\label{sec:case1}

	We now prove Lemma~\ref{lem:case1}. Let $C > 0$ be a large constant, and let $k \in \N$ be fixed. Let $\eps > 0$ and $f \colon [n] \to \R$ be a function which is $\eps$-far from $(12\dots k)$-free. Furthermore, as per the assumption of case 1 of the algorithm, we assume that there exists a parameter $\alpha \in (0, 1)$ as well as a set $H \subseteq [n]$ of $(\alpha, Ck\alpha)$-growing suffixes, where $\alpha |H| \geq \eps n / \polylog(1/\eps)$. 

	\newcommand{\GrowingSuffix}{\texttt{Growing-Suffix}}

	\begin{figure}[ht!]
		\begin{framed}
			\begin{minipage}{.98\textwidth}
				\noindent Subroutine $\GrowingSuffix\hspace{0.05cm}(f, \alpha_0, a)$
				
				\vspace{0.3cm}
				\noindent {\bf Input:} Query access to a function $f \colon [n] \to \R$, a parameter $\alpha_0 \in (0, 1)$, and an index $a \in [n]$.\\
				{\bf Output:} a subset of $k$ indices $i_1 < \dots < i_k$ where $f(i_1) < \dots <f(i_k)$, or \fail. 
				
				\begin{enumerate}
					\item 
						Let $\eta_a = \lceil \log(n-a) \rceil$ and consider the sets $S_j(a) = (a + \ell_{j-1}, a+\ell_j] \cap [n]$ for all $j \in [\eta_a]$ and $\ell_j = 2^j$.
					\item 
						For each $j \in [\eta_a]$, let $\bA_j \subseteq S_j(a)$ be obtained by sampling uniformly at random $T \eqdef 1 / \alpha_0$ times from $S_j(a)$.
					\item 
						For each $j \in [\eta_a]$ and each $b \in \bA_j$, query $f(b)$ .
					\item 
						If there exist indices $i_1, \dots, i_k \in \bA_1 \cup \dots \cup \bA_{\eta_i}$ satisfying $i_1 < \dots < i_k$ and $f(i_1) < \dots < f(i_k)$, \Return such indices $i_1, \dots, i_k$. Otherwise, \Return \fail. 
				\end{enumerate}
			\end{minipage}
		\end{framed}\vspace{-0.2cm}
		\caption{Description of the $\GrowingSuffix$ subroutine.} \label{alg:growing-suffix}
		\label{fig:growing-suffix-alg}
	\end{figure}

	The algorithm, which underlies the result of Lemma~\ref{lem:case1}, proceeds by sampling uniformly at random an index $\ba \sim [n]$, and running a sub-routine which we call $\GrowingSuffix$,  with $\ba$ as input. The sub-routine is designed so that if $\ba$ is the start of an $(\alpha, Ck\alpha)$-growing suffix then the algorithm will find a length-$k$ monotone subsequence of $f$ with probability at least $99/100$. The sub-routine, $\GrowingSuffix$, is presented in Figure~\ref{fig:growing-suffix-alg}.

	\begin{lemma}\label{lem:growing}
		Let $f \colon [n] \to \R$ be a function, let $\alpha, \alpha_0, \beta \in (0, 1)$ be parameters satisfying $\beta \ge Ck \alpha$ and $\alpha_0 \le \alpha$, and suppose that $a \in [n]$ starts a $(\alpha, \beta)$-growing suffix in $f$. Then $\emph{\GrowingSuffix}(f, \alpha_0, a)$ finds a length-$k$ monotone subsequence of $f$ with probability at least $99/100$.
	\end{lemma}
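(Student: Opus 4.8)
The plan is to reduce Lemma~\ref{lem:growing} to a simple balls-into-bins estimate, exploiting the fact that the monotonicity of a candidate subsequence comes for free from the growing-suffix structure. Write $p_t \eqdef |D_t(a)|/|S_t(a)|$ for the densities supplied by Definition~\ref{def:growing-suffixes}, so $p_t \le \alpha$ for every $t \in [\eta_a]$ while $\sum_t p_t \ge \beta$. The key observation is that it suffices for $\GrowingSuffix(f,\alpha_0,a)$ to collect \emph{one} index from each of some $k$ distinct sets $D_{j_1}(a), \dots, D_{j_k}(a)$ with $j_1 < \dots < j_k$: since the segments $S_1(a), \dots, S_{\eta_a}(a)$ are disjoint and occur in increasing order of index, any representatives $i_\ell \in D_{j_\ell}(a)$ automatically satisfy $i_1 < \dots < i_k$, and then $f(i_1) < \dots < f(i_k)$ by condition~\ref{en:grow-cond-2} of Definition~\ref{def:growing-suffixes}. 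In that event $\bA_1 \cup \dots \cup \bA_{\eta_a}$ contains a length-$k$ monotone subsequence, so the subroutine returns one (possibly a different one).

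So I only need to lower bound the number of indices $j$ whose random sample $\bA_j$ meets $D_j(a)$. For $j \in [\eta_a]$ let $\Event_j$ denote the event $\bA_j \cap D_j(a) \neq \emptyset$. Since $\bA_j$ consists of $T = 1/\alpha_0$ independent uniform samples from $S_j(a)$, drawn independently across distinct $j$, the events $\Event_1, \dots, \Event_{\eta_a}$ are independent and
\[
\Pr[\Event_j] \;=\; 1 - (1 - p_j)^{T} \;\ge\; 1 - e^{-p_j T} \;\ge\; \tfrac12 \min\{1,\, p_j T\}.
\]
Let $X \eqdef |\{ j \in [\eta_a] : \Event_j \text{ holds}\}|$, a sum of independent indicators. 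First I would show $\Ex[X] = \Omega(Ck)$ by casing on the number of \emph{heavy} indices, meaning those with $p_j T \ge 1$: if there are at least $Dk$ of them (for a suitable absolute constant $D$), then $\sum_j \min\{1, p_j T\} \ge Dk$ directly; otherwise the heavy indices carry total density at most $Dk\alpha$, so the \emph{light} indices satisfy $\sum_{j \text{ light}} p_j \ge \beta - Dk\alpha \ge \beta/2$ provided $C \ge 2D$ (using $\beta \ge Ck\alpha$), whence $\sum_j \min\{1, p_j T\} \ge \sum_{j \text{ light}} p_j T \ge \tfrac{\beta}{2\alpha_0} \ge \tfrac{\beta}{2\alpha} \ge \tfrac{Ck}{2}$ since $\alpha_0 \le \alpha$. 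Either way $\Ex[X] \ge \tfrac12 \sum_j \min\{1, p_j T\} = \Omega(Ck)$.

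Finally, taking $C$ to be a large enough constant makes $\Ex[X]$ large enough that a Chernoff bound for independent indicators gives $\Pr[X < k] \le \Pr[X < \Ex[X]/2] \le e^{-\Ex[X]/8} \le 1/100$, where we also use $\Ex[X]/2 \ge k$, which again holds once $C$ is large. Combined with the reduction of the first paragraph, this shows $\GrowingSuffix(f,\alpha_0,a)$ outputs a length-$k$ monotone subsequence with probability at least $99/100$. The only mildly delicate point is the quantitative bookkeeping in the heavy/light split — choosing the constants $C$ and $D$ consistently so that the light part retains density $\Omega(\beta)$ and the resulting expectation survives the factor-$2$ loss in the Chernoff step; everything else is routine.
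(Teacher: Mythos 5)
Your proof is correct and follows essentially the same strategy as the paper's: reduce to counting how many segments $S_j(a)$ are hit inside $D_j(a)$, lower-bound $\Pr[\bA_j \cap D_j(a)\neq\emptyset]$ by $\Omega(\min\{1, p_j T\})$, split into heavy and light indices (indices with $p_j T \ge 1$ versus the rest), use $\beta \ge Ck\alpha$ to show the light mass survives if there are few heavy indices, and finish with a Chernoff bound on the independent indicators. The only cosmetic difference is your introduction of a separate threshold constant $D$ where the paper uses $Ck$ vs.\ $Ck/2$; and your claim $\Ex[X]=\Omega(Ck)$ is slightly imprecise (it is $\Omega(k)$ with a constant that can be made large by choosing $D$, then $C\ge 2D$), but this does not affect the Chernoff step.
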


	\begin{proof}
		Recall, from Definition~\ref{def:growing-suffixes}, that if $a \in [n]$ is the start of a $(\alpha, \beta)$-growing suffix of $f$ then there exist a collection of sets, $D_{1}(a), \dots, D_{\eta_a}(a)$ and parameters $\delta_1(a), \dots, \delta_{\eta_a}(a) \in (0, \alpha]$, where every $j \in [\eta_a]$ has
		\[ 
			D_j(a) \subseteq S_j(a), \qquad 
			|D_j(a)| = \delta_j(a) \cdot |S_j(a)|,\qquad 
			\text{and} \qquad \sum_{j=1}^{\eta_a} \delta_j(a) \geq \beta. 
		\]
		Further, if, for some $j_1, \dots, j_k \in [\eta_i]$, we have $j_1 < \dots < j_k$ and for all $\ell \in [k]$, $\bA_{j_{\ell}} \cap D_{j_{\ell}}(a) \neq \emptyset$, then the union $D_{j_1}(a) \cup \ldots \cup D_{j_k}(a)$ contains a length-$k$ monotone subsequence. In view of this, for each $j \in [\eta_a]$, consider the indicator random variable 
		\[ \bE_j \eqdef \ind\{ \bA_{j} \cap D_j(a) \neq \emptyset \}, \]
		and observe that by the foregoing discussion $\GrowingSuffix(f, \alpha_0, a)$ samples a length-$k$ monotone subsequence of $f$ whenever $\sum_{j=1}^{\eta_a} \bE_j \geq k$. 
		We note that the $\bE_j$'s are independent, and that 
		\[ 
			\Prx[\bE_j = 1] = 1 - \left( 1 - \delta_j(a)\right)^T \geq \min\left\{ \frac{T \cdot \delta_j(a)}{10}, \frac{1}{10} \right\}. 
		\]
		Let $J \subseteq [\eta_a]$ be the set of indices satisfying $T \cdot \delta_j(a) \geq 1$ (recall that $T = 1 / \alpha_0$). Then, if $|J| \geq C k$ we have 
		\[ 
			\Ex\!\left[ \sum_{j=1}^{\eta_a} \bE_j \right] \geq \frac{C k}{10},
		\]
		since every variable $j \in J$ contributes at least $1/10$.
		On the other hand, if $|J| \leq C k / 2$, then, since $\delta_j(a) \leq \alpha$ for every $j$, we have $\sum_{j \in [\eta_a] \setminus J} \delta_j(a) \geq \beta - |J| \cdot \alpha \geq \beta / 2$ (using $\beta \ge Ck \alpha$) so that
		\[ 
			\Ex\!\left[ \sum_{j=1}^{\eta_a} \bE_j\right] \geq \Ex\!\left[ \sum_{j\in [\eta_a] \setminus J} \bE_j \right] \geq \frac{T}{10} \cdot \frac{\beta}{2} \geq \frac{Ck}{20}. 
		\]
		In either case, $\Ex[\sum_{j\in[\eta_a]} \bE_j] \geq C k / 20$, and since the events $\bE_i$ are independent, via a Chernoff bound we obtain that $\sum_j \bE_j$ is larger than $k$ with probability at least $99/100$.
	\end{proof}

	\begin{figure}[ht!]
		\begin{framed}
			\begin{minipage}{.98\textwidth}
				\noindent Subroutine $\SampleSuffix_k\hspace{0.05cm}(f, \eps)$

				\vspace{.3cm}
				
				\noindent {\bf Input:} Query access to a function $f \colon [n] \to \R$, and a parameter $\eps \in (0, 1)$.\\
				{\bf Output:} a subset of $k$ indices $i_1 < \dots < i_k$ where $f(i_1) < \dots <f(i_k)$, or \fail. 
				
				\begin{enumerate}
				\item Repeat the following for all $j = 1, \dots, O(\log(1/\eps))$, letting $\alpha_j =  2^{-j}$:
				\begin{itemize}
					\item For $t_j = \alpha_j \cdot \polylog(1/\eps) /\eps$ iterations, sample $\ba \sim [n]$ uniformly at random and run $\GrowingSuffix(f, \alpha_j, \ba)$, and if it returns a length-$k$ monotone subsequence of $f$, \Return that subsequence. 
				\end{itemize}
				\item If the algorithm has not already output a monotone subsequence, \Return \fail.
				\end{enumerate}
			\end{minipage}
		\end{framed}\vspace{-0.2cm}
		\caption{Description of the $\SampleSuffix$ subroutine.} \label{fig:sample-suffix}
	\end{figure}

	\noindent With this in hand, we can now establish~Lemma~\ref{lem:case1}.
	\begin{proof}[Proof of Lemma~\ref{lem:case1}]
		First, note that the query complexity of $\SampleSuffix_k(f, \eps)$ is
		\[ 	
			\sum_{j=1}^{O(\log(1/\eps))} t_j \cdot O( \log n / \alpha_j) = \frac{\log n \cdot \polylog(1/\eps)}{\eps}. 
		\]
		Consider the iteration of $j$ where $\alpha_j \leq \alpha \leq 2 \alpha_j$ (note that since $\alpha \ge \eps / \polylog(1 / \eps)$, there exists such $j$). Then, since $|H| \ge \eps / (\alpha \cdot \polylog(1/\eps))$, we have that $t_j \ge Cn / |H|$ (for a sufficiently large constant $C$). Thus, with probability at least $99/100$, some iteration satisfies $\ba \in H$. When this occurs, $\GrowingSuffix(f, \alpha_j, \ba)$ will output a length-$k$ monotone subsequence with probability at least $99/100$, by Lemma~\ref{lem:growing}, and thus by a union bound we obtain the desired result.
	\end{proof}

\subsection{Proof of Lemma~\ref{lem:case2}: an algorithm for splittable intervals}\label{sec:case2}

	\newcommand{\SampleH}{\texttt{Sample-Helper}}

	We now prove Lemma~\ref{lem:case2}. We consider a fixed setting of $k \in \N$ and $\eps > 0$, and let $f \colon [n] \to \R$ be any sequence which is $\eps$-far from being $(12\dots k)$-free. Furthermore, as per case 2 of the algorithm, we assume that there exists a set $T \subseteq [n]^k$ of disjoint length-$k$ monotone subsequences of $f$ where
	\[ 
		|T| \geq \frac{\eps n}{\polylog(1/\eps)},
	\]
	and $(G, \varrho, \sfI)$ is a $(k, k, \alpha)$-tree descriptor which represents $(f, T, [n])$, where $\alpha \ge \eps / \polylog(1/\eps)$. In what follows, we describe a sub-routine, $\SampleSplittable_k(f, \eps)$ in terms of two parameters $\rho, q \in \R$. The parameter $\rho > 0$ is set to be sufficiently large and independent of $n$, satisfying
	\begin{equation}\label{eq:case2:setting:rho}
		\rho \geq \frac{\eps}{\polylog(1/\eps)}.
	\end{equation}
	One property which we will want to satisfy is that if we take a random subset of $[n]$ by including each element independently with probability $1/(\rho n)$, we will include an element belonging to $E(T)$ with probability at least $1 - 1/(C k)$, for a large constant $C > 0$.
	The parameter $q$ will be an upper bound on the query complexity of the algorithm, which we set to a high enough value satisfying:
	\[ q = O\left(\frac{1}{\rho} \left( \frac{\log n}{\rho}\right)^{\lfloor \log_2 k \rfloor}\right) \leq \frac{1}{\eps} \cdot \left( \frac{\log n}{\eps} \right)^{\lfloor \log_2 k \rfloor} \cdot \polylog(1/\eps).  \]

	\begin{figure}[ht!]
		\begin{framed}
			\begin{minipage}{.98\textwidth}
				\noindent Subroutine $\SampleSplittable_k\hspace{0.05cm}(f, \eps)$
				\vspace{.3cm}
				
				\noindent {\bf Input:} Query access to a sequence $f \colon [n] \to \R$, and a parameter $\eps \in (0, 1)$.\\
				{\bf Output:} a subset of $k$ indices $i_1 < \dots < i_k$ where $f(i_1) < \dots <f(i_k)$, or \fail. 
				
				\begin{enumerate}
				\item Let $r = \lfloor \log_2 k \rfloor$ and run $\SampleH(r,[n],\rho)$, to obtain a set $\bA \subseteq [n]$.
				\item If $|\bA| > q$, \Return \fail; otherwise, for each $a \in \bA$, query $f(a)$. If there exists a monotone sequence of $f$ of length $k$, then \Return that subsequence. If not, \Return \fail.  
				\end{enumerate}
			\end{minipage}
		\end{framed}\vspace{-0.2cm}
		\caption{Description of the $\SampleSplittable$ subroutine.} \label{fig:sample-splittable}
	\end{figure}

	\begin{figure}[ht!]
		\begin{framed}
			\begin{minipage}{.98\textwidth}
				\noindent Subroutine $\SampleH\hspace{0.05cm}(r,I, \rho)$
				\vspace{.3cm}

				\noindent {\bf Input:} An integer $r \in \N$, an interval $I \subseteq [n]$, and a parameter $\rho \in (0, 1)$.\\
				{\bf Output:} a subset of $A \subseteq I$. 
				
				\begin{enumerate}
					\item Let $\bA_0 = \emptyset$. For every index $a \in I$, let $\bA_0 \gets \bA_0 \cup \{ a \}$ with probability $1 / (\rho |I|)$.
					\item If $r = 0$, \Return $\bA_0$.
					\item If $r > 0$, proceed with the following:
					\begin{itemize}
						\item 
							For every index $a \in \bA_0$, consider the $O(\log n)$ intervals given by $B_{a, j} = [a - \ell_j, a + \ell_j]$, for $j = 1, \dots,O(\log n)$ and $\ell_j = 2^j$, and let $\bR_{a, j} \gets \SampleH(r-1, B_{a, j}, \rho)$.
						\item 
							Let $\bA$ be the set
							\[ 
								\bA \gets \bigcup_{\substack{a \in \bA_0, \, j = O(\log n)}} \bR_{a,j}. 
							\]
						\item \Return the set $(\bA_0 \cup \bA) \cap I$.
					\end{itemize}
				\end{enumerate}
			\end{minipage}
		\end{framed}\vspace{-0.2cm}
		\caption{Description of the $\SampleH$ subroutine.} \label{fig:sample-splittable-2}
	\end{figure}

	The descriptions of the main algorithm $\SampleSplittable_k$ and the sub-routine $\SampleH$, are given in Figure~\ref{fig:sample-splittable} and Figure~\ref{fig:sample-splittable-2}. Note that, for any $r \in \N$, if we let $\calD_r$ be the distribution of $|\bA|$, where $\bA$ is the output of a call to $\SampleH(r, [n], \rho)$. Then, we have that $\calD_0 = \Bin(n, \rho)$, and for $r > 0$, $\calD_r$ is stochastically dominated by the random variable
	\[ \sum_{i=1}^{\by_0} \sum_{j=1}^{O(\log n)} \bx_{r-1}^{(i, j)}, \]
	where $\by_0 \sim \Bin(n, 1/(\rho n))$ and $\bx_{r-1}^{(i,j)} \sim \calD_{r-1}$ for all $i \in \N$ and $j \in [O(\log n)]$ are all mutually independent. As a result, for $r \geq 1$,
	\[ \Ex\left[ |\bA| \right] \leq \frac{1}{\rho} \cdot \log n \cdot \Ex_{\bx \sim \calD_{r-1}}[\bx], \]
	and since $\Ex_{\bx \sim\calD_0}[\bx] = 1/\rho$, we have:
	\[ \Ex\left[ |\bA| \right] \leq \frac{1}{\rho} \left( \frac{\log n}{\rho} \right)^{r}. \]
	We may then apply Markov's inequality to conclude that $|\bA| \leq q$ with probability at least $99/100$. As a result, we focus on proving that the probability that the set $\bA$ contains a monotone subsequence of $f$ of length $k$ is at least $99/100$. This would imply the desired result by taking a union bound. 

	In addition to the above, we define another algorithm, $\SampleH^*$, in Figure~\ref{fig:sample-help}, which will be a \emph{helper} sub-routine. We emphasize that $\SampleH^*$ is not executed in the algorithm itself, but will be useful in order to analyze $\SampleH$.  

	\begin{figure}[H]
		\begin{framed}
			\begin{minipage}{.98\textwidth}
				\noindent Subroutine $\SampleH^*\hspace{0.05cm}(r,I, \rho, \calI)$
				\vspace{.3cm}

				\noindent {\bf Input:} An integer $r \in \N$, an interval $I \subseteq [n]$, a parameter $\rho \in (0, 1)$, and a collection of disjoint intervals $\calI$ of $[n]$.\\
				{\bf Output:} two subsets $\bA, \bA_0 \subseteq I$. 
				
				\begin{enumerate}
					\item\label{sample:helper*:step:1} Let $\bA_0 = \emptyset$. For every index $a \in I$ which lies inside an interval in $\calI$, let $\bA_0 \gets \bA_0 \cup \{ a \}$ with probability $1 / (\rho |I|)$.
					\item\label{sample:helper*:step:2} If $r = 0$, \Return $\bA_0$.
					\item\label{sample:helper*:step:3} If $r > 0$, proceed with the following:
					\begin{itemize}
						\item 
							For every index $a \in \bA_0$, consider the $O(\log n)$ intervals given by $B_{a, j} = [a - \ell_j, a + \ell_j]$, for $j = 1, \dots O(\log n)$, and $\ell_j = 2^j$, and let $(\bR_{a, j}, \bR_{a, j, 0}) \gets \SampleH^*(r-1, B_{a, j}, \rho, \calI)$.
						\item 
							Let $\bA$ to be the set
							\[ 
								\bA \gets \bigcup_{\substack{a \in \bA_0, \,\, j = O(\log n)}} \bR_{a,j}. 
							\]
						\item 
							\Return the set $(\bA \cap I, \bA_0 \cap I)$.
					\end{itemize}
				\end{enumerate}
			\end{minipage}
		\end{framed}\vspace{-0.2cm}
		\caption{Description of the $\SampleH^*$ subroutine.} \label{fig:sample-help}
	\end{figure}

	Before proceeding, we require a ``coupling lemma.'' Its main purpose is to prove the intuitive fact that if $\calI_0, \calI_1$ are collections of disjoint intervals, and the latter is a refinement of the former (namely, each intervals in $\calI_1$ is contained in an interval of $\calI_0$), then $\SampleH^*(r, [n], \rho, \calI_0)$ is more likely to find a length-$k$ monotone subsequence than $\SampleH^*(r, [n], \rho, \calI_1)$ does.

	\begin{lemma}\label{lem:coupling}
		Let $r \in \N$ be an integer, $f \colon [n] \to \R$ a function,  $\rho \in (0,1)$ a parameter, and $\calI_0$ and $\calI_1$ collections of disjoint intervals in $[n]$, such that each interval in $\calI_1$ lies inside an interval from $\calI_0$. 
		Denote by $(\bA^{(i)}, \bA_0^{(i)})$ the random pair of sets given by the output of $\emph{\SampleH}^*(r, [n], \rho, \calI_i)$, for $i = 0,1$. 
		Lastly, let $\emph{\Event} \colon \calP([n]) \times \calP([n]) \to \{0,1\}$ be any \emph{monotone} function; that is, it satisfies $\emph{\Event}(S_1, S_2) \leq \emph{\Event}(S_1', S_2')$ for any $S_1 \subseteq S_1' \subseteq [n]$ and $S_2 \subseteq S_2' \subseteq [n]$. Then,
		\[ \Prx[ \emph{\Event}(\bA^{(0)}, \bA^{(0)}_0) = 1] \geq \Prx[ \emph{\Event}(\bA^{(1)}, \bA^{(1)}_0) = 1]. \]
	\end{lemma}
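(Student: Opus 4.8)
The plan is to establish the lemma by constructing an explicit coupling of the two executions under which the output sets are nested with probability $1$, and then appeal to the monotonicity of $\Event$. Concretely, I would prove the following slightly stronger statement by induction on $r$: for every interval $I \subseteq [n]$ there is a coupling of $\SampleH^*(r, I, \rho, \calI_0)$ and $\SampleH^*(r, I, \rho, \calI_1)$, producing pairs $(\bA^{(0)}, \bA_0^{(0)})$ and $(\bA^{(1)}, \bA_0^{(1)})$ respectively, such that $\bA^{(1)} \subseteq \bA^{(0)}$ and $\bA_0^{(1)} \subseteq \bA_0^{(0)}$ hold almost surely. The elementary fact driving the argument is that if an index $a$ lies inside some interval of $\calI_1$, then, since that interval is contained in an interval of $\calI_0$, the index $a$ also lies inside an interval of $\calI_0$; hence the set of indices eligible to be sampled in Step~\ref{sample:helper*:step:1} with respect to $\calI_1$ is a subset of those eligible with respect to $\calI_0$, and each eligible index is retained with the same probability $1/(\rho|I|)$ in both executions. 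The strengthening to arbitrary $I$ (rather than just $I=[n]$) is what lets the induction close, since the recursive calls are made on the sub-intervals $B_{a,j}$.

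For the base case $r = 0$, I would draw, independently for each $a \in I$, a single uniform variable $\bu_a \in [0,1]$, and put $a \in \bA_0^{(i)}$ precisely when $a$ is eligible with respect to $\calI_i$ and $\bu_a \le 1/(\rho|I|)$. By the observation above, $\bA_0^{(1)} \subseteq \bA_0^{(0)}$, and since for $r=0$ the output depends only on $\bA_0$, we are done. For $r > 0$, I would couple $\bA_0^{(0)}$ and $\bA_0^{(1)}$ in the same way, so again $\bA_0^{(1)}\subseteq \bA_0^{(0)}$; then, independently over all pairs $(a,j)$, for each $a \in \bA_0^{(1)}$ (hence $a \in \bA_0^{(0)}$) I would couple the two recursive calls $\SampleH^*(r-1, B_{a,j}, \rho, \calI_i)$ on the \emph{same} interval $B_{a,j}$ via the inductive hypothesis, obtaining $\bR_{a,j}^{(1)} \subseteq \bR_{a,j}^{(0)}$, whereas for $a \in \bA_0^{(0)}\setminus\bA_0^{(1)}$ only the $\calI_0$-call occurs and is run with fresh, independent randomness. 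This is a valid coupling of the two executions (all marginals are unchanged), and
\[
	\bA^{(1)} = \bigcup_{a \in \bA_0^{(1)},\, j} \bR_{a,j}^{(1)} \subseteq \bigcup_{a \in \bA_0^{(1)},\, j} \bR_{a,j}^{(0)} \subseteq \bigcup_{a \in \bA_0^{(0)},\, j} \bR_{a,j}^{(0)} = \bA^{(0)},
\]
so intersecting with $I$ preserves both containments and the induction goes through. Finally, applying the coupling with $I = [n]$, monotonicity of $\Event$ gives $\Event(\bA^{(1)}, \bA_0^{(1)}) \le \Event(\bA^{(0)}, \bA_0^{(0)})$ pointwise, and taking expectations yields the claimed inequality.

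The point requiring the most care is the asymmetry of the recursion: the $\calI_0$-execution spawns recursive calls on strictly more indices of $\bA_0$ than the $\calI_1$-execution, so one cannot couple the full recursion trees exactly. This is harmless because only the one-sided containment $\bA^{(1)} \subseteq \bA^{(0)}$ is needed, and the extra recursive calls in the $\calI_0$-execution can only enlarge $\bA^{(0)}$. Two minor sanity checks are worth making explicit: the number $O(\log n)$ of intervals $B_{a,j}$ and all sampling probabilities depend only on the current interval (and on $n$ and $\rho$), not on the collection $\calI$, so wherever both executions are defined they are structurally identical; and the recursive calls use fresh, mutually independent randomness, which is exactly what legitimizes stitching together independent copies of the inductive coupling.
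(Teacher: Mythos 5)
Your proof is correct and uses essentially the same coupling as the paper: run the two executions in parallel with shared randomness so that the $\calI_1$-execution's output sets are almost surely contained in those of the $\calI_0$-execution, then invoke monotonicity of $\Event$. The paper states this coupling informally (the $\calI_1$-execution "follows" the $\calI_0$-execution, ignoring ineligible indices), whereas you formalize it via induction on $r$ with explicit shared uniforms and independent recursive couplings; the substance is identical.
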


	\begin{proof}
		Consider an execution of $\SampleH^*(r, [n], \rho, \calI_0)$ which outputs a pair $(\bA^{(0)}, \bA_0^{(0)})$. Let $\bA^{(1)}$ and $\bA^{(1)}$ be the subsets of $\bA^{(0)}$ and $\bA^{(0)}$, respectively, obtained by running a parallel execution of $\SampleH^*(r, [n], \rho, \calI_1)$, which follows the execution of $\SampleH^*(r, [n], \rho, \calI_0)$, but whenever an element which is not in an interval of $\calI_1$ is considered, it is simply ignored (i.e., it is not included in $\bA^{(0)}$ or in $\bA_0^{(0)}$ and no recursive calls based on such elements are made). It is easy to see that this coupling yields a pair $(\bA^{(1)}, \bA_0^{(1)})$ with the same distribution as that given by running $\SampleH^*(r, [n], \rho, \calI_1)$.
		As $\emph{\Event}(\cdot, \cdot)$ is increasing, if $\emph{\Event}(\bA^{(0)}, \bA^{(0)}_0)$ holds then so does $\emph{\Event}(\bA^{(1)}, \bA^{(1)}_0)$. The lemma follows.
	 \end{proof}

	 The following corollary is a direct consequence of Lemma~\ref{lem:coupling}. Specifically, we use the facts that $\SampleSplittable_k(f, \eps)$ calls $\SampleH(\lfloor \log_2 k \rfloor, [n], \rho)$, which is equivalent to calling $\SampleH(\lfloor \log_2 k \rfloor, [n], \rho, \{ [n] \})$, and that finding a $(12\dots k)$-pattern in $\calI$ is a monotone event. 

	\begin{corollary}\label{cor:helper}
		Let $\calI$ be any collection of disjoint intervals in $[n]$. Suppose $(\bA, \bA_0)$ is the random pair of sets given by the output of $\emph{\SampleH}^*(\lfloor \log_2 k \rfloor, n, \rho, \calI)$, then,
		\begin{align*} 
			&\Prx[\emph{\SampleSplittable}_k(f, \eps) \text{ finds a $(12\dots k)$-pattern of $f$}] \geq \\
			&\qquad\qquad\qquad\qquad\qquad\qquad\Prx[\bA \text{ contains a $(12\dots k)$-pattern in $f_{|\calI}$} ].
		\end{align*}
	\end{corollary}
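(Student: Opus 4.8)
The plan is to derive Corollary~\ref{cor:helper} as an essentially immediate consequence of the coupling lemma (Lemma~\ref{lem:coupling}), once a suitable monotone event is identified and the three subroutines $\SampleH$, $\SampleH^*$, and $\SampleSplittable_k$ are related to one another.

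First I would fix the monotone event to which Lemma~\ref{lem:coupling} is applied: set $\Event(S_1, S_2) \eqdef \ind\{S_1 \cup S_2 \text{ contains a } (12\dots k)\text{-pattern of } f\}$, which is monotone in the sense of Lemma~\ref{lem:coupling} since enlarging either argument cannot destroy an existing $(12\dots k)$-pattern. Second, I would reconcile the three-argument routine $\SampleH$ with the four-argument routine $\SampleH^*$. Comparing Figures~\ref{fig:sample-splittable-2} and~\ref{fig:sample-help}: when the interval collection handed to $\SampleH^*$ is $\calI = \{[n]\}$, the guard ``$a$ lies inside an interval of $\calI$'' in Step~\ref{sample:helper*:step:1} is vacuous (since $I \subseteq [n]$ at every recursive call), the sampling probabilities $1/(\rho|I|)$ agree, and the recursive branching over $\bA_0$ is identical; the one genuine difference is that $\SampleH$ returns the union of the sampled sets over \emph{all} levels of the recursion, whereas $\SampleH^*$ returns only the root-level set together with the union of the leaf-level sets. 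Coupling the two on identical internal randomness therefore makes the set sampled by $\SampleH(\lfloor\log_2 k\rfloor, [n], \rho)$ a \emph{superset} of $\bA' \cup \bA'_0$, where $(\bA', \bA'_0)$ is the output of $\SampleH^*(\lfloor\log_2 k\rfloor, [n], \rho, \{[n]\})$. Since $\SampleSplittable_k(f,\eps)$ runs exactly $\SampleH(\lfloor\log_2 k\rfloor, [n], \rho)$ and reports a $(12\dots k)$-pattern of $f$ whenever the set it samples contains one --- the event that this set exceeds the query budget $q$ being handled separately, by Markov's inequality, as already observed --- we obtain $\Pr[\SampleSplittable_k(f,\eps)\text{ finds a }(12\dots k)\text{-pattern of }f] \ge \Pr[\Event(\bA', \bA'_0) = 1]$.

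Next I would invoke Lemma~\ref{lem:coupling} with $\calI_0 = \{[n]\}$ and $\calI_1 = \calI$: every interval of $\calI$ lies inside $[n]$, so $\calI_1$ refines $\calI_0$, and the lemma --- using that $\Event$ is monotone --- yields $\Pr[\Event(\bA', \bA'_0) = 1] \ge \Pr[\Event(\bA, \bA_0) = 1]$, where $(\bA, \bA_0)$ is the output of $\SampleH^*(\lfloor\log_2 k\rfloor, n, \rho, \calI)$ from the corollary. Finally, $\Event(\bA, \bA_0) \ge \ind\{\bA \text{ contains a } (12\dots k)\text{-pattern of }f\}$ by monotonicity, and since every index collected into $\bA$ lies inside some interval of $\calI$ (again by Step~\ref{sample:helper*:step:1}), a $(12\dots k)$-pattern of $f$ lying in $\bA$ is exactly a $(12\dots k)$-pattern of $f_{|\calI}$ lying in $\bA$. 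Chaining the three inequalities proves the corollary.

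The one step that demands genuine care is the second: one must verify that the natural coupling makes the output of $\SampleH$ a true \emph{superset} --- not merely an equal-in-distribution copy --- of the two sets returned by $\SampleH^*(\cdot,\cdot,\cdot,\{[n]\})$, so that it is precisely the monotonicity of the target event that bridges the two routines. Everything else is a mechanical unwinding of Lemma~\ref{lem:coupling}.
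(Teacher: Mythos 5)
Your proposal is correct and follows the same route as the paper: apply Lemma~\ref{lem:coupling} with $\calI_0 = \{[n]\}$ and $\calI_1 = \calI$ against the monotone event ``contains a $(12\dots k)$-pattern,'' after observing that $\SampleSplittable_k$ runs $\SampleH(\lfloor \log_2 k\rfloor,[n],\rho)$, which (under the natural coupling on shared randomness) outputs a set containing both components returned by $\SampleH^*(\lfloor\log_2 k\rfloor,[n],\rho,\{[n]\})$. The paper states this in one sentence and glosses over the superset-versus-equality distinction between $\SampleH$ and $\SampleH^*$ that you carefully flag; your account is more explicit but the argument is the same, and like the paper it silently defers the $|\bA|>q$ budget event to the surrounding Markov-plus-union-bound discussion.
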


	\begin{definition}
		Let $k_0 \in \N$ be a positive integer, and let $(G, \varrho)$ be a $k_0$-tree descriptor (for this definition we do not care about the third component of the descriptor, $\sfI$). We say that $p \in [k_0]$ is the \emph{primary index} of $(G, \varrho)$ if the leaf with rank $p$ under $\leq_{G}$ is the unique leaf whose root-to-leaf path $(u_1, \dots, u_d)$ satisfies the following: for each $d' \in [d-1]$, denoting the left  and right children of $u_{d'}$ by $v_l$ and $v_r$, respectively, $u_{d' + 1}$ is $v_l$ if the number of leaves in the subtree rooted at $v_l$ is at least the number of leaves in the subtree rooted at $v_r$, and otherwise, $u_{d' + 1}$ is $v_r$.
	\end{definition}

	From Corollary~\ref{cor:helper}, we note that Lemma~\ref{lem:case2} follows from the following lemma.
	\begin{lemma} \label{lem:case2-alg}
		Let $k, k_0, n \in \N$ satisfy $1 \leq k_0 \leq k$, let $C$ be a large enough constant, and let $\alpha, \rho \in (0,1)$ be such that $\rho \ge C \alpha$ and $\alpha \ge \rho / \polylog(1/\rho)$.
		Let $f \colon [n] \to \R$ be a function, let $\calI$ be a collection of disjoint intervals in $[n]$, for each $I \in \calI$ let $T_I \subseteq I^{k_0}$ be a set of disjoint, length-$k_0$ monotone subsequence of $f$, and suppose that
		\[ 
			\sum_{I \in \calI} |T_I| \ge \alpha n / 4.
		\] 
		Suppose that $(G, \varrho)$ is a $(k, k_0, \alpha)$-weighted-tree such that for every $I \in \calI$ there exists a function $\sfI_I :V(G) \to \calS(I)$, such that $(G, \varrho, \sfI_I)$ is a tree descriptor that represents $(f, T_I, I)$. 
		Given any $r \in \N$ satisfying $\lfloor \log_2 k_0 \rfloor \leq r$, let $(\bA, \bA_0)$ be the pair of sets output by the sub-routine $\emph{\SampleH}^*(r, [n], \rho, \calI)$.
		With probability at least $1 - k_0 / (100 k)$, there exist indices $i_1, \dots, i_{k_0} \in [n]$ with the following properties. 
		\begin{enumerate} 
			\item
				$(i_1, \ldots, i_{k_0})$ is a length-$k_0$ monotone subsequence of $f$.
			\item
				There is an interval $I \in \calI$ such that $i_1, \ldots, i_{k_0} \in I \cap E(T_I)$.
			\item
				$i_1, \ldots, i_{k_0} \in \bA$ and $i_p \in \bA_0$, where $p$ is the primary index of $(G, \varrho)$.
		\end{enumerate} 
	\end{lemma}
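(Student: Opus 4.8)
The plan is to prove Lemma~\ref{lem:case2-alg} by induction on $k_0$, the number of leaves of $G$. Since extra rounds only enlarge $\bA$ and leave $\bA_0$ (the top-level step-\ref{sample:helper*:step:1} sample) unchanged, we may assume $r=\lfloor\log_2 k_0\rfloor$. \emph{Base case $k_0=1$:} here $G$ is a single node, $p=1$, $\sfI_I(r)=\{\{t\}:t\in T_I\}$, and $\bigcup_{I\in\calI}E(T_I)=\bigcup_I T_I$ has size $\sum_I|T_I|\ge\alpha n/4$ by hypothesis. Step~\ref{sample:helper*:step:1} places each index of $[n]$ into $\bA_0$ independently with probability $1/(\rho n)$, and from the quantitative relation between $\alpha,\rho$ and $C$ one obtains, by a Chernoff bound, that $\bA_0$ meets $\bigcup_I E(T_I)$ with probability at least $1-1/(100k)$; let $i_1$ be such an index. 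If $r=0$ then $\bA=\bA_0$ and we are done; if $r>0$, the point $i_1$, once in $\bA_0$, automatically re-enters $\bA$: for $a=i_1$ and $j=1$ the interval $B_{i_1,1}$ is so short that $1/(\rho|B_{i_1,1}|)\ge 1$, so $i_1$ is re-sampled with probability $1$ at every recursive level and is eventually returned inside some $\bR$-set, hence in $\bA$.

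For the inductive step let $k_0\ge 2$ and let $(u_1,\dots,u_d)$ be the root-to-leaf path defining the primary index $p$, with siblings $u_2',\dots,u_d'$; the subtree $G_\ell'$ rooted at $u_\ell'$ has $m_\ell\le\lfloor k_0/2\rfloor$ leaves, so $\lfloor\log_2 m_\ell\rfloor\le r-1$, and $\sum_{\ell=2}^{d}m_\ell=k_0-1$. As in the base case, with probability at least $1-1/(100k)$ the set $\bA_0$ contains the $p$-th coordinate $x_p$ of some tuple $(x_1,\dots,x_{k_0})\in T_I$, $I\in\calI$; condition on this. Unwinding Definition~\ref{def:tree-rep} along the primary path, $x_p$ lies in a nested chain of splittable sub-intervals $I^{(1)}=I_h\supseteq I^{(2)}\supseteq\cdots\ni x_p$ with $I^{(1)}\subseteq I$, where the interval $J_\ell$ attached to $u_\ell'$ is an $L$- or $R$-piece of $I^{(\ell-1)}$; thus $J_\ell\subseteq I^{(\ell-1)}$, $|J_\ell|=\Theta_k(|I^{(\ell-1)}|)$, and the tuples of $T_I$ restricted to $J_\ell$ form disjoint length-$m_\ell$ monotone subsequences of density $\ge 2^{-\varrho}\ge\alpha/2$ in $J_\ell$, represented by the descriptor with tree $G_\ell'$ and weights $\varrho|_{G_\ell'}$. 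Since $x_p\in I^{(\ell-1)}$, for $j=\lceil\log_2|I^{(\ell-1)}|\rceil+O(\log k)$ we get $J_\ell\subseteq B_{x_p,j}$ with $|B_{x_p,j}|=\Theta_k(|J_\ell|)$, and the algorithm makes the call $\SampleH^{\ast}(r-1,B_{x_p,j},\rho,\calI)$ for all $j=1,\dots,O(\log n)$, in particular for this one.

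Next I would feed each sibling to the induction hypothesis (the statement is translation-invariant in the domain, so it applies to the call on $B_{x_p,j}$). Let $\calI_\ell$ consist of the sub-intervals of $B_{x_p,j}\cap I$ carrying the length-$m_\ell$ descriptor $G_\ell'$; the hypotheses hold for $\SampleH^{\ast}(r-1,B_{x_p,j},\rho,\calI_\ell)$ with parameter $\alpha':=\Theta(\alpha/\poly(k))$ (the loss coming from $|J_\ell|/|B_{x_p,j}|$ and $|L|,|R|\ge|I^{(\ell-1)}|/(6k)$), and after the $\lfloor\log_2 k\rfloor$ levels of such losses one still has $\rho\ge C\alpha'$ and $\alpha'\ge\rho/\polylog(1/\rho)$, since $C$ (which may depend on $k$) and the polylogarithmic factors are chosen accordingly. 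Because $\calI_\ell$ refines the restriction of $\calI$ to $B_{x_p,j}$, and ``there is a length-$m_\ell$ monotone subsequence inside some $J\cap E(T_J)$ with its $G_\ell'$-primary coordinate in the $\bA_0$-component and the rest in the $\bA$-component'' is a monotone event, Lemma~\ref{lem:coupling} shows the call the algorithm actually makes (with $\calI$) succeeds with probability at least that of the call with $\calI_\ell$, which by induction is at least $1-m_\ell/(100k)$. A union bound over the round-$1$ event and the $d-1$ siblings then gives, with probability at least $1-\bigl(1+\sum_{\ell=2}^{d}m_\ell\bigr)/(100k)=1-k_0/(100k)$, that simultaneously $x_p\in\bA_0$ and for each $\ell$ a length-$m_\ell$ monotone subsequence $\vec y^{(\ell)}$ lies in the $\bA$-component $\bR_{x_p,j}\subseteq\bA$ and inside $J_\ell\cap E(T_I)$. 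It remains to assemble: the $k_0$ indices $\{x_p\}\cup\bigcup_{\ell=2}^{d}\{\text{entries of }\vec y^{(\ell)}\}$ (one per leaf of $G$) all lie in $\bA$ --- with $x_p$ re-entering $\bA$ as in the base case --- with $x_p=i_p\in\bA_0$, all in $E(T_I)$ since every tuple set below $u_1$ is built from prefixes/suffixes of $T_I$; and reading them in $\le_G$-order gives a length-$k_0$ monotone subsequence of $f$, because at each primary-path node the splittable partition $L\prec M\prec R$ fixes the left-to-right order of the corresponding blocks and forces $f(\cdot)<f(\cdot)$ between the $L$-block and the $R$-block (Definition~\ref{def:splittable}), while the order and monotonicity inside each block $\vec y^{(\ell)}$ is the inductive conclusion for $G_\ell'$.

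I expect the main obstacle to be precisely this combination step together with the bookkeeping feeding the induction hypothesis: one must verify that the recursive unfolding of the tree descriptor aligns the splittable intervals $J_\ell$, their scales, and the query intervals $B_{x_p,j}$ so that each $B_{x_p,j}$ covers the corresponding $J_\ell$ yet is only a $\poly(k)$ factor longer, that the density of the relevant length-$m_\ell$ tuples inside $B_{x_p,j}$ stays at least $\alpha/\poly(k)$, and that the primary coordinate together with the sibling blocks is simultaneously position-increasing and value-increasing when read in $\le_G$-order. Carrying $\alpha$ down through $\lfloor\log_2 k\rfloor$ levels while keeping the hypotheses on $\alpha$ and $\rho$ valid, and using Lemma~\ref{lem:coupling} to reconcile the fixed collection $\calI$ used by the algorithm with the refined collections $\calI_\ell$ required by the induction, are the other points demanding care.
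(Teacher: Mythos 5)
Your approach is a genuinely different decomposition from the paper's. The paper does a \emph{binary} induction: it couples the top-level run to the refined collection $\calL$ and applies the induction hypothesis to the primary child $(G_{\sf left}, \varrho_{\sf left})$ at the \emph{same} depth $r$ (legal since $\lfloor\log_2 c\rfloor\le r$), obtaining a $(12\dots c)$-pattern whose primary coordinate lies in $\bA_0$, and then, conditionally, applies the induction hypothesis at depth $r-1$ to the other child over the single interval $B_{a,j}$; you instead \emph{flatten} this recursion, sampling only $x_p\in\bA_0$ and applying the induction hypothesis at depth $r-1$ to each of the $d-1$ primary-path siblings directly, with a union bound $1-\bigl(1+\sum_\ell m_\ell\bigr)/(100k)=1-k_0/(100k)$. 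Both routes are correct, and your observation that each sibling has $m_\ell\le\lfloor k_0/2\rfloor$ leaves (hence $\lfloor\log_2 m_\ell\rfloor\le r-1$) is exactly what makes the flat version legal. What the binary split buys, however, is a clean assembly: the paper only ever concatenates two \emph{already-assembled} patterns and invokes splittability at the root once, whereas your flat version must combine $d$ pieces simultaneously and verify, for every pair of primary-path siblings, that positions and values compare correctly --- a chase of the splittability conditions along the nested chain $I^{(1)}\supseteq I^{(2)}\supseteq\cdots$ that you flag as ``demanding care'' but do not carry out, and this is a real deferred step rather than a triviality. Two smaller points: the $\poly(k)$ degradation in $\alpha$ you introduce is avoidable --- the paper keeps the mass parameter unchanged on the primary side by noting $\sum_J|T_J^{(L)}|=\sum_J|T_J|\ge\alpha n/4$, and on the other side by $|T_J^{(R)}|\ge\alpha|R_J|$ with $B_{a,j}$ chosen to have comparable size --- though your extra loss can be absorbed into the slack in $\polylog(1/\rho)$; and your opening reduction to $r=\lfloor\log_2 k_0\rfloor$ would itself need a monotonicity-in-$r$ coupling (distinct from Lemma~\ref{lem:coupling}) that the paper sidesteps by simply proving the statement for all admissible $r$.
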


	\begin{proof}	

		The proof proceeds by induction on $k_0$. Consider the base case, when $k_0 = 1$. In this case, $\lfloor \log_2 k_0 \rfloor = 0$, so for any $r \geq 0$, $\SampleH^*(r, [n],\rho, \calI)$ runs step~\ref{sample:helper*:step:1}. As a result, $\SampleH^*(r, [n], \rho,\calI)$ samples each element inside an interval of $\calI$ independently with probability $1/(\rho n)$. In order to satisfy the requirements of the lemma in this case, we need $\bA_0$ to contain an element of $\cup_{I \in \calI} T_I$. By the assumption on the size of this union, and because each of the elements of the union lives inside some interval from $\calI$, such an element will exist with sufficiently high probability via a Chernoff bound.

		For the inductive step, assume that Lemma~\ref{lem:case2-alg} is fulfilled whenever $k_0 < K$, for $K \in \N$ satisfying $1 < K \leq k$, and we will prove, assuming this inductive hypothesis, that Lemma~\ref{lem:case2-alg} holds for $k_0 = K$. So consider a setting $k_0 = K$. Let $\calI$, $(G, \varrho)$ and $\sfI_I$ be as in the statement of the lemma. Denote the root of $(G, \varrho)$ by $v_{\sf root}$, and its left and right children by $v_{\sf left}$ and $v_{\sf right}$. Let $c$ be the number of leaves in the subtree $(G_{\sf left}, \varrho_{\sf left})$ rooted at $v_{\sf left}$, so $k_0 - c$ is the number of leaves in the subtree $(G_{\sf right}, \varrho_{\sf right})$ rooted at $v_{\sf right}$. We shall assume that $c \ge k_0 - c$; the other case follows by an analogous argument.

		For each $I \in \calI$, the collection of pairs $(J, T_{I,J})$, where $J \in \sfI_I(v_{\sf root})$ and $T_{J} = T_I \cap J^{k_0}$ is the restriction of $T_I$ to $J$, is a $(c, 1/(6k), \alpha)$-splittable collection of $I$. Let $\calJ$ be the collection of all such intervals $J$ (note that they are pairwise disjoint and that $\calJ$ is a refinement of $\calI$).
		Let $(L_J, M_J, R_J)$ be the partition of $J$ into left, middle and right intervals, respectively, and let $T_J^{(L)}$ and $T_J^{(R)}$ be sets of $c$-prefixes and $(k_0 - c)$-suffixes of $k_0$-tuples from $T_{I, J}$, as given by Definition~\ref{def:splittable}. Set
		\[
			\calL = \{L_J : J \in \calJ\}, \qquad \calR = \{R_J : J \in \calJ\}, \qquad T^{(L)} = \bigcup_{J \in \calJ} T_J^{(L)}, \qquad T^{(R)} = \bigcup_{J \in \calJ} T_J^{(R)}.
		\]
		Note that $(G_{\sf left}, \varrho_{\sf left}, \sfI_{J, {\sf left}})$ is a $(k, c, \alpha)$-tree descriptor for $(f, T_{J}, J)$, with appropriate $\sfI_{J, {\sf left}}$. Similarly, $(G_{\sf right}, \varrho_{\sf right}, \sfI_{J, {\sf right}})$ is a $(k, k_0 - c, \alpha)$-tree descriptor for $(f, T_{J}, J)$, with appropriate $\sfI_{J, {\sf right}}$.

		\newcommand{\sfa}{ {\sf a}}

		We consider an execution of $\SampleH^*(r, [n], \rho, \calI)$ which outputs a random pair of sets $(\bA, \bA_0)$. Let $\bA^{(L)}$ and $\bA_0^{(L)}$ be the subsets of $\bA$ and $\bA_0$, respectively, obtained by running a parallel execution of $\SampleH^*(r, [n], \rho, \calL)$, where, as in the proof of Lemma~\ref{lem:coupling}, we follow the execution of $\SampleH^*(r, [n], \rho, \calI)$, but whenever an element which is not in $\calL$ is considered, we ignore it. As stated above, this coupling yields a pair $(\bA^{(L)}, \bA_0^{(L)})$ with the distribution given by running $\SampleH^*(r, [n], \rho, \calL)$.

		For $a \in \bA_0^{(L)}$, and any $j \in [O(\log n)]$, let $(\bA^{(a,j)}, \bA_0^{(a,j)})$ be the output of the recursive call (inside the execution of $\SampleH^*(r, [n], \rho, \calI)$) of $\SampleH^*(r-1, B_{a, j}, \rho, \calR)$. 
		
		We define the collection:
		\[ 
			\calS = \left\{ (S_0, S) :  \begin{array}{l} 
				S_0 \subseteq S \subseteq E(T^{(L)}) \\
				\text{there exist $i_1, \dots, i_c \in S$ forming a $(12\dots c)$-pattern such that $i_p \in S_0$} \\
				\text{there exist $J \in \calJ$ such that $i_1, \dots, i_c \in L_J$} 
			\end{array} \right\}. 
		\] 
		For each $(S_0, S) \in \calS$, we let $\sfa(S_0, S) \in E(T^{(L)})$ be some $i_{p} \in S$ such that there exist $c-1$ indices $i_1, \dots, i_{p-1}, i_{p+1}, i_{c}$, such that $(i_1, \ldots, i_c)$ forms a $(12\dots c)$-pattern in $S$, and $i_1, \ldots, i_p \in L_J$ for some $J \in \calJ$. Let $\seg(S_0, S)$ be this interval $J$, and let $\len(S_0, S) \in [O(\log n)]$ be the smallest $j$ for which $R_J \subseteq B_{a,j}$, where $a = \sfa(S_0, S)$. 
		
		Let $\bE_L$ be the event that 
		\[
			\left(\bA^{(L)} \cap E(T^{(L)}), \bA_0^{(L)} \cap E(T^{(L)}) \right) \in \calS,
		\]
		and let $\bE_L(S_0, S)$ be the event that
		\[
			\bA^{(L)} \cap E(T^{(L)}) = S_0 \qquad  \qquad \bA_0^{(L)} \cap E(T^{(L)}) = S,
		\]
		so $\bE_L = \cup_{(S_0, S) \in \calS} \bE_L(S_0, S)$, and the events $\bE_L(S_0, S)$ are pairwise disjoint.

		By the induction hypothesis, applied with the family $\{L_J : J \in \calJ\}$ and the corresponding sets $T_J^{(L)}$ (using $\sum_{J \in \calJ} |T_J^{(L)}| = \sum_{J \in \calJ} |T_J| \ge \alpha n / 4$), we have
		\[
			\Pr[\bE_L] \ge 1 - c/(100k).
		\]
		Let $\bE_R(a, j)$ be the event that $a \in \bA_0$, and in the recursive run of $\SampleH^*(r-1, B_{a, j}, \rho, \calR)$ inside $\SampleH^*(r, [n], \rho, \calI)$, there exist indices $i_1', \ldots, i_{k_0-c}'$ such that
		\begin{itemize}
			\item
				$(i_1', \ldots, i_{k_0-c}')$ form a length $(k_0-c)$-monotone subsequence.
			\item
				$i_1', \ldots, i_{k_0-c}' \in E(T_J^{(R)})$, where $J$ is the interval in $\calJ$ with $i \in J$.
			\item
				$i_1', \ldots, i_{k_0-c}' \in \bA^{(a,j)}$ and $i_q' \in \bA_0^{(a,j)}$, where $q$ is the primary index of $(G_{\sf right}, \varrho_{\sf right})$.
		\end{itemize}
		Let $\bF_R(a, j)$ be the event that in a run of $\SampleH^*(r-1, B_{a, j}, \rho, \calR)$, there exist $i_1', \ldots, i_{k_0-c}'$ as above.
		Fix some $(S_0, S) \in \calS$, and let $a = \sfa(S_0, S)$, $J = \seg(S_0, S)$ and $j = \len(S_0, S)$. We claim that 
		\[
			\Pr[\bE_R(a, j) \,\,|\,\, \bE_L(S_0, S)] = \Pr[\bF_R(a, j)].
		\]
		Indeed, by conditioning on $\bE_L(S_0, S)$ we know that $a \in \bA_0$, so there will be a recursive run of $\SampleH^*(r-1, B_{a, j}, \rho, \calR)$, and moreover the event $\bE_L(S_0, S)$ will have no influence on the outcomes of this run.
		
		Note that $|T_J^{(R)}|  \ge \alpha |R_J| \ge \alpha |B_{a, J}| / 4$.
		By the induction hypothesis, applied with the interval $B_{a, J}$ in place of $[n]$, the family $\{R_J\}$ and the corresponding set $T_J^{(R)}$, and the tree $(G_{\sf right}, \varrho_{\sf right})$, we find that  $\Pr[\bF_R(a, j)] \ge 1 - (k_0-c)/(100k)$.
		We note that if both $\bE_L(S_0, S)$ and $\bE_R(a, j)$ hold, then there are indices $i_1, \ldots, i_c, i_1', \ldots, i_{k_0 - c}'$ such that
		\begin{itemize}
			\item
				$(i_1, \ldots, i_c)$ is a length-$c$ monotone subsequence in $E(T_J^{(L)})$, and $(i_1', \ldots, i_{k_0 - c}')$ is a length-$c$ monotone subsequence in $E(T_J^{(R)})$. In particular, $(i_1, \ldots, i_c, i_1', \ldots, i_{k_0 - c}')$ is a length-$k_0$ monotone subsequence that lies in $E(T_j)$.
			\item
				$i_1, \ldots, i_c, i_1', \ldots, i_{k_0 - c}' \in \bA$ and $i_p \in \bA_0$ (recall that $p$ is the primary index of both $G$ and $G_{\sf left}$).
		\end{itemize}
		I.e.\ if these two events hold, then the requirements ot the lemma are satisfied.
		It follows that the requirements ot the lemma are satisfied with at least the following probability, using the fact that the events $\bE_L(S_0, S)$ are disjoint.
		\begin{align*}
			& \sum_{(S_0, S) \in \calS} \Pr[\bE_R(\sfa(S_0,S), \len(S_0,S)) \text{ and } \bE_L(S_0, S)] \\
			= & \sum_{(S_0, S) \in \calS} \Pr[\bE_R(\sfa(S_0,S), \len(S_0,S)) \,\, | \,\, \bE_L(S_0, S)] \times \Pr[\bE_L(S_0, S)] \\
			\ge & \sum_{(S_0, S) \in \calS} \Pr[\bF_R(\sfa(S_0,S), \len(S_0,S))] \times \Pr[\bE_L(S_0, S) ]\\
			\ge & \left(1 - \frac{k_0-c}{100k}\right) \cdot \sum_{(S_0, S) \in \calS} \Pr[\bE_L(S_0, S)] \\
			\ge & \left(1 - \frac{k_0-c}{100k}\right) \cdot \Pr[\bE_L] \\
			\ge & \left(1 - \frac{k_0-c}{100k}\right) \cdot \left(1 - \frac{c}{100k}\right) \ge 1 - \frac{k_0}{100k}.
		\end{align*}
		This completes the proof of Lemma~\ref{lem:case2-alg}.
	\end{proof}
 
\section{Lower Bounds}\label{sec:lowerbounds}

In this section, we prove our lower bound for non-adaptive testing of $(12\dots k)$-freeness with one-sided error, Theorem~\ref{thm:intro-lb}. Below we give a precise quantitative version of our lower bound statement for the case where $k$ and $n$ are both a power of $2$, from which one can derive the general case, as we shall explain soon. 

\begin{theorem}
	\label{thm:lower_bound}
	Let $k \leq n \in \N$ be powers of $2$ and let $0 < p < 1$. There exists a constant $\eps_0 > 0$ such that any non-adaptive algorithm which, given query access to a function $f\colon [n] \to \R$ that is $\eps_0$-far from $(12\dots k)$-free, outputs a length-$k$ monotone subsequence with probability at least $p$, must make at least $ p \binom{\log_2 n}{ \log_2 k }$ queries. Moreover, one can take $\eps_0 = 1/k$.
\end{theorem}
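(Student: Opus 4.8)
The plan is to apply Yao's minimax principle against the distribution $\calD_{n,k}$ from the introduction. Writing $h=\log_2 k$, let $\boldf$ be uniform over the $\binom{\log_2 n}{h}$ permutations $f_{i_1,\dots,i_h}=f^{\downarrow}\circ F_{i_h}\circ\cdots\circ F_{i_1}$, one for each $i_1<\cdots<i_h$ in $[\log_2 n]$. A non-adaptive one-sided algorithm making at most $q$ queries is a distribution over query sets $Q\subseteq[n]$ with $|Q|\le q$, and it can output a $(12\dots k)$-pattern of $\boldf$ only from within $Q$; so it suffices to prove that for every such $Q$,
$\Prx_{\boldf}[\exists\, x_1<\cdots<x_k\in Q:\ \boldf(x_1)<\cdots<\boldf(x_k)]\le (|Q|-1)/\binom{\log_2 n}{h}$,
since averaging over the algorithm's coins then bounds its success probability on $\calD_{n,k}$ by $(q-1)/\binom{\log_2 n}{h}$, which is strictly less than $p$ as soon as $q\le p\binom{\log_2 n}{h}$. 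Establishing this inequality rests on three facts about the construction: (a) each $f_{i_1,\dots,i_h}$ is $(1/k)$-far from $(12\dots k)$-free, so $\calD_{n,k}$ is genuinely supported on $\eps_0$-far instances with $\eps_0=1/k$; (b) every $(12\dots k)$-pattern of $f_{i_1,\dots,i_h}$ has an $h$-profile of type $(i_1,\dots,i_h)$, so $Q$ can succeed on $f_{i_1,\dots,i_h}$ only when $(i_1,\dots,i_h)\in\bprof_h(Q)$ and hence the probability above is at most $|\bprof_h(Q)|/\binom{\log_2 n}{h}$; and (c) the combinatorial bound $|\bprof_h(Q)|\le|Q|-1$. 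Facts (b) and (c) together give the displayed inequality, and (a) makes the reduction to a far-instance distribution legitimate.

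For (c) I would run the induction on $h$ sketched in the introduction. The base case $h=1$ asserts that the set of values $M(x,y)$ realized by pairs of $Q$ has size at most $|Q|-1$: taking $i_{\max}$ to be the largest such value and splitting $Q$ into $Q_0=\{x:b^x_{i_{\max}}=0\}$ and $Q_1=\{x:b^x_{i_{\max}}=1\}$, we have $\bprof_1(Q)=\bprof_1(Q_0)\cup\bprof_1(Q_1)\cup\{i_{\max}\}$, so an induction on $|Q|$ closes. For $h>1$, build the nested chain $\emptyset=B_{\log_2 n+1}\subseteq\cdots\subseteq B_1=Q$, where each $B_i\supseteq B_{i+1}$ is a maximal subset of $Q$ in which no two elements first differ in a bit below $i$, and for $j\in[\log_2 n]$ let $N_j=\{(i_2,\dots,i_h):(j,i_2,\dots,i_h)\in\bprof_h(Q)\}$. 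Two observations drive the argument: by maximality of $B_j$, each coordinate of a tuple with profile starting at $j$ can be replaced by a nearby element of $B_j$ without changing the profile, so such a tuple may be taken inside $B_j$; and in any such tuple inside $B_j$, each of the $k/2$ consecutive pairs $(y_{2t-1},y_{2t})$ first differs in bit $j$, so at most one member of the pair lies in $B_{j+1}$, and keeping the other member from every pair produces a $(k/2)$-tuple inside $B_j\setminus B_{j+1}$ with $(h-1)$-profile $(i_2,\dots,i_h)$, i.e.\ $N_j\subseteq\bprof_{h-1}(B_j\setminus B_{j+1})$. By the induction hypothesis $|N_j|\le|B_j\setminus B_{j+1}|$, strictly whenever $N_j\ne\emptyset$, and summing over the disjoint decomposition $Q=\bigcup_j(B_j\setminus B_{j+1})$ with $|\bprof_h(Q)|=\sum_j|N_j|$ yields $|\bprof_h(Q)|\le|Q|-1$.

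The heart of the proof — and the step I expect to demand the most care — is (a) and (b), both established by induction on $h$ from the block decomposition of Figure~\ref{fig:construct-recurse}, maintained together with the auxiliary invariant that $f_{i_1,\dots,i_h}$ has no increasing subsequence of length exceeding $k=2^h$. The outermost flip $F_{i_h}$ cuts $[n]$ into consecutive blocks of length $2^{i_h}$; across distinct blocks the values of $f_{i_1,\dots,i_h}$ strictly decrease, and inside each block the two half-blocks of length $2^{i_h-1}$ carry shifted copies of $f_{i_1,\dots,i_{h-1}}$, with every value in the left half lying below every value in the right half. Consequently a $(12\dots k)$-pattern is confined to a single length-$2^{i_h}$ block and splits, at some position $j$, into a length-$j$ pattern of $f_{i_1,\dots,i_{h-1}}$ in the left half and a length-$(k-j)$ pattern in the right half; by the invariant for $h-1$ neither part exceeds $k/2$, so $j=k/2$, the split straddles the half-block boundary — forcing $M(x_{k/2},x_{k/2+1})=i_h$ — and applying (b) for $h-1$ to the two halves shows the whole tuple has $h$-profile type $(i_1,\dots,i_h)$, which is (b). The same decomposition yields (a): by (a) for $h-1$ each half-block contains $2^{i_h-h}$ disjoint length-$(k/2)$ patterns, and matching the $m$-th of the left half with the $m$-th of the right half (valid since left indices precede right indices and left values lie below right values) produces $2^{i_h-h}$ disjoint length-$k$ patterns per block, hence $n/2^h=n/k$ in total; a function with $n/k$ pairwise disjoint $(12\dots k)$-patterns is $(1/k)$-far from $(12\dots k)$-free, since any $(12\dots k)$-free function must alter at least one index of each. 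The real work lies in making the ``shifted copy of $f_{i_1,\dots,i_{h-1}}$'' claim precise through the composition of the $F_{i_j}$'s and in ruling out patterns that straddle the block hierarchy; once the base case $h=1$ is settled — it reduces to the elementary facts about the $k=2$ construction recalled above — the inductive step is conceptually straightforward but bookkeeping-heavy.
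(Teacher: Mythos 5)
Your proposal follows the paper's proof essentially verbatim: Yao's principle against the same distribution $\calD_{n,k}$, the same three lemmas (the profile bound $|\bprof_h(Q)|\le|Q|-1$ via the nested $B_i$ chain, the characterization of $(12\dots k)$-copies by $h$-profiles, and the $n/k$ disjoint copies giving $(1/k)$-farness), and the same auxiliary invariant that $f_{i_1,\dots,i_h}$ has no increasing subsequence of length $k+1$. The only cosmetic difference is that you phrase the characterization of copies via the block decomposition of the domain into length-$2^{i_h}$ blocks and half-blocks, whereas the paper first isolates an algebraic claim (that $f_{i_1,\dots,i_h}(x)<f_{i_1,\dots,i_h}(y)$ iff $M(x,y)\in\{i_1,\dots,i_h\}$, proved by tracking what each bit-flip does to $M$) and then argues via the unique $\argmax$ of $M(x_j,x_{j+1})$ — the same argument in slightly different clothing.
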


As is usual for arguments of this type, to prove Theorem~\ref{thm:lower_bound} we follow Yao's minimax principle~\cite{Y77}. We construct a distribution $\calD_{n,k}$ over sequences that are $(1/k)$-far from $(12\ldots k)$-free, such that any deterministic algorithm, that makes fewer than $p \binom{\log_2 n}{ \log_2 k }$ queries, fails to find a $(12\ldots k)$-copy in a sequence drawn from this distribution, with probability larger than $1-p$. Here, a deterministic non-adaptive algorithm that makes $q$ queries amounts to deterministically picking a $q$-element subset $Q$ of $[n]$ in advance (without seeing any values in the sequence), and querying all elements of $Q$.

\paragraph{Handling general $k$ and $n$.}
	We first explain how to prove our general lower bound, Theorem~\ref{thm:intro-lb}, using the lower bound distribution $\calD_{n,k}$ of the case where $n$ and $k$ are powers of $2$, given in Theorem~\ref{thm:lower_bound}, as a black box. The reduction relies on standard ``padding'' techniques. 
	Given integers $k, n$ with $k \le n$, write $k = 2^h + t$ for $h,t \in \N$ with $t < 2^h$, and let $k' = 2^h$. Let $n'$ be the largest power of $2$ which is not larger than $nk' / k$, and note that $n' \geq n/4$ and $k' \leq n'$.
	We construct our lower bound distribution $\calD_{n,k}$ as follows. Given any $f' \colon [n'] \to \R$ in $\calD_{n', k'}$, we partition the set $\{n'+1, n'+2, \ldots, n\}$ into $t$ consecutive intervals $I_1, \ldots, I_t$, each of size at least $n' / k'$, and extend $f'$ to a sequence $f \colon [n] \to \R$ satisfying the following conditions.
	\begin{itemize}
		\item $f(x) = f'(x)$ for any $x \in [n']$.
		\item $f$ is decreasing within any $I_i$, that is, $f(x) > f(y)$ for $x < y \in I_i$. 
		\item $f(x) < f(y)$ for any $x \in [n']$ and $y \in I_1$, and for any $x \in I_i$ and $y \in I_j$ where $i < j$.
	\end{itemize}
	Clearly, we can construct such a sequence $f$ from any given sequence $f'$. Moreover, it is possible to make sure that the values $f(x)$ with $x \in [n]$ are distinct, and thus by relabeling $f$ can be taken to be a permutation.
	Furthermore, any $(12\dots k')$-copy in $f'$ can be extended to a $(12\dots k)$-copy in $f$ by appending exactly one arbitrary element from each $I_i$ to it, for a total of $t = k - k'$ additional elements.

	Building on the fact that $f'$ is $(1/k')$-far from $(12\dots k')$-free and that $n' \geq n/4$ and $n-n' \ge n(k-k')/k$, we conclude that $f$ is $(1/4k')$-far from $(12\dots k)$-free. 
	Form a distribution $\calD_{n,k}$ by picking a random $\boldf$ according to the distribution $\sim \calD_{n',k'}$ and extending it to a sequence $f'$ as above.
 
	\noindent The rest of this section is devoted to the proof of Theorem~\ref{thm:lower_bound}.

\subsection{Basic binary profiles and monotonicity testing}
\label{subsec:lower_bound_monotonicity}

	In a sense, the proof of our lower bound, Theorem~\ref{thm:lower_bound}, is a (substantial) generalization of the non-adaptive lower bound for testing monotonicity. In order to introduce the machinery required for the proof, we present, in this subsection, a simple proof of the classical $\Omega(\log n)$ non-adaptive one-sided lower bound for monotonicity testing \cite{EKKRV00} using basic versions of the tools we shall use for the full proof. Then, in Subsection~\ref{subsec:lower_bound_full_proof} we proceed to present our tools in their full generality, and provide the proof of Theorem~\ref{thm:lower_bound}. 

	Intuitively, one way to explain why monotonicity testing requires $\Omega(\log n)$ queries relies on the following reasoning. There exist $\Omega(\log n)$ different distance ``profiles'' our queries should capture; and it can be shown that in general, a small set of queries cannot capture many types of different profiles all at once. At a high level, our new lower bound is an extension of this argument, which uses a more general type of profiles. We start, then, with a formal definition of the basic profiles required for the case of monotonicity testing. Below we restate the required definitions related to the binary representation of numbers in $[n]$.

	\begin{definition}[Binary representation]
		For any $n \in \N$ which is a power of $2$ and $t \in [n]$, the \emph{binary representation} $B_n(t)$ of $t$ is the unique tuple $(b^t_1, b^t_2, \ldots, b^t_{\log_2 n}) \in \{0,1\}^{\log_2 n}$ satisfying $t = b^t_1 \cdot 2^{0} + b^t_2 \cdot 2^{1} + \dots + b^t_{\log_2 n} \cdot 2^{\log_2 n - 1}$. 
		For $i \in [\log_2 n]$, the \emph{bit-flip operator}, $F_i \colon [n] \to [n]$, is defined as follows. Given $t \in [n]$ with $B_n(t) = (b^t_1, \ldots, b^t_{\log n})$, we set $F_n(t) = t'$ where $t' \in [n]$ is the unique integer satisfying $B_n(t') = (b^t_1, \ldots, b^{t}_{i-1}, 1-b^t_i, b^t_{i+1}, \ldots, b^t_{\log n})$.  
		Finally, for any two distinct elements $x,y \in [n]$, let $M(x,y) \in [\log_2 n]$ denote the index of the most significant bit in which they differ, i.e., the largest $i$ with $b_i^x \neq b_i^y$.
	\end{definition}
	\noindent Note that the bit-flip operator $F_i$ is a permutation on $[n]$.

\paragraph{The construction.}
	We start by providing our lower bound construction $\calD_{n,2}$, supported on sequences that are far from $(12)$-free.

	Let $f^{\downarrow} \colon [n] \to [n]$ denote the (unique) decreasing permutation on $[n]$, i.e., the function $f^{\downarrow}(x) = n+1-x$ for any $x \in [n]$.
	For any $i \in [\log n]$, define $f_i \colon [n] \to [n]$ to be the composition of $f^{\downarrow}$ with the bit-flip operator $F_i$, that is, $f_i(x) = f^{\downarrow}(F_i(x))$ for any $x \in [n]$. Note that $f_i$ is a permutation, as a composition of permutations. See Figure~\ref{fig:mon-construct} for a visualization of the construction. Finally, define $\calD_{n,2}$ as the uniform distribution over the sequences $f_1, f_2, \ldots, f_{\log n}$.

	The next lemma characterizes the set of all $(1,2)$-patterns in $f_i$. 
	\begin{lemma}
		\label{lem:monotone_copies}
		Let $i \in [\log n]$. A pair $x < y \in [n]$ forms a $(1,2)$-copy in $f_i$ if and only if $M(x,y) = i$.	
	\end{lemma}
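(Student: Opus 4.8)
The plan is to reduce the statement about $(1,2)$-copies in $f_i = f^{\downarrow} \circ F_i$ to a clean statement about how the bit-flip operator $F_i$ affects the relative order of a pair. First I would observe that since $f^{\downarrow}(t) = n+1-t$ is order-reversing, a pair $x < y$ forms a $(1,2)$-copy in $f_i$ exactly when $f^{\downarrow}(F_i(x)) < f^{\downarrow}(F_i(y))$, equivalently when $F_i(x) > F_i(y)$. Hence it suffices to prove the following auxiliary claim, which I would state explicitly since it (in a more elaborate form) underlies the general construction and the notion of $h$-profiles in Subsection~\ref{subsec:lower_bound_full_proof}: for distinct $x,y \in [n]$, the operator $F_i$ reverses the order of the pair $\{x,y\}$ (i.e.\ $x<y \Leftrightarrow F_i(x)>F_i(y)$) if and only if $i = M(x,y)$.

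To prove the claim, let $m = M(x,y)$ and recall that comparing two distinct numbers in $[n]$ amounts to comparing their binary representations at the most significant position where they differ; since $x<y$ and $m$ is that position, $b^x_m = 0$, $b^y_m = 1$, and $b^x_j = b^y_j$ for all $j>m$. I would then split into the cases $i \neq m$ and $i = m$. When $i \neq m$, flipping the $i$-th bit in both $x$ and $y$ preserves agreement or disagreement at position $i$, so the set of positions where they differ is unchanged; in particular $M(F_i(x),F_i(y)) = m$ still, and since position $m$ is untouched (as $i\neq m$), the comparison of $F_i(x)$ and $F_i(y)$ is decided at position $m$ in the same direction, giving $F_i(x) < F_i(y)$, so the order is preserved. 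When $i = m$, flipping bit $m$ leaves all positions $j>m$ in agreement and still leaves a disagreement at $m$, but now $b^{F_m(x)}_m = 1 > 0 = b^{F_m(y)}_m$, so $F_m(x) > F_m(y)$ and the order is reversed. This proves the claim, and combining it with the first observation yields the lemma: $x<y$ is a $(1,2)$-copy in $f_i$ iff $F_i(x) > F_i(y)$ iff $F_i$ reverses $\{x,y\}$ iff $M(x,y) = i$.

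I do not expect any genuine obstacle here; the argument is elementary. The only point requiring slight care is the bit bookkeeping, specifically the observation that flipping a single common bit position in both numbers never changes the most significant position at which they differ, so that the comparison stays decided at position $m$ except precisely when that decisive position is itself the one being flipped.
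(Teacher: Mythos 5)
Your proof is correct and takes essentially the same approach as the paper's: reduce to the effect of $F_i$ on relative order, then split on whether the flipped bit is the most significant differing bit. Your two-way case split ($i \neq M(x,y)$ vs.\ $i = M(x,y)$) is marginally more unified than the paper's three-way split ($M(x,y) > i$, $< i$, $= i$), and the auxiliary claim you isolate is exactly what the paper later proves as part of Claim~\ref{claim:lower_bigger_general}, but the underlying argument is the same.
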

	\begin{proof}
		Let $x < y \in [n]$. If $M(x,y) > i$, then $F_i(x) < F_i(y)$ holds and so $f_i(x) = f^{\downarrow}(F_i(x)) > f^{\downarrow}(F_i(y)) = f_i(y)$, implying that $(x,y)$ is not a $(1,2)$-copy. If $M(x,y) < i$ then $x$ and $y$ share the bit in index $i$ of the binary representation, and thus flipping it either adds $2^{i-1}$ to both $x$ and $y$ or decreases $2^{i-1}$ from both of them. In both cases, $F_i(x) < F_i(y)$, and like the previous case we get $f_i(x) > f_i(y)$. 
		Finally, if $M(x,y)=i$ then one can write $x = z + 0 \cdot 2^{i-1} + x'$ and $y = z + 1 \cdot 2^{i-1} + y'$, where $z$ corresponds to the $\log n - i$ most significant bits in the binary representation (which are the same in $x$ and $y$), and $x', y' < 2^{i-1}$ correspond to the $i-1$ least significant bits. Therefore, $F_i(x) = z + 1 \cdot 2^{i-1} + x' > z + 0 \cdot 2^{i-1} + y' = F_i(y)$ and thus $f_i(x) = f^{\downarrow}(F_i(x)) < f^{\downarrow}(F_i(y)) = f_i(y)$, as desired.
	\end{proof}

	We conclude that each of the sequences $f_i$ is $(1/2)$-far from $(12)$-free.
	\begin{lemma}
		\label{lem:monotonicity_construction}
		For any $i \in [\log n]$, the sequence $f_i$ contains a collection $\calC$ of $n/2$ disjoint $(1,2)$-copies.
	\end{lemma}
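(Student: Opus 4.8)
The plan is to exhibit the collection $\calC$ explicitly as a perfect matching of $[n]$ induced by the bit-flip operator $F_i$, and then invoke Lemma~\ref{lem:monotone_copies} to verify that every pair in the matching is a $(1,2)$-copy. Concretely, I would set
\[
	\calC = \left\{ (x, F_i(x)) : x \in [n],\ b_i^x = 0 \right\}.
\]
There are exactly $n/2$ indices $x \in [n]$ with $b_i^x = 0$, so $\calC$ has $n/2$ elements. Since $F_i$ is an involution that flips exactly the $i$-th bit, the two-element sets $\{x, F_i(x)\}$ over all $x$ with $b_i^x = 0$ partition $[n]$; in particular the pairs in $\calC$ are pairwise disjoint as claimed.

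Next I would check that each $(x, F_i(x)) \in \calC$ is genuinely an ordered $(1,2)$-copy. First, writing $y = F_i(x)$, the element $y$ is obtained from $x$ by changing the $i$-th bit from $0$ to $1$, which adds $2^{i-1}$ to the value; hence $x < y$, so the pair is listed in increasing order of index. Second, $x$ and $y$ differ precisely in bit $i$ and agree on all bits $i+1, \dots, \log_2 n$, so by definition $M(x,y) = i$. By Lemma~\ref{lem:monotone_copies}, this is equivalent to $(x,y)$ being a $(1,2)$-copy in $f_i$, i.e.\ $f_i(x) < f_i(y)$. This completes the verification.

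I do not anticipate a genuine obstacle here: the statement reduces to the characterization already established in Lemma~\ref{lem:monotone_copies}, and the only points requiring a modicum of care are bookkeeping ones — namely, confirming that flipping the $i$-th bit (which carries weight $2^{i-1}$ under the convention $t = \sum_j b_j^t 2^{j-1}$) moves the value in the direction that makes $x$ the smaller index when $b_i^x = 0$, and confirming that the induced pairing is indeed a partition of $[n]$ into $n/2$ disjoint pairs. Once $\calC$ is in place, the often-quoted corollary that $f_i$ is $(1/2)$-far from $(12)$-free is immediate, since the $n/2$ copies in $\calC$ are disjoint and each requires at least one entry of $f_i$ to be modified in order to be destroyed.
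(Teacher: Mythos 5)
Your proof is correct and takes essentially the same approach as the paper: you exhibit exactly the same collection $\calC = \{(x, F_i(x)) : x \in [n],\ b_i^x = 0\}$, observe it has $n/2$ disjoint pairs, note $M(x, F_i(x)) = i$, and invoke Lemma~\ref{lem:monotone_copies}. The paper's proof is terser (it omits the explicit verification that $x < F_i(x)$), but the argument is the same.
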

	\begin{proof}
		For any $x \in [n]$ whose binary representation $B_n(x) = (b^x_1, \ldots, b^x_{\log n})$ satisfies $b^x_i = 0$, we have $M(x, F_i(x)) = i$. By Lemma~\ref{lem:monotone_copies}, $(x, F_i(x))$ is thus a $(1,2)$-copy. Picking
		$$
			\calC = \{ (x, F_i(x)) \ : \ x \in [n],\  b^x_i = 0 \},
		$$
		and noting that the pairs in $\calC$ are disjoint, the proof follows.
	\end{proof}

\paragraph{Binary Profiles.}
	We now formally define our notion of \emph{binary profiles}, and describe why they are useful for proving lower bounds for problems of this type.

	\begin{definition}[Binary profiles captured]
		Let $n \in \N$ be a power of $2$ and let $Q \subseteq [n]$.
		The set of \emph{binary profiles} captured by $Q$ is defined as
		$$
			\bprof(Q) = \{i \in [\log n] \  : \ \text{there exist $x,y \in Q$ satisfying $M(x,y)=i$} \}.
		$$
	\end{definition}

	The next lemma asserts that the number of binary profiles that set captures does not exceed (or even match) the size of the set.

	\begin{lemma}
		\label{lem:binary_dist_prof}
		Let $Q \subseteq [n]$ be a subset of size $q > 0$. Then $|\bprof(Q)| \leq q-1$.
	\end{lemma}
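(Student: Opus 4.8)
The plan is to prove $|\bprof(Q)| \le q-1$ by induction on $q = |Q|$, following exactly the template used in the introduction for the $k=2$ case. The base case $q=1$ is immediate, since a single element cannot produce any pair $x,y$, so $\bprof(Q) = \emptyset$ and $0 \le 0$. (One could also start at $q \le 2$: with two elements there is exactly one pair, so at most one captured profile, and $1 \le 1$.)

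For the inductive step, assume the statement holds for all sets of size smaller than $q$, and let $Q$ have size $q \ge 2$. Let $i_{\max} = \max \bprof(Q)$ (if $\bprof(Q)$ is empty there is nothing to prove). Partition $Q$ according to the value of the $i_{\max}$-th bit: set $Q_0 = \{x \in Q : b^x_{i_{\max}} = 0\}$ and $Q_1 = \{y \in Q : b^y_{i_{\max}} = 1\}$. The key structural observation is that $\bprof(Q) \subseteq \bprof(Q_0) \cup \bprof(Q_1) \cup \{i_{\max}\}$: indeed, for any $x, y \in Q$ with $M(x,y) = i < i_{\max}$, the $i_{\max}$-th bits of $x$ and $y$ must agree (otherwise $M(x,y) \ge i_{\max} > i$), so $x$ and $y$ lie in the same part $Q_0$ or $Q_1$, and hence $i \in \bprof(Q_0) \cup \bprof(Q_1)$; the only remaining possible value, $i_{\max}$ itself, is accounted for by the extra singleton. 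Since $i_{\max} \in \bprof(Q)$, both $Q_0$ and $Q_1$ are nonempty, so each has size strictly less than $q$, and the induction hypothesis gives $|\bprof(Q_0)| \le |Q_0| - 1$ and $|\bprof(Q_1)| \le |Q_1| - 1$. Combining, $|\bprof(Q)| \le |\bprof(Q_0)| + |\bprof(Q_1)| + 1 \le (|Q_0| - 1) + (|Q_1| - 1) + 1 = |Q| - 1 = q - 1$, completing the induction.

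There is essentially no obstacle here — the argument is short and the only thing to be careful about is the claim $\bprof(Q) \subseteq \bprof(Q_0) \cup \bprof(Q_1) \cup \{i_{\max}\}$, which hinges on the elementary fact that two numbers agreeing on all bits above position $i$ but differing somewhere at or below it must have $M(x,y) \le i_{\max}$, and if they differ below $i_{\max}$ they agree at $i_{\max}$. One should also double-check the edge case where $Q_0$ or $Q_1$ could be empty: this cannot happen once $i_{\max}$ is a genuine captured profile, since a witnessing pair for $i_{\max}$ has one element in each part. If $\bprof(Q)$ is empty the inequality $0 \le q-1$ holds trivially for $q \ge 1$. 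This completes the proof.
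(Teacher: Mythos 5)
Your proof is correct and follows essentially the same induction as the paper: define $i_{\max}=\max\bprof(Q)$, split $Q$ by the $i_{\max}$-th bit, observe $\bprof(Q)\subseteq\bprof(Q_0)\cup\bprof(Q_1)\cup\{i_{\max}\}$, and sum up the inductive bounds. Your write-up is if anything slightly more careful than the paper's, since you explicitly check that $Q_0,Q_1$ are nonempty and handle $\bprof(Q)=\emptyset$.
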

	\begin{proof}
		We proceed by induction on $q$. For $q \leq 2$, the statement clearly holds. Otherwise, let $i_{\text{max}} = \max \bprof(Q)$ be the maximum index of a bit in which two elements $x,y \in Q$ differ.
		For $j=0,1$, define 
		$$
			Q_j = \{x \in Q \ :\  \text{the binary representation of $x$ is $B_n(x) = (b^x_1, \ldots, b^x_{\log n})$, and $b^x_{i_{\text{max}}} = j$}\}.
		$$
		\noindent
		Clearly, for any $x \in Q_0$ and $y \in Q_1$, we have $M(x,y) = i_{\text{max}}$. We can therefore write $\bprof(Q)$ as  
		$$
			\bprof(Q) = \bprof(Q_0) \cup \bprof(Q_1) \cup \{i_{\text{max}}\},
		$$
		from which we conclude that
		$$
			|\bprof(Q)| \leq |\bprof(Q_0)| + |\bprof(Q_1)| + 1 \leq |Q_0| - 1 + |Q_1| - 1 + 1 = |Q| - 1,
		$$
		where the second inequality follows from the induction hypothesis.
	\end{proof}

\paragraph{Proof for the case $k=2$ using binary profiles.}
	After collecting all the ingredients required to prove the case $k=2$ of Theorem~\ref{thm:lower_bound}, we now conclude the proof.
	Fix $0 < p < 1$, let $n$ be a power of two, and consider the distribution $\calD_{n,2}$ defined above, supported on sequences that are $(1/2)$-far from $(12)$-free (see Lemma~\ref{lem:monotonicity_construction}).
	Let $Q \subseteq [n]$ be any subset of size at most $p \log n$. It suffices to show that, for $\boldf \sim \calD_{n,2}$, the probability that $Q$ contains a $(12)$-copy in $\boldf$ is less than $p$. By Lemma~\ref{lem:monotone_copies}, $Q$ contains a $(12)$-copy with respect to $f_i$ if and only if $i \in \bprof(Q)$. Thus, the above probability is equal to $|\bprof(Q)| / \log n$, which, by Lemma~\ref{lem:binary_dist_prof}, is at most $(|Q|-1) / \log n < p$, as desired.

\subsection{Hierarchical binary profiles and the lower bound}
	\label{subsec:lower_bound_full_proof}
	To prove Theorem~\ref{thm:lower_bound} in its full generality, we significantly extend the proof presented in Subsection~\ref{subsec:lower_bound_monotonicity} for the case $k=2$, relying on a generalized hierarchical (and more involved) notion of a binary profile. 
	Let $n > k \geq 2$ be powers of $2$, and write $k = 2^h$ (so $h \in \N$). 
	We show that there exist $\binom{\log_2 n}{h} = \binom{\log_2 n}{\log_2 k}$ different types of \emph{binary $h$-profiles} (see Definition~\ref{def:h-profile}) with the following properties. First, a subset $Q \subseteq [n]$ can capture at most $|Q|-1$ such profiles (Lemma~\ref{lem:lower_bound_general_construction} below, generalizing Lemma~\ref{lem:monotonicity_construction}); and second, for each such profile there exists a sequence  (in fact, a permutation) that is $(1/k)$-far from $(12\ldots k)$-free, such that any set of queries $Q$ that finds $(12\ldots k)$-pattern with respect to this sequence must capture the given profile (Lemma~\ref{lem:hierarchical_copies} below, generalizing Lemma~\ref{lem:monotone_copies}).

	\paragraph{Hierarchical binary profiles.}
		While the proof for the case $k=2$ relied on a rather basic variant of a binary profile, our lower bound for general $k$ requires a more sophisticated, hierarchical type of profile, described below.
		\begin{definition}[binary $h$-profiles] \label{def:h-profile}
			Let $(x_1, \dots, x_k) \in [n]^{k}$ be a $k$-tuple of indices satisfying $x_1 < \dots < x_k$.  For an $h$-tuple $(i_1, \dots, i_h) \in [\log_2 n]^h$ satisfying $i_1 < \dots < i_k$, 
			we say that $(x_1, \ldots, x_k)$ has \emph{$h$-profile of type $(i_1,\ldots, i_{h})$} if, 
			\[ 
				M(x_j, x_{j+1}) = i_{M(j-1, j)} \qquad \text{for every $j \in [k-1]$}.
			\]
		\end{definition}
		For example, when $h=3$ (and $k=8$), a tuple $(x_1, \ldots, x_8) \in [n]^8$ with $x_1 < \ldots < x_8$ has binary $3$-profile of type $(i_1, i_2, i_3)$ if the sequence $(M(x_j, x_{j+1}))_{j=1}^7$ is $(i_1, i_2, i_1, i_3, i_1, i_2, i_1)$.
		See Figure~\ref{fig:profiles} for a visual depiction of such a binary $3$-profile.

		Similarly to the case $k=2$, given a set of queries $Q \subseteq [n]$, we shall be interested in the collection of $h$-profiles captured by $Q$. 
		\begin{definition}[Binary $h$-profiles captured]
			Let $n \geq k \geq 2$ be powers of $2$ where $k = 2^h$.
			For any $Q \subseteq [n]$, we denote the set of all $h$-profiles captured by $Q$ by 
			\[
				\bprof_h(Q) = \left\{(i_1, \ldots, i_{h})  :\  
					\begin{array}{l} \text{there exist $x_1, \dots,x_k \in Q$ where 	}  x_1 < \dots < x_k \\
						\text{and $(x_1, \dots, x_k)$ has $h$-profile of type $(i_1, \ldots, i_{h})$} 
					\end{array} 
				\right\}.
			\]
		\end{definition}
		The next lemma is one of the main ingredients of our proof, generalizing Lemma~\ref{lem:binary_dist_prof}. It shows that a set $Q$ of queries cannot capture $|Q|$ or more different $h$-profiles.

		\begin{lemma}
			\label{lem:tree_prof_lower_bound}
			Let $h, n \in \N$ where $n \geq 2^h$ is a power of $2$. For any $\emptyset \neq Q \subseteq [n]$, we have $|\bprof_h(Q)| \leq |Q|-1$.
		\end{lemma}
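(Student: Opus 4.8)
The plan is to prove Lemma~\ref{lem:tree_prof_lower_bound} by induction on $h$, following exactly the structure sketched in the introduction and generalizing the argument of Lemma~\ref{lem:binary_dist_prof}. The base case $h=1$ is precisely Lemma~\ref{lem:binary_dist_prof}. For the inductive step, assume the statement holds for $h-1$ and fix $\emptyset \neq Q \subseteq [n]$.

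First I would build a filtration of $Q$ by most-significant-bit resolution: set $B_{\log_2 n + 1} = \emptyset$ and, given $B_{i+1}$, let $B_i \supseteq B_{i+1}$ be a maximal subset of $Q$ containing $B_{i+1}$ such that no two distinct $x,y \in B_i$ have $M(x,y) < i$; this yields a chain $\emptyset = B_{\log_2 n + 1} \subseteq B_{\log_2 n} \subseteq \dots \subseteq B_1 = Q$. For each $j \in [\log_2 n]$ define
\[
	N_j = \{ (i_2, \dots, i_h) : j < i_2 < \dots < i_h \leq \log_2 n \text{ and } (j, i_2, \dots, i_h) \in \bprof_h(Q) \},
\]
so that $\bprof_h(Q)$ is the disjoint union over $j$ of $\{(j, i_2, \dots, i_h) : (i_2, \dots, i_h) \in N_j\}$, and $Q$ is the disjoint union over $j$ of $B_j \setminus B_{j+1}$.

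The heart of the argument is the claim $N_j \subseteq \bprof_{h-1}(B_j \setminus B_{j+1})$, which I would establish in two moves. (a) \emph{$h$-profiles of $Q$ are realized inside $B_j$.} Given a tuple $(x_1, \dots, x_k)$ in $Q$ with $h$-profile $(j, i_2, \dots, i_h)$, maximality of $B_j$ gives, for each $\ell$, a $y_\ell \in B_j$ with $x_\ell = y_\ell$ or $M(x_\ell, y_\ell) < j$; since every gap $M(x_\ell, x_{\ell+1})$ in this profile is $\geq j$, replacing $x_\ell$ by $y_\ell$ does not change any $M(\cdot,\cdot)$ value, so $(y_1, \dots, y_k)$ (after sorting, which is forced since the ordering is determined by the high bits) has the same $h$-profile inside $B_j$. (b) \emph{Passing to $B_j \setminus B_{j+1}$ halves the tuple.} If $(y_1, \dots, y_k) \in B_j$ has $h$-profile $(j, i_2, \dots, i_h)$, then for each $t \in [k/2]$ we have $M(y_{2t-1}, y_{2t}) = i_{M(2t-2,2t-1)} = i_1 = j$ (using $M(2t-2, 2t-1) = 1$), so at most one of $y_{2t-1}, y_{2t}$ lies in $B_{j+1}$, and we may pick $z_t \in \{y_{2t-1}, y_{2t}\} \setminus B_{j+1}$; the tuple $(z_1, \dots, z_{k/2})$ lies in $B_j \setminus B_{j+1}$ and one checks that its consecutive-gap sequence is exactly the $(h-1)$-profile $(i_2, \dots, i_h)$ (the index bookkeeping: $M(t-1,t)$ for the $z$-tuple corresponds to $M(2t-2, 2t-1)$-type positions of the original, which is where the larger bits sit). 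This gives $(i_2, \dots, i_h) \in \bprof_{h-1}(B_j \setminus B_{j+1})$.

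Finally, combining the claim with the induction hypothesis applied to each $B_j \setminus B_{j+1}$ (using $|\bprof_{h-1}(S)| \leq |S|-1$ when $S \neq \emptyset$, and the trivial bound $|N_j| = 0 \leq |B_j \setminus B_{j+1}|$ otherwise), I get
\[
	|\bprof_h(Q)| = \sum_{j=1}^{\log_2 n} |N_j| \leq \sum_{j=1}^{\log_2 n} |B_j \setminus B_{j+1}| - \#\{j : N_j \neq \emptyset\} = |Q| - \#\{j : N_j \neq \emptyset\} \leq |Q| - 1,
\]
where the last step uses that $\bprof_h(Q) \neq \emptyset$ forces $N_j \neq \emptyset$ for at least one $j$ (if $\bprof_h(Q) = \emptyset$ the bound $0 \leq |Q|-1$ is immediate since $Q \neq \emptyset$). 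The main obstacle I expect is getting the index bookkeeping in step (b) airtight — verifying that the consecutive-difference pattern of $(z_1, \dots, z_{k/2})$ is genuinely the type-$(i_2, \dots, i_h)$ $(h-1)$-profile and that the $z_t$'s come out in increasing order — since this is where the self-similar "every other gap equals $i_1$" structure of binary $h$-profiles is really used; the rest is a clean mirror of the $k=2$ proof.
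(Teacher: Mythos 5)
Your proposal is correct and follows essentially the same route as the paper: induction on $h$, the same maximal-subset filtration $B_{\log_2 n +1}\subseteq\dots\subseteq B_1=Q$, the same two-step argument that $N_j\subseteq\bprof_{h-1}(B_j\setminus B_{j+1})$, and the same final counting. (One small slip in your informal parenthetical in step (b): the gap $M(z_t,z_{t+1})$ corresponds to $M(y_{2t},y_{2t+1})$, whose profile index is $i_{M(2t-1,2t)}=i_{1+M(t-1,t)}$, not ``$M(2t-2,2t-1)$-type positions''; but you correctly flag this bookkeeping as the spot to check, and the check goes through.)
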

		\begin{proof}
			We proceed by induction on $h$. The case $h=1$ was settled in Lemma~\ref{lem:binary_dist_prof}. Suppose now that $h > 1$, and define 
			$$\emptyset = B_{\log n + 1} \subseteq B_{\log n} \subseteq \ldots \subseteq B_{1} = Q$$
			 as follows.  Set $B_{\log n + 1} = \emptyset$, and given $B_{i+1}$, define the set $B_i \supseteq B_{i+1}$ as an arbitrary maximal subset of $Q$ containing $B_{i+1}$ which does not have two elements with $M(x,y) < i$. 
			
			Additionally, for each $j \in [\log_2 n]$, define
			\[ N_j = \left\{ (i_2, \dots, i_h) : 1 \leq j < i_2 \dots < i_h \leq \log_2 n \text{ and } (j, i_2, \dots, i_h) \in \bprof_h(Q) \right\}. \]
			
			\begin{claim}
				Let $j < i_2 < \ldots < i_h \in [\log n]$, and suppose that $(j, i_2, \ldots, i_h) \in \bprof_h(Q)$. Then $(j, i_2, \ldots, i_h) \in \bprof_h(B_j)$.
			\end{claim}
			\begin{proof}
				Suppose that a tuple $(x_1, \ldots, x_k)$ with $x_1 < \dots < x_k \in Q$ has $h$-profile $(j, i_2, \ldots, i_{h})$. By the maximality of $B_j$, we know that for every $1 \leq \ell \leq k$ there exists $y_\ell \in B_j$ such that either $x_{\ell} = y_{\ell}$ or $M(x_{\ell}, y_{\ell}) < j$. Indeed, if this were not the case, then $B'_j \eqdef B_j \cup \{x_{\ell}\}$ would be a set that strictly contains $B_{j}$ and does contain two elements $x \neq y$ with $M(x,y) = j$, a contradiction to the maximality of $B_{j}$. 
				By definition of a profile, we conclude that $\{y_1, \ldots, y_{k}\} \subseteq B_j$ has $h$-profile $(j, i_2, \ldots, i_{h})$.
			\end{proof}
			\begin{claim}
				For any $j \in [\log n]$, we have $N_j \subseteq \bprof_{h-1}(B_j \setminus B_{j+1})$.
			\end{claim}
			\begin{proof}
				Suppose that $(i_2, \ldots, i_h) \in N_j$, then $(j, i_2, \ldots, i_h) \in \bprof_h(Q)$. By the previous lemma, we know that $(j, i_2, \ldots, i_h) \in \bprof_h(B_j)$. 
				Therefore, there exists a tuple $(y_1, \dots, y_k)$ where $y_1 < \ldots< y_{k} \in B_j$, that has $h$-profile of type $(j, i_2, \ldots, i_{h})$. 
				
				For any $t \in k/2$, it holds that $M(y_{2t-1}, y_{2t}) = j$. Therefore, at most one of $y_{2t-1}, y_{2t}$ is in $B_{j+1}$, and hence, for any such $t$ there exists $z_t \in \{y_{2t-1}, y_{2t}\} \setminus B_{j+1} \subseteq B_{j} \setminus B_{j+1}$. Consider the tuple $(z_1, \ldots, z_{k/2})$, whose elements are contained in $B_j \setminus B_{j+1}$. It follows from our choice of $z_t$ that $M(z_t, z_{t+1}) = M(y_{2t}, y_{2t+2})$ for any $t \in [k/2]$, from which we conclude that $(z_1, \ldots, z_{k/2})$
				has $(h-1)$-profile $(i_2, \ldots, i_{h})$. In other words, $(i_2, \ldots, i_h) \in \bprof_h(B_j \setminus B_{j+1})$, as desired.
			\end{proof}
			
			We are now ready to finish the proof of Lemma~\ref{lem:tree_prof_lower_bound}. Observe that $\bprof_h(Q)$ and $Q$ can be written as the following disjoint unions:
		\[
			\bprof_h(Q) = \bigcup_{j=1}^{\log_2 n} \{(j, i_2, \ldots, i_h) : (i_2, \ldots, i_h) \in N_j \}
			\qquad \text{and}  \qquad 
			Q = \bigcup_{j=1}^{\log_2 n} (B_j \setminus B_{j+1}).
		\]
			It follows from the last claim and the induction assumption that 
			\begin{equation}
				\label{eqn:N_j}
				|N_j| \leq |\bprof_{h-1}(B_j \setminus B_{j+1})| \le |B_j \setminus B_{j+1}|,
			\end{equation}
			where for $j$ with $N_j \neq \emptyset$ there is a strict inequality.
			Now,
			if $N_j$ is empty for all $j$ then, trivially, $|\bprof_h(Q)|=  0 \leq |Q|-1$. Otherwise, there exists some non-empty $N_j$, for which \eqref{eqn:N_j} yields a strict inequality, and we get
			\[
			|Q| = \sum_{j=1}^{\log_2 n} |B_j \setminus B_{j+1}| >\sum_{j=1}^{\log_2 n} |N_j| = |\bprof_h(Q)|, 
			\]
			establishing the proof of the Lemma~\ref{lem:tree_prof_lower_bound}.
		\end{proof} 

	\paragraph{The construction.}
		For any $i_1 < i_2 < \ldots < i_h \in [\log n]$, we define $f_{i_1, \ldots, i_h} \colon [n] \to [n]$ as
		\[ 
			f_{i_1, \ldots, i_{h}} \eqdef f^{\downarrow} \circ F_{i_h} \circ \ldots \circ F_{i_1},
		\]
		where, as before, $\circ$ denotes function composition. In other words, for any $x \in [n]$ we have $f_{i_1, \ldots, i_h}(x) = f^{\downarrow}(F_{i_{h}}(F_{i_{h-1}}(\dots(F_{i_1}(x)\dots))))$. Note that $f_{i_1, \ldots, i_h}$ is indeed a permutation, as a composition of permutations.
		(See Figure~\ref{fig:construct-recurse}, which visually describes the construction of $f_{i_1, \ldots, i_h}$ recursively, as a composition of $F_{i_h}$ with $f_{i_1, \ldots, i_{h-1}}$.)
		We take $\calD_{n,k}$ to be the uniform distribution over all sequences of the form $f_{i_1, \ldots, i_h}$ with $i_1 < i_2 < \ldots < i_k$. The size of the support of $\calD_{n,k}$ is $\binom{\log_2 n}{h} = \binom{\log_2 n}{\log_2 k}$.

	\paragraph{Structural properties of the construction.}
		Recall that our lower bound distribution $\calD_{n,k}$ is supported on the family of permutations $f_{i_1, \ldots, i_{h}}$, where $i_1 < \ldots < i_h \in [\log n]$, described above.
		We now turn to show that these $f_{i_1, \ldots, i_h}$ satisfy two desirable properties. First, to capture a $(12\dots k)$-copy in $(f_{i_1, \ldots, i_h})$, our set of queries $Q$ must satisfy $(i_1, \ldots, i_h) \in \bprof_h(Q)$ (Lemma~\ref{lem:hierarchical_copies}). And second, each such $f_{i_1, \ldots, i_k}$ is $(1/k)$-far from $(12\dots k)$-free (Lemma~\ref{lem:lower_bound_general_construction}). 

		\begin{lemma}
			\label{lem:hierarchical_copies}
			Let $(x_1, \ldots, x_k) \in [n]^k$ be a $k$-tuple where $x_1 < \ldots < x_k$, and let $f = f_{i_1, \ldots, i_{h}}$ be defined as above. Then $f(x_1) < f(x_2) < \ldots < f(x_k)$ (i.e., $(x_1, \ldots, x_k)$ is a $(12\dots k)$-copy with respect to $f_{i_1, \ldots, i_h}$) if and only if  $(x_1, \ldots, x_k)$ has binary $h$-profile of type $(i_1, i_2 \ldots, i_{h})$. Furthermore, $f_{i_1, \ldots, i_{h}}$ does not contain increasing subsequences of length $k+1$ or more.
		\end{lemma}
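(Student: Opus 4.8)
The plan is to prove the statement by induction on $h$, exploiting the recursive structure $f_{i_1,\dots,i_h} = f_{i_1,\dots,i_{h-1}} \circ F_{i_h}$ (up to interpreting the composition correctly: $F_{i_h}$ is the outermost bit-flip among the $F$'s applied to $x$, but the first one applied after we strip off the top bit). The base case $h=1$ is exactly Lemma~\ref{lem:monotone_copies}, which characterizes the $(12)$-copies of $f_i = f^{\downarrow}\circ F_i$ as the pairs $x<y$ with $M(x,y)=i$; it also trivially gives the ``no increasing subsequence of length $3$'' clause, since any three indices $x<y<z$ cannot pairwise have most-significant-differing-bit equal to the same $i$ (if $M(x,y)=M(y,z)=i$ then $x$ and $z$ agree on bit $i$, so $M(x,z)\ne i$ — here we'd use that $x,z$ lie on opposite... actually $x$ has bit $0$, $y$ has bit $1$ at position $i$, and $y$ has bit $0$, $z$ has bit $1$, contradiction). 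So three indices can never form a $(123)$-copy.

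For the inductive step, I would first set up the decomposition of $[n]$ into the two halves $\{x : b^x_{\log n}=0\}$ and $\{x : b^x_{\log n}=1\}$ — or more precisely, group indices by the value of the bit at position $i_h$, since $F_{i_h}$ is the last flip before $f^{\downarrow}$. The key observation is: within a block of indices sharing all bits above position $i_h$ and sharing the bit at $i_h$, the function $f_{i_1,\dots,i_h}$ restricted to that block behaves (after an affine reindexing of the domain to $[2^{i_h - 1}]$, and an order-reversing or order-preserving relabeling of the range) like $f_{i_1,\dots,i_{h-1}}$ on $[2^{i_h-1}]$; while across the two sub-blocks at a common ``grandparent'' level (differing exactly at bit $i_h$), the flip $F_{i_h}$ followed by $f^{\downarrow}$ makes the lower sub-block's values entirely below... wait — entirely above — the upper sub-block's values, exactly mirroring Lemma~\ref{lem:monotone_copies}. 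Concretely, a $(12\dots k)$-copy $(x_1,\dots,x_k)$ must split as $(x_1,\dots,x_{k/2})$ lying in one grandparent-block with bit $i_h$ equal to $0$ and $(x_{k/2+1},\dots,x_k)$ in the sibling block with bit $i_h$ equal to $1$ (and these two blocks must be the same grandparent pair, forcing $M(x_{k/2},x_{k/2+1})=i_h$), and each half must itself be a $(12\dots(k/2))$-copy of the (reindexed) $f_{i_1,\dots,i_{h-1}}$. Applying the induction hypothesis to each half gives that each half has an $(h-1)$-profile of type $(i_1,\dots,i_{h-1})$ in the restricted coordinates, which — once we observe that $M$ computed inside a sub-block of length $2^{i_h-1}$ agrees with $M$ on all of $[n]$ for indices within that block, and that $M(j-1,j)=h$ precisely when $j=k/2$ (this is the defining recursion of the profile, since $k/2$ is the unique index where the binary carry reaches the top bit $h$) — assembles exactly into the statement that $(x_1,\dots,x_k)$ has $h$-profile of type $(i_1,\dots,i_h)$. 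The ``furthermore'' clause follows by the same splitting: an increasing subsequence of $f_{i_1,\dots,i_h}$ can use indices from at most two grandparent-sibling sub-blocks (one with bit $i_h = 0$, one with bit $i_h = 1$, and they must be siblings), so its length is at most twice the maximum increasing-subsequence length of $f_{i_1,\dots,i_{h-1}}$, which by induction is $k$.

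The main obstacle I anticipate is bookkeeping the reindexing cleanly: one must check carefully that (a) $F_{i_h}$ composed with $f^{\downarrow}$ really does send the bit-$i_h$-equals-$0$ sub-block of each grandparent pair entirely above the bit-$i_h$-equals-$1$ sub-block (this is where the order reversal from $f^{\downarrow}$ is essential and where signs are easy to flip), (b) within a fixed sub-block the composite map is an order-isomorphic copy of $f_{i_1,\dots,i_{h-1}}$ on $[2^{i_h-1}]$ — here one needs that $F_{i_1},\dots,F_{i_{h-1}}$ act only on bits below $i_h$ and hence ``factor through'' the restriction, and that $f^{\downarrow}$ on $[n]$ restricted to a contiguous value-block is again a decreasing permutation — and (c) the most-significant-bit function $M$ is stable under this reindexing for pairs lying in the same sub-block, while $M$ equals exactly $i_h$ for the one cross-pair $(x_{k/2}, x_{k/2+1})$. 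Once these three compatibility facts are stated as small observations (each a one-line consequence of the definitions of $B_n$, $F_i$, $f^{\downarrow}$, and $M$), the induction goes through essentially formally, in direct parallel to the proof of Lemma~\ref{lem:tree_prof_lower_bound} whose recursion over $B_{\log n+1}\subseteq\cdots\subseteq B_1$ mirrors this block decomposition.
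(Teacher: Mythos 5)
Your block-decomposition route is sound and would establish the lemma, but it differs from the paper's argument. The paper first proves a separate pairwise-comparison claim: for $x < y$, one has $f_{i_1,\ldots,i_h}(x) < f_{i_1,\ldots,i_h}(y)$ if and only if $M(x,y)\in\{i_1,\ldots,i_h\}$, a statement that holds globally on $[n]$ with no reindexing. With this in hand, the ``only if'' direction looks at $r=\argmax_j M(x_j,x_{j+1})$, shows $r$ is unique (no $x<y<z$ can have $M(x,y)=M(y,z)$), and then uses the ``no increasing subsequence of length $k/2+1$'' clause of the inductive hypothesis for $f_{i_1,\ldots,i_{h-1}}$ to force both $M(x_r,x_{r+1})=i_h$ and $r=k/2$; the IH then applies to the two halves directly, and the ``furthermore'' clause follows by applying the two claims to both length-$k$ windows of a putative length-$(k+1)$ copy. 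Your version reaches the same split and uses the IH the same way to pin it at $k/2$, but in exchange for the geometric picture you incur exactly the overhead you anticipate in your item (b): the IH must be stated for variable power-of-two domain sizes and you must verify the order-isomorphism of the restriction to a sub-block. The paper's pairwise-comparison claim is the device that makes the reindexing unnecessary.

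One concrete slip to fix: your mid-sentence self-correction (``entirely below... wait --- entirely above'') points the wrong way. After the bit-flips $F_{i_1},\ldots,F_{i_{h-1}}$ (which only touch bits below $i_h$), applying $F_{i_h}$ swaps the two sibling sub-blocks of a grandparent block, and $f^{\downarrow}$ reverses again; the net effect is that indices with bit $i_h=0$ (the lower sub-block) receive the \emph{smaller} function values, so your original ``below'' was right. Your very next sentence, which places $(x_1,\ldots,x_{k/2})$ in the bit-$i_h=0$ sub-block and $(x_{k/2+1},\ldots,x_k)$ in the bit-$i_h=1$ sub-block, is only consistent with the ``below'' version, so the slip does not propagate into the rest of the argument --- but as written the two sentences contradict each other.
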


		\begin{proof}
			The proof is by induction on $h$, with the base case $h=1$ covered by Lemma~\ref{lem:monotone_copies}; in particular, it follows from Lemma~\ref{lem:monotone_copies} that $f_{i}$ has no increasing subsequence of length $3$, since there exist no $x < y < z \in [n]$ with $M(x,y) = M(y,z) = i$. 
			
			For the inductive step, we need the following claim, which generalizes Lemma~\ref{lem:monotone_copies}. 
			
			\begin{claim}
				\label{claim:lower_bigger_general}
				A pair $x < y \in [n]$ satisfies $f_{i_1, \ldots, i_h}(x) < f_{i_1, \ldots, i_h}(y)$ if and only if $M(x,y) \in \{i_1, \ldots, i_h\}$.
			\end{claim}
			\begin{proof}
				Let $F_{i_1, \ldots, i_h} = F_{i_h} \circ \ldots \circ F_{i_1}$. 	
				Since $f_{i_1, \ldots, i_h} = f^{\downarrow} \circ F_{i_1, \ldots, i_h}$, it suffices to show that $F_{i_1, \ldots, i_h}(x) > F_{i_1, \ldots, i_h}(y)$ if any only if $M(x, y) \in \{i_1, \ldots, i_h\}$. To do so, we prove the following two statements.
				\begin{itemize}
					\item For any $x < y \in [n]$, $F_i(x) > F_i(y)$ if and only if $M(x,y) = i$.
					\item For any $x < y \in [n]$, $M(F_i(x), F_i(y)) = M(x,y)$. 
				\end{itemize}
				Indeed, using these two statements, the proof easily follows by induction: the value of $M(x,y)$ never changes regardless of which bit-flips we simultaneously apply to $x$ and $y$. Now, applying any of the bit-flips $F_i$ to $x$ and $y$, where $i \neq M(x,y)$, does not change the relative order between them, while applying $F_{M(x,y)}$ does change their relative order. This means that a change of relative order occurs if and only if $M(x,y) \in \{i_1, \ldots, i_h\}$, which settles the claim.
				
				The proof of the first statement was essentially given, word for word, in the proof of Lemma~\ref{lem:monotone_copies}. The second statement follows 
				by a simple case analysis of the cases where $i$ is bigger than, equal to, or smaller than $M(x,y)$, showing that in any of these cases, $M(F_i(x), F_i(y)) = M(x,y)$.
			\end{proof}
			Suppose now that $(x_1, \ldots, x_k) \in [n]^k$ is a tuple with $x_1 < \ldots < x_k$ and a binary $h$-profile of type $(i_1, \ldots, i_h)$ is a $(12 \dots k)$-copy in $f_{i_1, \ldots, i_h}$. By definition of a binary $h$-profile, we have that $M(x_j, x_{j+1}) \in \{i_1, \ldots, i_h\}$ for any $j \in [k-1]$, which, by the claim, implies that $f_{i_1, \ldots, i_h}(x_j) < f_{i_1, \ldots, i_h}(x_{j+1})$. It thus follows that $(x_1, \ldots, x_k)$ is a $(12 \dots k)$-copy in $f_{i_1, \ldots, i_h}$, as desired.
			
			Conversely, suppose that a tuple $(x_1, \ldots, x_k) \in [n]^k$ with $x_1 < \ldots < x_k$ is a $(12 \dots k)$-copy in $f_{i_1, \ldots, i_k}$. 
			We need to show that $(x_1, \ldots, x_{k})$ has binary $h$-profile of type $(i_1, \ldots, i_h)$, that is, $M(x_j, x_{j+1}) = i_{M(j-1, j)}$ for every $j \in [k-1]$.
			Define $r = \argmax_j \{M(x_j, x_{j+1})\}$, and note that $r$ is unique; otherwise, we would have $x < y < z \in [n]$ so that $M(x,y) = M(y,z)$, a contradiction. 
			
			\begin{claim}
				$M(x_r, x_{r+1}) = i_h$.
			\end{claim}
			\begin{proof}
				By Claim~\ref{claim:lower_bigger_general}, we know that $M(x_r, x_{r+1}) \in \{i_1, \ldots, i_h\}$. Suppose to the contrary that $M(x_r, x_{r+1}) \leq i_{h-1}$. Then, $M(x_j, x_{j+1}) \leq M(x_r, x_{r+1}) \leq i_{h-1}$ for every $j \in [k-1]$, and by Claim~\ref{claim:lower_bigger_general}, for any $j \in [k-1]$ we have $f_{i_1, \ldots, i_{h-1}}(x_j) \leq f_{i_1, \ldots, i_{h-1}}(x_{j+1})$, that is, $(x_1, x_2, \ldots, x_k)$ is a $(12 \dots k)$-copy in $f_{i_1, \ldots, i_{h-1}}$. This contradicts the last part of the inductive hypothesis.
			\end{proof}
			
			\begin{claim}
				$r = k/2$.
			\end{claim}
			\begin{proof}
				Without loss of generality, suppose to the contrary that $r > k/2$ (the case where $r< k/2$ is symmetric). As the tuple $(x_1, \ldots, x_r)$ is an increasing subsequence for $f_{i_1, \ldots, i_h}$, we have $M(x_j, x_{j+1}) \in \{i_1, \ldots, i_h\}$ for any $j \in [r-1]$. By the maximality and uniqueness of $r$, $M(x_j, x_{j+1}) < i_h$ for any $j \in [r-1]$. Thus, it follows from Claim~\ref{claim:lower_bigger_general} that $(x_1, \ldots, x_r)$ is a $(12\dots r)$-copy in $f_{i_1, \ldots, i_{h-1}}$, contradicting the last part of the inductive hypothesis.
			\end{proof}
			It thus follows from the two claims that $M(x_{k/2}, x_{k/2+1}) = i_h$. Since $M(x_j, x_{j+1}) \in \{i_1, \ldots, i_{h-1}\}$ for any $j \in [k-1] \setminus \{k/2\}$, we conclude, again from Claim~\ref{claim:lower_bigger_general}, that $(x_1, \ldots, x_{k/2})$ and $(x_{k/2+1}, \ldots, x_k)$ both induce length-$(k/2)$ increasing subsequences in $f_{i_1, \ldots, i_{h-1}}$. By the inductive hypothesis, they both have binary $(h-1)$-profile $(i_1, \ldots, i_{h-1})$. Combined with the last two claims, we conclude that $(x_1, \ldots, x_k)$ has binary $h$-profile $(i_1, \ldots, i_h)$, as desired.
			
			It remains to verify that $f_{i_1, \ldots, i_h}$ does not contain an increasing subsequence of length $k+1$. If, to the contrary, it does contain one, induced on some tuple $(x_1, \ldots, x_{k+1}) \in [n]^{k+1}$ where $x_1 < \ldots < x_{k+1}$, then, applying the last two claims to the length-$k$ two tuples $(x_1, \ldots, x_k)$ and $(x_2, \ldots, x_{k+1})$, we conclude that $M(x_{k/2}, x_{k/2+1}) = M(x_{k/2+1}, x_{k/2 +2}) = i_h$. However, as discussed above, there cannot exist $x<y<z \in [n]$ with $M(x,y) = M(y,z)$ -- a contradiction.
		\end{proof}

		It remains to prove that each $f_{i_1, \ldots, i_h}$ is indeed $(1/k)$-far from $(12\dots k)$-free. After we spent quite some effort to characterize \emph{all} $(12 \dots k)$-copies in $f_{i_1, \ldots, i_h}$, this upcoming task is much simpler.
		\begin{lemma}
			\label{lem:lower_bound_general_construction}
			Let $n \geq k \geq 2$ be powers of two and write $k = 2^h $.
			The sequence $f_{i_1, \ldots, i_h} \colon [n] \to [n]$, defined above, contains $n / k$ disjoint $(12 \dots k)$-copies.
		\end{lemma}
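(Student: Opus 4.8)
The plan is to exhibit $n/k$ explicit disjoint $(12\dots k)$-copies in $f = f_{i_1,\dots,i_h}$ using the structural characterization already in hand. By Lemma~\ref{lem:hierarchical_copies}, a $k$-tuple $x_1 < \dots < x_k$ is a $(12\dots k)$-copy in $f$ if and only if it has binary $h$-profile of type $(i_1,\dots,i_h)$, i.e.\ $M(x_j,x_{j+1}) = i_{M(j-1,j)}$ for every $j \in [k-1]$. So it suffices to produce $n/k$ disjoint $k$-tuples, each having this $h$-profile. First I would partition $[n]$ into $n/k = n/2^h$ consecutive blocks of length $k = 2^h$, where each block consists of all indices sharing the same top $\log_2 n - h$ bits that are \emph{not} among positions $i_1,\dots,i_h$\,---\,more precisely, group the indices according to the values of all bits in positions outside $\{i_1,\dots,i_h\}$ together with the requirement that such a group is closed under flipping any subset of the bits in positions $i_1,\dots,i_h$. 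Each such group $P$ has exactly $2^h = k$ elements.

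Within a fixed group $P$, I would define a canonical ordering of its $k$ elements as follows: list them as $x_1 < x_2 < \dots < x_k$ in increasing order. The key claim is that this sorted tuple already has binary $h$-profile of type $(i_1,\dots,i_h)$. This is essentially the same combinatorial fact underlying the recursive construction (cf.\ Figure~\ref{fig:profiles}): the elements of $P$ differ only in the bit positions $i_1 < \dots < i_h$, so $P$ in sorted order is an arithmetic-progression-like pattern in which the most significant bit that flips between consecutive elements follows the ``ruler sequence'' on $h$ levels\,---\,$i_h$ flips exactly once (at the midpoint $j = k/2$), $i_{h-1}$ flips at the quarter points, and in general $i_{M(j-1,j)}$ is the position that changes between $x_j$ and $x_{j+1}$. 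This is exactly Definition~\ref{def:h-profile}. I would prove this by induction on $h$: the bottom half $x_1,\dots,x_{k/2}$ and top half $x_{k/2+1},\dots,x_k$ of $P$ are each groups of size $2^{h-1}$ differing only in positions $i_1,\dots,i_{h-1}$, so by induction each has $(h-1)$-profile $(i_1,\dots,i_{h-1})$, while $M(x_{k/2}, x_{k/2+1}) = i_h$ because exactly the $i_h$-bit toggles at the midpoint.

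Then, collecting one such tuple per group $P$, we obtain $n/k$ tuples; they are pairwise disjoint because the groups $P$ partition $[n]$. Each tuple is a $(12\dots k)$-copy in $f$ by Lemma~\ref{lem:hierarchical_copies}. This gives the collection $\calC$ of $n/k$ disjoint $(12\dots k)$-copies, and hence $f_{i_1,\dots,i_h}$ is $(1/k)$-far from $(12\dots k)$-free: modifying any $(12\dots k)$-free function to agree with $f$ would require changing at least one coordinate in each of the $n/k$ disjoint copies, a total of at least $n/k = (1/k)n$ coordinates. The only slightly delicate step is the induction establishing that the sorted order of $P$ realizes precisely the type-$(i_1,\dots,i_h)$ profile rather than some permutation of the positions; but this follows cleanly from the observation that, among indices agreeing outside $\{i_1,\dots,i_h\}$, numerical order is governed first by bit $i_h$, then $i_{h-1}$, and so on, so $M(\cdot,\cdot)$ of consecutive elements is exactly the standard binary carry (ruler) sequence on the labels $i_1 < \dots < i_h$. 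I expect no serious obstacle here; the heavy lifting was already done in Lemma~\ref{lem:hierarchical_copies}, and what remains is the routine bookkeeping of the recursive block structure.
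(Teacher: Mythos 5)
Your proof is correct and uses the same key idea as the paper: partition $[n]$ into the $n/k$ equivalence classes of size $k = 2^h$ consisting of indices agreeing on all bits outside $\{i_1,\dots,i_h\}$, and observe that the sorted tuple in each class is a $(12\dots k)$-copy. The only difference is that you verify the full ruler-sequence $h$-profile and invoke the full characterization Lemma~\ref{lem:hierarchical_copies}, whereas the paper gets away with less: it only checks the weaker fact that $M(x_j, x_{j+1}) \in \{i_1,\dots,i_h\}$ for consecutive sorted elements (immediate since they agree outside that set of positions) and applies the simpler one-directional Claim~\ref{claim:lower_bigger_general}.
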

		\begin{proof}
			Fix $i_1 < \ldots < i_h$ as in the statement of the lemma. We say that $x,y \in [n]$ with binary representations $B_n(x) = (b^x_1, \ldots, b^x_{\log n})$ and $B_n(y) = (b^y_1, \ldots, b^y_{\log n})$ are \emph{$(i_1, \ldots, i_h)$-equivalent} if $b^x_i = b^y_i$ for any $i \in [\log n] \setminus \{i_1, \ldots, i_{h}\}$. Clearly, this is an equivalence relation, partitioning $[n]$ into $n/k$ equivalence classes, each of size exactly $k = 2^h$. Moreover, it is straightforward to verify that the elements $x_1 < x_2 < \ldots < x_k$ of any equivalence class satisfy $M(x_j, x_{j+1}) \in \{i_1, \ldots, i_h\}$ for any $j \in [k-1]$, and thus, by Claim~\ref{claim:lower_bigger_general}, $(x_1, \ldots, x_k)$ constitutes a $(12 \dots k)$-copy in $f_{i_1, \ldots, i_k}$.  
		\end{proof}

	\paragraph{Proof of Theorem~\ref{thm:lower_bound}.}
		It now remains to connect all the dots for the proof of Theorem~\ref{thm:lower_bound}. 
	\begin{proof}
		Fix $0 < p < 1$, let $n \geq k$ be powers of $2$, and write $k = 2^h$.
		As before, we follow Yao's minimax principle \cite{Y77}, letting $\calD_{n,k}$ be the uniform distribution over all $\binom{\log_2 n}{h} = \binom{\log_2 n}{\log_2 k}$ sequences (in fact permutations) $f_{i_1, \ldots, i_h} \colon [n] \to [n]$, where $i_1 < \ldots < i_h \in [\log n]$. Recall that, by Lemma~\ref{lem:lower_bound_general_construction}, this distribution is supported on sequences that are $(1/k)$-far from $(12\dots k)$-free. 

		It suffices to show that, for $\boldf \sim \calD_{n,k}$, the probability for any subset $Q \subseteq [n]$ of size at most $p \binom{\log_2 n}{h}$ to capture a $(12 \dots k)$-copy in $\boldf$ is less than $p$.
		Indeed, by Lemma~\ref{lem:hierarchical_copies}, $Q$ captures a copy in $f_{i_1, \ldots, i_h}$ if and only if $(i_1, \ldots, i_h) \in \bprof_h(Q)$, so the success probability for any given $Q$ is exactly $|\bprof_h(Q)| / \binom{\log_2 n}{h} < |Q| / \binom{\log_2 n}{h} \leq p$ for any $Q \subseteq [n]$ with $|Q| \leq p  \binom{\log_2 n}{h}$, where the first inequality follows from Lemma~\ref{lem:tree_prof_lower_bound}. The proof of Theorem~\ref{thm:lower_bound} follows.
	\end{proof}
\begin{flushleft}
\bibliographystyle{alpha}
\bibliography{waingarten}
\end{flushleft}

\end{document}